\newtheorem{theorem}{Theorem}
\newtheorem{lemma}[theorem]{Lemma}
\newtheorem{proposition}[theorem]{Proposition}
\newdefinition{definition}{Definition}
\newdefinition{assumption}{Assumption}
\newdefinition{example}{Example}
\newdefinition{remark}{Remark}
\newproof{proof}{Proof}
\newcommand{\overbar}[1]{\mkern 1.5mu\overline{\mkern-1.5mu#1\mkern-1.5mu}\mkern 1.5mu}
\newcommand{\Id}{\mathrm{Id}}
\newcommand{\inte}{\mathrm{Int}}
\newcommand{\diag}{\mathrm{diag}}
\newcommand{\rzmk}{\mathsf{r}}
\newcommand{\ksvk}{\mathsf{k}}
\newcommand{\cclf}{\mathsf{c}}
\newcommand{\lyaR}{\mathsf{lr}}
\newcommand{\lyaK}{\mathsf{lk}}
\newcommand{\zbrf}{\mathsf{rz}}
\newcommand{\rbrf}{\mathsf{rr}}
\newcommand{\zbkf}{\mathsf{kz}}
\newcommand{\rbkf}{\mathsf{kr}}
\newcommand{\dly}{\text{d}}
\newcommand{\sgn}{\text{psgn}}
\newcommand{\sign}{\text{sgn}}
\newcommand{\master}{\text{m}}
\newcommand{\slave}{\text{s}}
\definecolor{bluencs}{rgb}{0.0, 0.53, 0.74}
\begin{document}

\begin{frontmatter}

\title{Razumikhin and Krasovskii Approaches for Safe Stabilization\tnoteref{label0}}
\tnotetext[label0]{This work was supported by the H2020 ERC Consolidator Grants L2C (864017) and LEAFHOUND (864720), the CHIST-ERA 2018 project DRUID-NET, the Walloon Region and the Innoviris Foundation, the Swedish Research Council (VR) and the Knut och Alice Wallenberg Foundation (KAW).}

\author[label1]{Wei Ren\corref{cor1}}\ead{w.ren@uclouvain.be}
\author[label1]{Rapha\"el M. Jungers}\ead{raphael.jungers@uclouvain.be}
\author[label2]{Dimos V. Dimarogonas} \ead{dimos@kth.se}

\cortext[cor1]{Corresponding author.}
\address[label1]{ICTEAM institute, Universit\'{e} catholique de Louvain, 1348 Louvain-la-Neuve, Belgium.}
\address[label2]{Division of Decision and Control Systems, EECS, KTH Royal Institute of Technology, SE-10044, Stockholm, Sweden.}

\begin{abstract}
This paper studies the stabilization and safety problems of nonlinear time-delay systems. Following both Razumikhin and Krasovskii approaches, we propose novel control Lyapunov functions/functionals for the stabilization problem and novel control barrier functions/functionals for the safety problem. The proposed control Lyapunov and barrier functions/functionals extend the existing ones from the delay-free case to the time-delay case, and allow for designing the stabilizing and safety controllers in closed-form. Since analytical solutions to time-delay optimal control problems are hard to be achieved, a sliding mode control based approach is developed to merge the proposed control Lyapunov and barrier functions/functionals. Based on the sliding surface functional, a feedback control law is established to investigate the stabilization and safety objectives simultaneously. In particular, the properties of the sliding surface functional are analyzed, and further how to construct the sliding surface functional is discussed. Finally, the proposed approaches are illustrated via two numerical examples from the connected cruise control problem of automotive systems and the synchronization problem of multi-agent systems.
\end{abstract}

\begin{keyword}
Lyapunov-Razumikhin functions, Lyapunov-Krasovskii functionals, safety, stabilization, sliding mode control, time delays.
\end{keyword}

\end{frontmatter}

\section{Introduction}
\label{sec-intro}

In physical domains like automotive, manufacturing, transportation and smart grids \citep{Lee2016introduction}, many physical systems are safety-critical. For these systems, two fundamental objectives are stabilization and safety \citep{Lamport1977proving}, which require physical systems to be operated stably and safely. In particular, the safety objective is placed in the priority position since it is imperative to keep a safety-critical system into a certain safe set while controlling it. For this purpose, the safe set is treated as a safety constraint involving system state and input, and the control design needs to comply with the safety constraint. Based on the classic stabilizing control through control Lyapunov functions (CLFs) \citep{Sontag1989universal, Jankovic2001control, Pepe2017control}, safety constraints can be specified in terms of a set invariance and verified via control barrier functions (CBFs), which are further used to design safety controllers \citep{Ogren2001control, Ngo2005integrator, Ames2016control, Romdlony2016stabilization, Maghenem2019characterizations}. CLFs and CBFs have been applied to deal with different objectives for diverse dynamical systems \citep{Ogren2001control, Panagou2015distributed, Lindemann2018control}. Especially, CLFs and CBFs have been combined to study stabilization and safety objectives simultaneously, and the combination can be either implicit \citep{Ngo2005integrator} or explicit \citep{Ames2016control, Jankovic2018robust, Romdlony2016stabilization} via different techniques.

In the fields of engineering, biology and physics, time delays are frequently encountered due to information acquisition and computation for control decisions and executions \citep{Sipahi2011stability}, and may induce many undesired phenomena like oscillation, instability and performance deterioration. However, since time delays cause a violation of monotonic decrease conditions \citep{Fridman2014tutorial, Zhou2016razumikhin}, classic Lyapunov theory cannot be applied directly to time-delay systems. There are two ways to extend the Lyapunov-based method \citep{Fridman2014tutorial, Zhou2016razumikhin, Ren2019krasovskii}. The first one is the Razumikhin approach \cite{Dashkovskiy2012lyapunov, Jankovic2001control}, and its essence is Lyapunov-Razumikhin functions, which are positive definite and whose derivative is negative definite under the Razumikhin condition. The second one is the Krasovskii approach based on Lyapunov-Krasovskii functionals \cite{Pepe2017control}, which are positive definite functionals with a negative definite derivative along the system solution. These two approaches have been used successfully in stability analysis and controller design of time-delay systems, such as functional differential systems \citep{Hale1993introduction}, hybrid systems \citep{Ren2019krasovskii, Ren2018vector} and multi-agent systems \citep{Papachristodoulou2010effects}. Although many physical systems are indeed safety-critical, to the best of our knowledge, there are few works \citep{Orosz2019safety, Prajna2005methods} on the safety of time-delay systems, which is the topic of this paper.

In order to investigate stabilization and safety of nonlinear time-delay systems, we follow Razumikhin and Krasovskii approaches and propose different CLFs and CBFs, which is the first contribution of this paper. In terms of the Razumikhin approach, since the Razumikhin condition constrains the decreasing of Lyapunov-Razumikhin functions, we propose an alternative control Lyapunov-Razumikhin function (CLRF) via the steepest descent feedback controller \citep{Pepe2014stabilization} to overcome the verification of the Razumikhin condition. In addition, two types of control barrier-Razumikhin functions (CBRFs) are proposed for the first time for the safety objective of time-delay systems. In terms of the Krasovskii approach, based on  smoothly separable functionals \citep{Liu2011input, Pepe2016stabilization}, control Lyapunov-Krasovskii functionals (CLKFs) and control barrier-Krasovskii functionals (CBKFs) are proposed for the stabilization and safety objectives, respectively. Furthermore, along these two directions, Razumikhin-type and Krasovskii-type small control properties (SCPs) are developed. Therefore, for time-delay systems, the safety can be verified via the CBRFs/CBKFs, and the stabilizing controllers are designed in closed-form via the proposed CLRF/CLKFs, which extend existing results \citep{Orosz2019safety, Prajna2005methods, Ames2016control} and Sontag's formula \citep{Sontag1989universal, Jankovic2018robust, Romdlony2016stabilization} into the time-delay case.

The second contribution of this paper is to address the stabilization and safety objectives simultaneously via the sliding mode control (SMC) approach, which is different from many existing works \citep{Jankovic2018robust, Ames2016control} based on optimization techniques. The motivation of the proposed approach lies in that (nonlinear) time-delay optimal control problems are difficult to be solved to obtain closed-form analytical solutions \citep{Wu2019new}. To begin with, the general sliding surface functional is constructed as the combination of the proposed control Lyapunov and barrier functions/functionals. Based on the associated projection operator \citep{Sira1999general}, the properties of the sliding surface functional are investigated, which extend the existing ones in \citep{Sira1999general} into the time-delay case and lays the foundation for the controller design. With these properties of the sliding surface functional, the existence and continuity of the controller is verified via designing the sliding mode controller in a closed form to guarantee the stabilization and safety objectives simultaneously. Second, using the proposed control Lyapunov and barrier functions/functionals, we present how to construct the sliding surface functional to achieve the desired objectives, which contributes to the extension of many existing methods like artificial potential field method \citep{Rimon1992exact}. Finally, we discuss how to attenuate or avoid side-effects of SMC via some approaches like higher-order sliding surface functionals \citep{Shtessel2014sliding, Oguchi2006sliding}. In conclusion, different from the optimization based controller design \citep{Jankovic2018robust, Ames2016control}, we develop a novel and general control design framework for simultaneous stabilization and safety of nonlinear time-delay systems based on control Lyapunov and barrier functions/functionals.

The remainder of this paper is organized below. Preliminaries are presented in Section \ref{sec-nonconsys}. Razumikhin-type CLFs and CBFs are proposed in Section \ref{sec-Razumikhintype}, whereas Krasovskii-type CLFs and CBFs are developed in Section \ref{sec-Krasovskiitype}. The SMC based approach to combine the proposed CLFs and CBFs is presented in Section \ref{sec-combinedfunction}. The derived results are demonstrated in Section \ref{sec-examples}. Conclusions and further research are stated in Section \ref{sec-conclusion}.

\section{Problem Formulation}
\label{sec-nonconsys}

We start with the notation throughout this paper. Let $\mathbb{R}:=(-\infty, +\infty); \mathbb{R}^{+}:=[0, +\infty); \mathbb{N}:=\{0, 1, \ldots\}$ and $\mathbb{N}^{+}:=\{1, 2, \ldots\}$. Given $a, b\in\mathbb{R}^{n}$, $(a,b):=[a^{\top}, b^{\top}]^{\top}$. $|\cdot|$ denotes the Euclidean norm of a real vector, or the induced Euclidean norm of a matrix. Given $\delta>0$ and $y\in\mathbb{R}^{n}$, an open ball centered at $y$ with radius of $\delta$ is defined as $\mathbf{B}(y, \delta):=\{x\in\mathbb{R}^{n}: |x-y|<\delta\}$; and $\mathbf{B}(\delta):=\mathbf{B}(0, \delta)$. Given $A, B\subset\mathbb{R}^{n}$ and $\delta>0$, $A-\mathbf{B}(\delta):=\{x\in A: \mathbf{B}(x, \delta)\subset A\}$ and $B\setminus A:=\{x\in\mathbb{R}^{n}: x\in B, x\notin A\}$. $\mathcal{C}(\mathbb{R}^{n}, \mathbb{R}^{p})$ denotes the set of all continuously differentiable functions mapping $\mathbb{R}^{n}$ to $\mathbb{R}^{p}$, and $\mathcal{C}([a, b], \mathbb{R}^{n})$ denotes the set of continuous functions mapping $[a, b]$ to $\mathbb{R}^{n}$, where $a\leq b$. $\alpha\circ\beta(v):=\alpha(\beta(v))$ for any $\alpha, \beta\in\mathcal{C}(\mathbb{R}^{n}, \mathbb{R}^{n})$. Given $x\in\mathcal{C}([-\Delta, +\infty), \mathbb{R}^{n})$, for any $t\in\mathbb{R}^{+}$, let $x_{t}$ be an element of $\mathcal{C}([-\Delta, 0], \mathbb{R}^{n})$ defined by $x_{t}(\theta):=x(t+\theta)$ with $\theta\in[-\Delta, 0]$. For $\phi\in\mathcal{C}([-\Delta, 0], \mathbb{R}^{n})$ with $\Delta>0$ fixed, we denote $\|\phi\|:=\sup_{\theta\in[-\Delta, 0]}|\phi(\theta)|$. For any $V\in\mathcal{C}(\mathbb{R}^{+}, \mathbb{R})$, its upper Dini derivative is $D^{+}V(t):=\limsup_{s\rightarrow0^{+}}\frac{V(t+s)-V(t)}{s}$. For any $h: \mathcal{C}([-\Delta, 0], \mathbb{R}^{n})\rightarrow\mathbb{R}^{+}$, its upper Dini derivative is $D^{+}h(x_{t})=\limsup_{v\rightarrow0^{+}}\frac{h(x_{t+v})-h(x_{t})}{v}$. $\Id$ is the identity function, $\sign: \mathbb{R}\rightarrow\{-1, 0, 1\}$ is the sign function, and $\sgn: \mathbb{R}\rightarrow\{0, 1\}$ is defined by $\sgn(v)=1$ if $v\geq0$; $\sgn(v)=0$ otherwise. Given $\alpha\in\mathcal{C}(\mathbb{R}, \mathbb{R})$, $\alpha\geq\Id$ means that for all $v\in\mathbb{R}$, $\alpha(v)\geq v$. A function $\alpha: \mathbb{R}^{+}\rightarrow\mathbb{R}^{+}$ is of class $\mathcal{K}$ if it is continuous, $\alpha(0)=0$, and strictly increasing; it is of class $\mathcal{K}_{\infty}$ if it is of class $\mathcal{K}$ and unbounded. A continuous function $\alpha: \mathbb{R}\rightarrow\mathbb{R}$ is of class $\overbar{\mathcal{K}}$ if it is strictly increasing and $\alpha(0)=0$. A function $\beta: \mathbb{R}^{+}\times\mathbb{R}^{+}\rightarrow\mathbb{R}^{+}$ is of class $\mathcal{KL}$ if $\beta(s, t)$ is of class $\mathcal{K}$ for any fixed $t\geq0$ and $\beta(s, t)\rightarrow0$ as $t\rightarrow\infty$ for any fixed $s\geq0$.

In this paper, we focus on nonlinear time-delay control systems with the following dynamics:
\begin{align}
\label{eqn-1}
\begin{aligned}
\dot{x}(t)&=f(x_{t})+g(x_{t})u, \quad t>0, \\
x(t)&=\xi(t), \quad t\in[-\Delta, 0],
\end{aligned}
\end{align}
where $x\in\mathbb{R}^{n}$ is the system state, $x_{t}(\theta)=x(t+\theta)\in\mathbb{R}^{n}$ is the time-delay state with $\theta\in[-\Delta, 0]$, where $\Delta>0$ is the upper bound of time delays. The initial state is $\xi\in\mathcal{C}([-\Delta, 0], \mathbb{X}_{0})$ with $\mathbb{X}_{0}\subset\mathbb{R}^{n}$ and $\|\xi\|$ being bounded. The control input $u$ takes value from the set $\mathbb{U}\subseteq\mathbb{R}^{m}$. Here, the input function is not specified explicitly due to the allowance for its dependence on the current state only (i.e., $u: \mathbb{R}^{+}\rightarrow\mathbb{U}$) or the time-delay trajectory (i.e., $u: \mathcal{C}([-\Delta, 0], \mathbb{R}^{n})\rightarrow\mathbb{U}$). The functionals $f: \mathcal{C}([-\Delta, 0], \mathbb{R}^{n})\rightarrow\mathbb{R}^{n}$ and $g: \mathcal{C}([-\Delta, 0], \mathbb{R}^{n})\rightarrow\mathbb{R}^{n\times m}$ are assumed to be continuous and locally Lipschitz, which guarantees the existence of the unique solution to the system \eqref{eqn-1}; see \cite[Section 2]{Hale1993introduction}. Let $f(0)=0$. To consider the stabilization problem of the system \eqref{eqn-1}, we assume that the origin is included in the initial set $\mathbb{X}_{0}$.

\begin{definition}[\citep{Sastry2013nonlinear}]
\label{def-1}
Given a control input $u\in\mathbb{R}^{m}$, the system \eqref{eqn-1} is \emph{globally asymptotically stable (GAS)}, if there exists $\beta\in\mathcal{KL}$ such that $|x(t)|\leq\beta(\|\xi\|, t)$ for all $t\geq0$ and all bounded $\xi\in\mathcal{C}([-\Delta, 0], \mathbb{R}^{n})$; the system \eqref{eqn-1} is \emph{semi-globally asymptotically stable (semi-GAS)}, if there exists $\beta\in\mathcal{KL}$ such that $|x(t)|\leq\beta(\|\xi\|, t)$ for all $t\geq0$ and all $\xi\in\mathcal{C}([-\Delta, 0], \mathbb{X}_{0})$.
\end{definition}

From Definition \ref{def-1}, the stabilization objective is to design a feedback controller to guarantee the system \eqref{eqn-1} to be GAS. Note that if only the stabilization objective is studied, then we focus on the GAS property; otherwise, we focus on the semi-GAS property due to the safety objective. For the system \eqref{eqn-1}, a set $\mathbb{C}\subset\mathbb{R}^{n}$ is \emph{forward invariant}, if $x(t)\in\mathbb{C}$ for all $t\geq-\Delta$ and any trajectory $x(t)$ starting from an initial state $\xi\in\mathcal{C}([-\Delta, 0], \mathbb{X}_{0})$ with $\mathbb{X}_{0}\subseteq\mathbb{C}$. The system \eqref{eqn-1} is \emph{safe} with respect to the set $\mathbb{C}$, if the set $\mathbb{C}$ is forward invariant, and then the set $\mathbb{C}$ is called the \emph{safe set}. Hence, the safety objective is to design a feedback controller to keep the trajectory of the system \eqref{eqn-1} within the safe set. Since the safety and stabilization objectives of time-delay systems cannot be achieved via classic CLFs and CBFs \citep{Liu2016new}, the goal of this paper is to implement both Razumikhin and Krasovskii approaches to propose novel types of CLFs and CBFs for the system \eqref{eqn-1}.

\section{Razumikhin-type Control Functions}
\label{sec-Razumikhintype}

In this section, we follow the Razumikhin approach to propose control Lyapunov and barrier functions for time-delay systems. For this purpose, we first propose control Lyapunov-Razumikhin functions for the stabilization objective, and then control barrier-Razumikhin functions for the safety objective.

\subsection{Control Lyapunov-Razumikhin Functions}
\label{subsec-clrf}

We first recall the classic control Lyapunov-Razumikhin function.

\begin{definition}[\citep{Jankovic2001control, Pepe2017control}]
\label{def-2}
For the system \eqref{eqn-1}, a function $V_{\cclf}\in\mathcal{C}(\mathbb{R}^{n}, \mathbb{R}^{+})$ is called a \emph{control Lyapunov-Razumikhin function (CLRF-I)}, if
\begin{enumerate}[(i)]
  \item there exist $\alpha_{1}, \alpha_{2}\in\mathcal{K}_{\infty}$ such that, for all $x\in\mathbb{R}^{n}$, $\alpha_{1}(|x|)\leq V_{\cclf}(x)\leq\alpha_{2}(|x|)$;
  \item there exist $\gamma_{\cclf}, \rho\in\mathcal{K}$ with $\rho>\Id$ such that, for all $\phi\in\mathcal{C}([-\Delta, 0], \mathbb{R}^{n})$ with $\phi(0)=x$, if $\rho(V_{\cclf}(x))\geq V_{\cclf}(\phi(\theta))$ for all $\theta\in[-\Delta, 0]$, then
  \begin{align}
  \label{eqn-2}
  &\inf_{u\in\mathbb{U}}\left\{L_{f}V_{\cclf}(\phi)+L_{g}V_{\cclf}(\phi)u\right\}\leq-\gamma_{\cclf}(V_{\cclf}(x)),
  \end{align}
\end{enumerate}
where $L_{f}V_{\cclf}(\phi):=\frac{\partial V_{\cclf}(x)}{\partial x}f(\phi)$ and $L_{g}V_{\cclf}(\phi):=\frac{\partial V_{\cclf}(x)}{\partial x}g(\phi)$.
\end{definition}

In Definition \ref{def-2}, the subscript `$\cclf$' is to show that the CLRF is `classic', which is to distinguish with the CLRF proposed afterwards. In \eqref{eqn-2}, $L_{f}V_{\cclf}(\phi)+L_{g}V_{\cclf}(\phi)u$ is an alternative form of the upper Dini derivative of $V_{\cclf}$, and here the upper Dini derivative equals to the usual derivative. The effects of time delays are shown via the Razumikhin condition: $\rho(V_{\cclf}(x))\geq V_{\cclf}(\phi(\theta))$ for all $\theta\in[-\Delta, 0]$, which can be written equivalently as $\rho(V_{\cclf}(x))\geq\|V_{\cclf}(\phi)\|$; see \citep{Pepe2017control, Ren2018vector}. If $L_{g}V_{\cclf}(\phi)=0$, then Definition \ref{def-2} is reduced to the one in \citep{Jankovic2001control}.

In the delay-free case, the existence of the continuous controller is verified via the SCP \citep{Sontag1989universal}. However, in the time-delay case, whether the Razumikhin condition is satisfied results in additional difficulties in the controller design, and the SCP is unavailable. To derive a stabilizing controller, the condition on the CLRF-I is presented in the following theorem, which is an extension of Theorem 1 in \citep{Jankovic2001control}, and a simplified proof is given.

\begin{theorem}
\label{thm-1}
For the system \eqref{eqn-1} with a CLRF-I $V_{\cclf}\in\mathcal{C}(\mathbb{R}^{n}, \mathbb{R}^{+})$, if there exists $\chi\in\mathbb{R}^{+}$ such that for all $\phi\in\mathcal{C}([-\Delta, 0], \mathbf{B}(\varepsilon))$ with some $\varepsilon>0$ and $L_{g}V_{\cclf}(\phi)\neq0$,
\begin{equation}
\label{eqn-3}
L_{f}V_{\cclf}(\phi)\leq\chi\|L_{g}V_{\cclf}(\phi)\|^{2},
\end{equation}
then there exists a smooth function $\varphi: \mathbb{R}^{+}\rightarrow\mathbb{R}^{+}$ satisfying $\int^{\infty}_{0}\varphi(v)dv=\infty$ such that
\begin{align}
\label{eqn-4}
u(\phi):=-\varphi(V_{\cclf}(x))(L_{g}V_{\cclf}(\phi))^{\top}
\end{align}
 is a continuous controller for the system \eqref{eqn-1} and ensures the closed-loop system to be GAS.
\end{theorem}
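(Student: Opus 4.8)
The plan is to substitute the candidate feedback \eqref{eqn-4} into $\dot V_{\cclf}$, reduce the stabilization claim to a Lyapunov--Razumikhin decrease estimate, choose the gain $\varphi$ large enough to make that estimate hold, and then invoke the Lyapunov--Razumikhin theorem; condition \eqref{eqn-3} will be used precisely to keep the construction well behaved near the origin. With $u=u(\phi)$ and $\phi=x_t$ one gets $\dot V_{\cclf}=L_fV_{\cclf}(\phi)-\varphi(V_{\cclf}(x))\,\|L_gV_{\cclf}(\phi)\|^2$ (the upper Dini derivative here equals the ordinary one, as noted after Definition~\ref{def-2}). With the same $\rho\in\mathcal{K}$, $\rho>\Id$ as in Definition~\ref{def-2}, it suffices by the Lyapunov--Razumikhin theorem to produce a continuous positive definite $\omega$ with $\dot V_{\cclf}\le-\omega(V_{\cclf}(x))$ whenever the Razumikhin condition $\rho(V_{\cclf}(x))\ge\|V_{\cclf}(x_t)\|$ holds; combined with bound (i) this gives the $\mathcal{KL}$ estimate, i.e. GAS.

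To build $\varphi$, fix $r>0$. Since $\alpha_1\in\mathcal{K}_\infty$, $V_{\cclf}$ is proper, so every history segment $\phi$ with $V_{\cclf}(\phi(0))=r$ and $\|V_{\cclf}(\phi)\|\le\rho(r)$ (those relevant to the Razumikhin condition at level $r$) is valued in a fixed compact set. Where $L_gV_{\cclf}(\phi)=0$, property (ii) forces $L_fV_{\cclf}(\phi)=\inf_{u}\{L_fV_{\cclf}(\phi)+L_gV_{\cclf}(\phi)u\}\le-\gamma_{\cclf}(r)$, hence $\dot V_{\cclf}\le-\gamma_{\cclf}(r)$ for \emph{any} $\varphi$; where $L_gV_{\cclf}(\phi)\ne0$, I take $\varphi(r)$ at least as large as the supremum of $(L_fV_{\cclf}(\phi)+\gamma_{\cclf}(r))/\|L_gV_{\cclf}(\phi)\|^2$ over these $\phi$, which again yields $\dot V_{\cclf}\le-\gamma_{\cclf}(r)$.

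Two things remain: that this prescription can be met by a \emph{smooth} $\varphi\colon\mathbb{R}^+\to\mathbb{R}^+$, and that \eqref{eqn-4} is continuous (in particular at the origin). Here \eqref{eqn-3} enters: for $\phi$ valued in $\mathbf{B}(\varepsilon)$ with $L_gV_{\cclf}(\phi)\ne0$ it gives $L_fV_{\cclf}(\phi)\le\chi\|L_gV_{\cclf}(\phi)\|^2$, so taking $\varphi\ge\chi+1$ near $0$ makes $\dot V_{\cclf}\le(\chi-\varphi(V_{\cclf}))\|L_gV_{\cclf}(\phi)\|^2\le0$ there while $\dot V_{\cclf}\le-\gamma_{\cclf}(V_{\cclf})$ still holds on $\{L_gV_{\cclf}(\phi)=0\}$; mollifying yields a smooth $\varphi$ dominating all these pointwise requirements, which one may keep bounded below by a positive constant so that $\int_0^\infty\varphi(v)\,dv=\infty$. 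Continuity of \eqref{eqn-4} away from $0$ is clear from smoothness of $\varphi$, $V_{\cclf}\in\mathcal{C}^1$ and continuity of $g$; at $0$ it holds because $V_{\cclf}$ attains its minimum there, so $\partial V_{\cclf}(0)/\partial x=0$ and $L_gV_{\cclf}(\phi)\to0$ as $\phi\to0$ in $\mathcal{C}([-\Delta,0],\mathbb{R}^n)$. Assembling the above, the Razumikhin condition gives $\dot V_{\cclf}\le-\omega(V_{\cclf}(x))$ with $\omega$ continuous positive definite (equal to $\gamma_{\cclf}$ away from $0$, a smaller positive definite function near it), and the Lyapunov--Razumikhin theorem together with (i) yields the $\mathcal{KL}$ bound for all bounded initial data, i.e. GAS; forward completeness is automatic since the Razumikhin bootstrap renders $V_{\cclf}$ nonincreasing and $V_{\cclf}$ is proper.

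\textbf{Main obstacle.} The step I expect to require the most care is the choice of $\varphi$: a priori the set of admissible history segments $\phi$ at a fixed level $r$ is bounded but \emph{not} compact in $\mathcal{C}([-\Delta,0],\mathbb{R}^n)$, so the supremum defining the minimal admissible $\varphi(r)$ need not be finite without additional structure — making this rigorous leans on continuity and local Lipschitzness of $f,g$, properness of $V_{\cclf}$, and, near the origin, exactly the bound \eqref{eqn-3}, which is what prevents the required gain from blowing up as $V_{\cclf}(x)\to0$ and what keeps \eqref{eqn-4} continuous there. A secondary subtlety is that near the origin the estimate obtained when $L_gV_{\cclf}(\phi)\ne0$ is only $\dot V_{\cclf}\le0$, so one must either settle for a weaker (still positive definite) $\omega$ locally or bring in a LaSalle--Razumikhin-type refinement to recover asymptotic convergence.
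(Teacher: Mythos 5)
Your overall route --- substitute the feedback \eqref{eqn-4} into $\dot V_{\cclf}$, schedule the gain $\varphi$ level-by-level so that a Razumikhin-conditioned decrease estimate holds, use \eqref{eqn-3} to cap the gain near the origin, and invoke the Razumikhin theorem --- is the same domination-redesign idea as the paper's proof, but the step you yourself flag as the main obstacle is a genuine gap, and your construction as written does not close it. The supremum of $(L_{f}V_{\cclf}(\phi)+\gamma_{\cclf}(r))/\|L_{g}V_{\cclf}(\phi)\|^{2}$ over \emph{all} admissible histories at level $r$ with $L_{g}V_{\cclf}(\phi)\neq0$ can be $+\infty$: when $\mathbb{U}$ is unbounded, condition \eqref{eqn-2} imposes no constraint at all on $L_{f}V_{\cclf}(\phi)$ at points where $L_{g}V_{\cclf}(\phi)\neq0$ (the infimum over $u$ is $-\infty$ there), so one can have $\|L_{g}V_{\cclf}(\phi)\|$ arbitrarily small and nonzero while the numerator stays positive, and no compactness argument is available to control the ratio (bounded closed subsets of $\mathcal{C}([-\Delta,0],\mathbb{R}^{n})$ are not compact, as you note). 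The paper avoids exactly this by fixing a split parameter in $(0,1)$ and distinguishing $L_{f}V_{\cclf}(\phi)\leq-\varepsilon\gamma_{\cclf}(V_{\cclf}(x))$ from $L_{f}V_{\cclf}(\phi)\geq-\varepsilon\gamma_{\cclf}(V_{\cclf}(x))$: in the first case any positive $\varphi$ already yields $\dot V_{\cclf}\leq-\varepsilon\gamma_{\cclf}(V_{\cclf}(x))$, and only in the second case is a gain requirement imposed, where Lemma 1 of \citep{Jankovic2001control} guarantees $\|L_{g}V_{\cclf}(\phi)\|>\omega>0$, so the supremum in \eqref{eqn-6} is finite. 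Your construction needs this $\varepsilon$-split together with that lemma (or an equivalent lower bound on $\|L_{g}V_{\cclf}\|$ in the ``bad'' case) to make $\varphi$ well defined; accepting the weaker rate $\varepsilon\gamma_{\cclf}$ costs nothing for GAS.

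The second unresolved point is the behaviour near the origin: you obtain only $\dot V_{\cclf}\leq0$ when $L_{g}V_{\cclf}(\phi)\neq0$ and defer to an unproved ``LaSalle--Razumikhin-type refinement''. No such refinement is needed, and the paper does not use one: inside the sublevel set $\mathbf{B}_{\textsf{v}}(\mathfrak{p})$ reached in the first stage, the same case split applies, and in the case $L_{f}V_{\cclf}(\phi)\geq-\varepsilon\gamma_{\cclf}(V_{\cclf}(x))$ condition \eqref{eqn-3} with $\varphi\geq1+\chi$ gives $\dot V_{\cclf}\leq L_{f}V_{\cclf}(\phi)-(1+\chi)\|L_{g}V_{\cclf}(\phi)\|^{2}\leq-\|L_{g}V_{\cclf}(\phi)\|^{2}$, while Lemma 1 of \citep{Jankovic2001control} again bounds $\|L_{g}V_{\cclf}(\phi)\|$ from below by a level-dependent constant, so the decrease is strictly negative and state-dependent; in the remaining cases the rates $-\varepsilon\gamma_{\cclf}$ or $-\gamma_{\cclf}$ are already available. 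With these two repairs (the $\varepsilon$-split plus the cited lemma for finiteness of the gain, and the explicit strict decrease near the origin) your outline becomes essentially the paper's proof; the remaining ingredients you give --- continuity of \eqref{eqn-4} away from and at the origin via $\partial V_{\cclf}(0)/\partial x=0$, and smoothing $\varphi$ while keeping it bounded below so that $\int_{0}^{\infty}\varphi(v)dv=\infty$ --- match the paper's final step.
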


\begin{proof}
We first show the convergence of $V_{\cclf}(x)$ to a region around the origin, and then ensure the convergence of $V_{\cclf}(x)$ in this region. In this way, the system stability can be concluded from these two parts.

\emph{\textbf{Part 1:} convergence of $V_{\cclf}(x)$ to a region around the origin.} We consider two cases: $L_{f}V_{\cclf}(\phi)\leq-\varepsilon\gamma_{\cclf}(V_{\cclf}(x))$ and $L_{f}V_{\cclf}(\phi)\geq-\varepsilon\gamma_{\cclf}(V_{\cclf}(x))$, where $\varepsilon\in(0, 1)$.

For the first case, we have from \eqref{eqn-4} that
\begin{align}
\label{eqn-5}
L_{f}V_{\cclf}(\phi)+L_{g}V_{\cclf}(\phi)u&=L_{f}V_{\cclf}(\phi)-\varphi(V_{\cclf}(x))\|L_{g}V_{\cclf}(\phi)\|^{2} \nonumber \\
&\leq-\varepsilon\gamma_{\cclf}(V_{\cclf}(x))\leq0,
\end{align}
where the first ``$\leq$'' holds due to the positivity of the function $\varphi$.

For the second case, from Lemma 1 in \citep{Jankovic2001control}, $L_{g}V_{\cclf}(\phi)\neq0$ and there exists $\omega>0$ such that $\|L_{g}V_{\cclf}(\phi)\|>\omega$. Define the function $\varphi_{1}: \mathbb{R}^{+}\rightarrow\mathbb{R}^{+}$:
\begin{align}
\label{eqn-6}
\varphi_{1}(V_{\cclf}(x)):=\sup_{\theta\in[-\Delta, 0]}\frac{\varepsilon+\mathfrak{a}_{\cclf}(\phi(\theta))}{\|\mathfrak{b}_{\cclf}(\phi)\|^{2}+\varpi(\mathfrak{b}_{\cclf}(\phi))},
\end{align}
where $\mathfrak{a}_{\cclf}(\phi):=L_{f}V_{\cclf}(\phi)+\gamma_{\cclf}(V_{\cclf}(x))$, $\mathfrak{b}_{\cclf}(\phi):=L_{g}V_{\cclf}(\phi)$, and $\varpi(\mathfrak{b}_{\cclf}(\phi))=\epsilon(1-\sign(\|\mathfrak{b}_{\cclf}(\phi)\|))$ with sufficiently small $\epsilon>0$. Since $L_{f}V_{\cclf}(\phi)\geq-\varepsilon\gamma_{\cclf}(V_{\cclf}(x))$, $\mathfrak{a}_{\cclf}(\phi)\geq(1-\varepsilon)\gamma_{\cclf}(V_{\cclf}(x))>0$. Hence, $\varphi_{1}(V_{\cclf}(\phi))>0$ and $\int^{\infty}_{0}\varphi_{1}(v)dv=\infty$. From $V_{\cclf}\in\mathcal{C}(\mathbb{R}^{n}, \mathbb{R}^{+})$ and the continuity of $\mathfrak{a}_{\cclf}(\phi)$ and $\mathfrak{b}_{\cclf}(\phi)$, we can deduce the continuity of the function $\varphi_{1}(V_{\cclf}(x))$. From \eqref{eqn-2} and \eqref{eqn-6},
\begin{align}
\label{eqn-7}
L_{f}V_{\cclf}(\phi)-\varphi_{1}(V_{\cclf}(x))\|L_{g}V_{\cclf}(\phi)\|^{2}&\leq L_{f}V_{\cclf}(\phi)-\varepsilon-\mathfrak{a}_{\cclf}(\phi)  \nonumber \\
&=-\varepsilon-\gamma_{\cclf}(V_{\cclf}(x))  \nonumber \\
&<-\gamma_{\cclf}(V_{\cclf}(x)).
\end{align}
From \eqref{eqn-5}-\eqref{eqn-7} and the Razumikhin theorem in \citep{Hale1993introduction}, we conclude that, under the controller \eqref{eqn-4} with $\varphi(v)\geq\varphi_{1}(v)$ for all $v\in\mathbb{R}^{+}$, $V_{\cclf}(x)$ converges to a region around the origin. Let this region be $\mathbf{B}_{\textsf{v}}(\mathfrak{p}):=\{x\in\mathbb{R}^{n}: V_{\cclf}(x)\leq\mathfrak{p}\}$ with $\mathfrak{p}>0$.

\emph{\textbf{Part 2:} asymptotic stability of $V_{\cclf}(x)$ in} $\mathbf{B}_{\textsf{v}}(\mathfrak{p})$. We consider the following two cases.
\begin{enumerate}[{i)}]
  \item If $L_{g}V_{\cclf}(\phi)=0$, then $L_{f}V_{\cclf}(\phi)\leq-\gamma_{\cclf}(V_{\cclf}(x))$ from \eqref{eqn-2}.

  \item If $L_{g}V_{\cclf}(\phi)\neq0$, then \eqref{eqn-3} holds. If $L_{f}V_{\cclf}(\phi)\leq-\varepsilon\gamma_{\cclf}(V_{\cclf}(x))$, then $L_{f}V_{\cclf}(\phi)-\varphi(V_{\cclf}(x))\|L_{g}V_{\cclf}(\phi)\|^{2}<-\varepsilon\gamma_{\cclf}(V_{\cclf}(x))$ from \eqref{eqn-6}. If $L_{f}V_{\cclf}(\phi)\geq-\varepsilon\gamma_{\cclf}(V_{\cclf}(x))$, then from Lemma 1 in \citep{Jankovic2001control}, there exists $\omega_{1}>0$ such that $\|L_{g}V_{\cclf}(\phi)\|>\omega_{1}$. Note that $\omega_{1}$ depends on $|x|$, since we focus on the region $\mathbf{B}_{\textsf{v}}(\mathfrak{p})$ here. Let $\varphi_{1}(V_{\cclf}(x)):=1+\chi$. Thus, $L_{f}V_{\cclf}(\phi)-\varphi_{1}(V_{\cclf}(x))\|L_{g}V_{\cclf}(\phi)\|^{2}\leq-\|L_{g}V_{\cclf}(\phi)\|^{2}<-\omega^{2}_{1}$.
\end{enumerate}
Therefore, $\dot{V}_{\cclf}(x)<0$ for all $x\in\mathbf{B}_{\textsf{v}}(\mathfrak{p})\setminus\{0\}$, and the asymptotic stability in $\mathbf{B}_{\textsf{v}}(\mathfrak{p})$ is guaranteed via the Razumikhin theorem.

Finally, based on the function $\varphi_{1}$, the function $\varphi$ can be constructed to be smooth and to satisfy: $\varphi(v)\geq\varphi_{1}(v)$ for all $v\geq\mathfrak{p}$; $\varphi(v)\geq1+\chi$ for all $v\leq\mathfrak{p}$. Therefore, the above analysis is still valid with the function $\varphi$, which is continuous and further ensures the continuity of the controller \eqref{eqn-4}.
\hfill$\blacksquare$
\end{proof}

Theorem \ref{thm-1} shows the existence of the continuous controller for the system \eqref{eqn-1} under the condition \eqref{eqn-3}. The condition \eqref{eqn-3} is similar to the SCP to guarantee the continuity of the derived controller \eqref{eqn-4} at the origin, and further implies that the derivative of the CLRF-I $V_{\cclf}$ is negative in a neighborhood of the origin \cite[Chapter 3.4]{Sepulchre2012constructive}. The controller \eqref{eqn-4} is called the \emph{domination redesign control law}, which is derived based on the optimal stabilization problem \citep{Sepulchre2012constructive}. The corresponding function $\varphi$ is called the \emph{dominating function}, and always exists because $V_{\cclf}$ is differentiable and radially unbounded \citep{Jankovic2001control, Sepulchre2012constructive}.

From Theorem \ref{thm-1}, the condition \eqref{eqn-3} needs to be verified in the controller design, however, the verification of the condition \eqref{eqn-3} is not easy, in particular, for high-dimensional systems. On the other hand, since the function $\varphi$ in \eqref{eqn-4} is related to the optimization problem \citep{Sepulchre2012constructive} or is constructed based on the sublevel set of the CLRF-I \citep{Jankovic2001control}, the closed form of the controller cannot be expressed easily and explicitly. To avoid the verification of the Razumikhin condition and to establish the controller explicitly, we propose an alternative CLRF, which is based on the steepest descent feedback controller \citep{Clarke2010discontinuous, Pepe2017control}.

\begin{definition}
\label{def-3}
For the system \eqref{eqn-1}, a function $V_{\rzmk}\in\mathcal{C}(\mathbb{R}^{n}, \mathbb{R}^{+})$ is called a \emph{control Lyapunov-Razumikhin function (CLRF-II)}, if item (i) in Definition \ref{def-2} holds, and there exist $\gamma_{\rzmk}, \eta_{\rzmk}\in\mathcal{K}$ such that $\gamma_{\rzmk}-\eta_{\rzmk}\in\mathcal{K}$, and for any nonzero $\phi\in\mathcal{C}([-\Delta, 0], \mathbb{R}^{n})$ with $\phi(0)=x$,
\begin{align}
\label{eqn-8}
\inf_{u\in\mathbb{U}}\left\{L_{f}V_{\rzmk}(\phi)+L_{g}V_{\rzmk}(\phi)u\right\}<-\gamma_{\rzmk}(V_{\rzmk}(x))+\eta_{\rzmk}(\|V_{\rzmk}(\phi)\|).
\end{align}
\end{definition}

Comparing with Definition \ref{def-2} where the Razumikhin condition is the premise of the condition \eqref{eqn-2}, the Razumikhin condition is not included in Definition \ref{def-3}. Hence, \eqref{eqn-8} can be treated as an extension of the condition \eqref{eqn-2}, in the sense that the item $\eta_{\rzmk}$ in \eqref{eqn-8} is to estimate the derivative of $V_{\rzmk}$ when the Razumikhin condition is not satisfied. If $g(0)=0$, then \eqref{eqn-8} is reduced to a non-strict inequality for all $\phi\in\mathcal{C}([-\Delta, 0], \mathbb{R}^{n})$. The relation between the CLRF-I and CLRF-II is shown in the following proposition, which is similar to Remark 4 in \cite{Pepe2017control}.

\begin{proposition}
\label{thm-2}
Consider the system \eqref{eqn-1}. If $V_{\rzmk}\in\mathcal{C}(\mathbb{R}^{n}, \mathbb{R}^{+})$ is a CLRF-II, then $V_{\rzmk}$ is a CLRF-I.
\end{proposition}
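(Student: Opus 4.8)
The plan is to verify the two defining properties of a CLRF-I in Definition~\ref{def-2} for $V_{\rzmk}$, using only that $V_{\rzmk}$ is a CLRF-II. Property~(i) of Definition~\ref{def-2} is identical to item~(i) shared by Definition~\ref{def-3}, so nothing is needed there; the whole task is to manufacture $\gamma_{\cclf},\rho\in\mathcal{K}$ with $\rho>\Id$ that turn the unconditional estimate \eqref{eqn-8} into the Razumikhin-conditioned estimate \eqref{eqn-2}.

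First I would construct the gain $\rho$. Since $\gamma_{\rzmk}-\eta_{\rzmk}\in\mathcal{K}$, we have $\gamma_{\rzmk}(s)>\eta_{\rzmk}(s)$ for all $s>0$, hence $\tfrac12(\gamma_{\rzmk}(s)+\eta_{\rzmk}(s))>\eta_{\rzmk}(s)$ on $(0,\infty)$. I would then pick $\rho\in\mathcal{K}$ with $\rho>\Id$ satisfying
\begin{equation*}
\eta_{\rzmk}(\rho(s))\leq\tfrac12\bigl(\gamma_{\rzmk}(s)+\eta_{\rzmk}(s)\bigr),\qquad s\geq0.
\end{equation*}
Such a $\rho$ exists: on the set where $\tfrac12(\gamma_{\rzmk}(s)+\eta_{\rzmk}(s))$ lies in the range of $\eta_{\rzmk}$, take $\rho(s)=\eta_{\rzmk}^{-1}\bigl(\tfrac12(\gamma_{\rzmk}(s)+\eta_{\rzmk}(s))\bigr)$, which is continuous, strictly increasing, vanishes at $0$ and exceeds $s$ there; on the remainder (nonempty only when $\eta_{\rzmk}\notin\mathcal{K}_{\infty}$) any continuous strictly increasing extension above $\Id$ works, since there $\eta_{\rzmk}(\rho(s))<\sup\eta_{\rzmk}\leq\tfrac12(\gamma_{\rzmk}(s)+\eta_{\rzmk}(s))$ automatically. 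Taking, for instance, the pointwise minimum of this patched map with $s\mapsto s+1$ keeps it in $\mathcal{K}$ while preserving $\rho>\Id$.

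With $\rho$ fixed, I would run the Razumikhin substitution. Let $\phi\in\mathcal{C}([-\Delta,0],\mathbb{R}^{n})$ with $\phi(0)=x$ satisfy $\rho(V_{\rzmk}(x))\geq V_{\rzmk}(\phi(\theta))$ for all $\theta\in[-\Delta,0]$, i.e.\ $\|V_{\rzmk}(\phi)\|\leq\rho(V_{\rzmk}(x))$. If $x\neq0$ then $\phi$ is nonzero, so \eqref{eqn-8} applies, and using monotonicity of $\eta_{\rzmk}$ followed by the defining inequality for $\rho$,
\begin{align*}
\inf_{u\in\mathbb{U}}\bigl\{L_{f}V_{\rzmk}(\phi)+L_{g}V_{\rzmk}(\phi)u\bigr\}
&<-\gamma_{\rzmk}(V_{\rzmk}(x))+\eta_{\rzmk}\bigl(\rho(V_{\rzmk}(x))\bigr)\\
&\leq-\tfrac12\bigl(\gamma_{\rzmk}(V_{\rzmk}(x))-\eta_{\rzmk}(V_{\rzmk}(x))\bigr),
\end{align*}
so \eqref{eqn-2} holds with $\gamma_{\cclf}:=\tfrac12(\gamma_{\rzmk}-\eta_{\rzmk})\in\mathcal{K}$. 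The case $x=0$ is trivial: the Razumikhin condition together with item~(i) forces $\phi\equiv0$, whence $L_{f}V_{\rzmk}(\phi)=0$ (because $f(0)=0$) and the infimum is $\leq0=-\gamma_{\cclf}(0)$. Thus $V_{\rzmk}$ satisfies both items of Definition~\ref{def-2}.

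The one genuinely delicate point is the construction of $\rho$: it must stay strictly above the identity, so that the Razumikhin premise is nonvacuous, while keeping $\eta_{\rzmk}\circ\rho$ under $\tfrac12(\gamma_{\rzmk}+\eta_{\rzmk})$, and this needs a little care exactly when $\eta_{\rzmk}$ is of class $\mathcal{K}$ but not $\mathcal{K}_{\infty}$, so that $\eta_{\rzmk}^{-1}$ is not globally defined on $\mathbb{R}^{+}$. Once $\rho$ and $\gamma_{\cclf}$ are in hand, the rest is a direct substitution into \eqref{eqn-8}.
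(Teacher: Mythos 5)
Your proof is correct, and its skeleton is the same as the paper's: fix $\rho\in\mathcal{K}$ with $\rho(s)>s$ for $s>0$, use monotonicity of $\eta_{\rzmk}$ together with the Razumikhin premise $\|V_{\rzmk}(\phi)\|\leq\rho(V_{\rzmk}(x))$ to replace the delayed term in \eqref{eqn-8}, and read off a class-$\mathcal{K}$ decay rate. The only real difference is where $\rho$ comes from. The paper simply invokes Lemma A.1 of \citep{Pepe2021nonlinear}, which directly supplies $\rho\in\mathcal{K}$ with $\rho(s)>s$ on $(0,\infty)$ and $\gamma_{\rzmk}-\eta_{\rzmk}\circ\rho\in\mathcal{K}$, and then takes $\gamma_{\cclf}:=\gamma_{\rzmk}-\eta_{\rzmk}\circ\rho$; you instead build $\rho$ by hand so that $\eta_{\rzmk}\circ\rho\leq\tfrac12(\gamma_{\rzmk}+\eta_{\rzmk})$, which forces the explicit margin $\gamma_{\cclf}=\tfrac12(\gamma_{\rzmk}-\eta_{\rzmk})$ and requires the extra (correctly handled) care when $\eta_{\rzmk}$ is bounded so that $\eta_{\rzmk}^{-1}$ is not globally defined. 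The paper's route is shorter and pushes the delicate existence statement into a cited lemma; yours is self-contained and produces a concrete $\gamma_{\cclf}$, and you also dispose of the $\phi\equiv0$ corner case (via $f(0)=0$ and the fact that the Razumikhin condition at $x=0$ forces $\phi\equiv0$), which the paper's proof leaves implicit since \eqref{eqn-8} is stated only for nonzero $\phi$. For that last step it is worth noting explicitly that $\nabla V_{\rzmk}(0)=0$ (as $V_{\rzmk}$ is $\mathcal{C}^1$, nonnegative, and vanishes at the origin), so that $L_{g}V_{\rzmk}(0)=0$ and the infimum at $\phi\equiv0$ is indeed $0$; otherwise the argument is complete.
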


\begin{proof}
Since the CLRF-II $V_{\rzmk}$ satisfies item (i) in Definition \ref{def-2}, we only need to show that \eqref{eqn-2} can be derived from \eqref{eqn-8}. From the definition of the CLRF-II, there exist $\gamma_{\rzmk}, \eta_{\rzmk}\in\mathcal{K}$ such that $\gamma_{\rzmk}-\eta_{\rzmk}\in\mathcal{K}$, and \eqref{eqn-8} holds for any nonzero $\phi\in\mathcal{C}([-\Delta, 0], \mathbb{R}^{n})$. From Lemma A.1 in \citep{Pepe2021nonlinear}, there exists $\rho\in\mathcal{K}$ such that $\rho(s)>s$ for all $s>0$ and $\gamma_{\rzmk}-\eta_{\rzmk}\circ\rho\in\mathcal{K}$. In this respect, we have
\begin{align*}
&\inf_{u\in\mathbb{U}}\left\{L_{f}V_{\rzmk}(\phi)+L_{g}V_{\rzmk}(\phi)u\right\}<-\gamma_{\rzmk}(V_{\rzmk}(x))+\eta_{\rzmk}(\|V_{\rzmk}(\phi)\|) \\
&=-(\gamma_{\rzmk}-\eta_{\rzmk}\circ\rho)(V_{\rzmk}(x))-\eta_{\rzmk}(\rho(V_{\rzmk}(x)))+\eta_{\rzmk}(\|V_{\rzmk}(\phi)\|).
\end{align*}
If $\rho(V_{\rzmk}(x))\geq V_{\rzmk}(\phi(\theta))$ for all $\theta\in[-\Delta, 0]$, then
\begin{align*}
\inf_{u\in\mathbb{U}}\left\{L_{f}V_{\rzmk}(\phi)+L_{g}V_{\rzmk}(\phi)u\right\}&<-(\gamma_{\rzmk}-\eta_{\rzmk}\circ\rho)(V_{\rzmk}(x)).
\end{align*}
which implies that item (ii) of Definition \ref{def-2} holds. Hence, $V_{\rzmk}$ is a CLRF-I with $\gamma_{\cclf}:=\gamma_{\rzmk}-\eta_{\rzmk}\circ\rho$.
\hfill$\blacksquare$
\end{proof}

We emphasize that the CLRF-II $V_{\rzmk}$ cannot be deduced from the CLRF-I $V_{\cclf}$. The reason lies in the case where the Razumikhin condition is not satisfied. To be specific, without the Razumikhin condition, the derivative of $V_{\rzmk}$ is still bounded via \eqref{eqn-8}, whereas the bound of the derivative of $V_{\cclf}$ is unknown and cannot be derived from \eqref{eqn-2} directly. Therefore, the estimate of the CLRF-II is bounded consistently via \eqref{eqn-8}, which is not the case for the CLRF-I. In this respect, it is not easy to apply the CLRF-I directly to design the stabilizing controller explicitly (see, e.g., Theorem \ref{thm-1} and \citep{Jankovic2001control, Pepe2017control}). However, with the CLRF-II, we can follow Sontag's formula to design an explicit controller, which will be further discussed later.

From Definition \ref{def-3}, we define the following set
\begin{align}
\label{eqn-9}
\mathbb{K}_{\lyaR}&:=\{u\in\mathbb{U}: L_{f}V_{\rzmk}(\phi)+L_{g}V_{\rzmk}(\phi)u\leq-\gamma_{\rzmk}(V_{\rzmk}(x))+\eta_{\rzmk}(\|V_{\rzmk}(\phi)\|)\}.
\end{align}
The non-emptiness of the set $\mathbb{K}_{\lyaR}$ follows from the existence of the CLF-II $V_{\rzmk}$. To show this, if $\mathbb{K}_{\lyaR}=\varnothing$, then $L_{f}V_{\rzmk}(\phi)+L_{g}V_{\rzmk}(\phi)u>-\gamma_{\rzmk}(V_{\rzmk}(x))+\eta_{\rzmk}(\|V_{\rzmk}(\phi)\|)$ for all $u\in\mathbb{U}$, and $\inf_{u\in\mathbb{U}}\{L_{f}V_{\rzmk}(\phi)+L_{g}V_{\rzmk}(\phi)u\}>-\gamma_{\rzmk}(V_{\rzmk}(x))+\eta_{\rzmk}(\|V_{\rzmk}(\phi)\|)$, which contradicts with \eqref{eqn-8}. Hence, the set $\mathbb{K}_{\lyaR}$ is non-empty. Based on \citep[Corollary 1]{Pepe2021nonlinear}, any locally Lipschitz controller $u: \mathcal{C}([-\Delta, 0], \mathbb{R}^{n})\rightarrow\mathbb{U}$ with $u(\phi)\in\mathbb{K}_{\lyaR}$ results in the stabilization of the system \eqref{eqn-1}, which further shows the satisfaction of the stabilization objective via the CLRF-II. To derive the controller explicitly, we propose the Razumikhin-type SCP for the system \eqref{eqn-1} based on the CLRF-II.

\begin{definition}
\label{def-4}
Consider the system \eqref{eqn-1} with a CLRF-II $V_{\rzmk}\in\mathcal{C}(\mathbb{R}^{n}, \mathbb{R}^{+})$, the system \eqref{eqn-1} is said to satisfy the \emph{Razumikhin-type small control property (R-SCP)}, if for arbitrary $\varepsilon>0$, there exists $\delta>0$ such that for any nonzero $\phi\in\mathcal{C}([-\Delta, 0], \mathbf{B}(\delta))$ with $\phi(0)=x$, there exists $u\in\mathbf{B}(\varepsilon)$ such that
\begin{align}
\label{eqn-10}
L_{f}V_{\rzmk}(\phi)+L_{g}V_{\rzmk}(\phi)u<-\gamma_{\rzmk}(V_{\rzmk}(x))+\eta_{\rzmk}(\|V_{\rzmk}(\phi)\|).
\end{align}
\end{definition}

Based on the Razumikhin approach, Definition \ref{def-4} is a reformulation of the classic SCP for time-delay systems. With the CLRF-II and the R-SCP, the controller can be derived explicitly such that the closed-loop system \eqref{eqn-1} is GAS.

\begin{theorem}
\label{thm-3}
If the system \eqref{eqn-1} admits a CLRF-II $V_{\rzmk}\in\mathcal{C}(\mathbb{R}^{n}, \mathbb{R}^{+})$ and satisfies the R-SCP, then the controller defined by
\begin{align}
\label{eqn-11}
u(\phi):=\left\{\begin{aligned}
&\kappa(\lambda, \mathfrak{a}_{\rzmk}(\phi), (L_{g}V_{\rzmk}(\phi))^{\top}), &\text{ if }& \phi\neq 0, \\
&0, &\text{ if }&  \phi=0,
\end{aligned}\right.
\end{align}
with $\lambda>0$, $\mathfrak{a}_{\rzmk}(\phi):=L_{f}V_{\rzmk}(\phi)+\gamma_{\rzmk}(V_{\rzmk}(x))-\eta_{\rzmk}(\|V_{\rzmk}(\phi)\|)$, and
\begin{equation}
\label{eqn-12}
\kappa(\lambda, p, q)=\left\{\begin{aligned}
&\frac{p+\sqrt{p^{2}+\lambda\|q\|^{4}}}{-\|q\|^{2}}q, &\text{ if }\ & q\neq 0, \\
&0, &\text{ if }\ &  q=0,
\end{aligned}\right.
\end{equation}
is continuous and ensures the system \eqref{eqn-1} to be GAS.
\end{theorem}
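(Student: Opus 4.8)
The plan is to establish the two assertions of the theorem in turn. \emph{(GAS.)} I would first check that the feedback \eqref{eqn-11}--\eqref{eqn-12} always takes values in the set $\mathbb{K}_{\lyaR}$ of \eqref{eqn-9}; then GAS follows exactly as recorded before Definition \ref{def-4} (applying the Razumikhin theorem in \citep{Hale1993introduction} to the closed-loop differential inequality, with the $\rho>\Id$ constructed in the proof of Proposition \ref{thm-2}). Fix a nonzero $\phi$ with $\phi(0)=x$ and write $p:=\mathfrak{a}_{\rzmk}(\phi)=L_{f}V_{\rzmk}(\phi)+\gamma_{\rzmk}(V_{\rzmk}(x))-\eta_{\rzmk}(\|V_{\rzmk}(\phi)\|)$ and $q:=(L_{g}V_{\rzmk}(\phi))^{\top}$. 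If $q\neq0$, then by \eqref{eqn-12} one has $L_{g}V_{\rzmk}(\phi)u(\phi)=q^{\top}\kappa(\lambda,p,q)=-(p+\sqrt{p^{2}+\lambda\|q\|^{4}})$, hence
\begin{equation*}
L_{f}V_{\rzmk}(\phi)+L_{g}V_{\rzmk}(\phi)u(\phi)=-\gamma_{\rzmk}(V_{\rzmk}(x))+\eta_{\rzmk}(\|V_{\rzmk}(\phi)\|)-\sqrt{p^{2}+\lambda\|q\|^{4}},
\end{equation*}
which lies strictly below $-\gamma_{\rzmk}(V_{\rzmk}(x))+\eta_{\rzmk}(\|V_{\rzmk}(\phi)\|)$. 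If $q=0$, then $u(\phi)=0$, $L_{g}V_{\rzmk}(\phi)=0$, and \eqref{eqn-8} forces $L_{f}V_{\rzmk}(\phi)<-\gamma_{\rzmk}(V_{\rzmk}(x))+\eta_{\rzmk}(\|V_{\rzmk}(\phi)\|)$; in particular $p<0$, a fact I reuse below. In either case $u(\phi)\in\mathbb{K}_{\lyaR}$, and the inequality holds trivially at $\phi=0$ with both sides zero.

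\emph{(Continuity away from the origin.)} Next I would verify that $u$ is continuous on $\{\phi\neq0\}$. The maps $\phi\mapsto L_{f}V_{\rzmk}(\phi)$, $\phi\mapsto L_{g}V_{\rzmk}(\phi)$ and $\phi\mapsto\|V_{\rzmk}(\phi)\|$ are continuous, since $V_{\rzmk}$ is continuously differentiable, $f,g$ are continuous, evaluation $\phi\mapsto\phi(\theta)$ is continuous, and the supremum defining $\|V_{\rzmk}(\phi)\|$ is taken over the compact interval $[-\Delta,0]$; hence $\phi\mapsto(\mathfrak{a}_{\rzmk}(\phi),(L_{g}V_{\rzmk}(\phi))^{\top})$ is continuous. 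The map $(p,q)\mapsto\kappa(\lambda,p,q)$ is real-analytic on $\{q\neq0\}$ and extends continuously to $\{p<0\}$: for $p<0$ one has $p+\sqrt{p^{2}+\lambda\|q\|^{4}}=O(\|q\|^{4})$ as $q\to0$, so $\kappa(\lambda,p,q)=O(\|q\|^{3})\to0$. By the dichotomy in the previous paragraph, for every nonzero $\phi$ the pair $(\mathfrak{a}_{\rzmk}(\phi),(L_{g}V_{\rzmk}(\phi))^{\top})$ lies in $\{q\neq0\}\cup\{p<0\}$, so $u$ is continuous there (and even locally Lipschitz wherever $L_{g}V_{\rzmk}(\phi)\neq0$, which also secures local existence of closed-loop solutions).

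\emph{(Continuity at the origin --- the main obstacle.)} The only genuinely delicate point is continuity at $\phi=0$, and this is precisely where the R-SCP is indispensable. From $\alpha_{1}(|x|)\leq V_{\rzmk}(x)\leq\alpha_{2}(|x|)$ we get $V_{\rzmk}(0)=0$, hence $\frac{\partial V_{\rzmk}}{\partial x}(0)=0$ and therefore $\|(L_{g}V_{\rzmk}(\phi))^{\top}\|\to0$ as $\|\phi\|\to0$. Given $\varepsilon>0$, pick the $\delta>0$ from the R-SCP: for any nonzero $\phi\in\mathcal{C}([-\Delta,0],\mathbf{B}(\delta))$ there is $v\in\mathbf{B}(\varepsilon)$ with $L_{f}V_{\rzmk}(\phi)+L_{g}V_{\rzmk}(\phi)v<-\gamma_{\rzmk}(V_{\rzmk}(x))+\eta_{\rzmk}(\|V_{\rzmk}(\phi)\|)$, i.e. $p+q^{\top}v<0$, which (using $\|v\|<\varepsilon$) yields $p<\varepsilon\|q\|$. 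The standard bounds for Sontag's formula then give $\|u(\phi)\|\leq\sqrt{\lambda}\,\|q\|$ when $p\leq0$, and $\|u(\phi)\|\leq\varepsilon+\sqrt{\varepsilon^{2}+\lambda\|q\|^{2}}$ when $0<p<\varepsilon\|q\|$; after further shrinking $\delta$ so that $\lambda\|q\|^{2}\leq\varepsilon^{2}$ on $\mathbf{B}(\delta)$, we obtain $\|u(\phi)\|\leq(1+\sqrt{2})\,\varepsilon$ for all such $\phi$ (with $u(\phi)=0$ whenever $L_{g}V_{\rzmk}(\phi)=0$). Since $\varepsilon$ is arbitrary, $u(\phi)\to0=u(0)$, which completes the continuity proof. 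I expect this reconciliation of the two regimes --- combining the sign bound $p<\varepsilon\|q\|$ provided by the R-SCP with $\|q\|\to0$ --- to be the only substantive estimate; everything else is the usual bookkeeping behind the universal formula, transplanted onto the Razumikhin inequality \eqref{eqn-8}.
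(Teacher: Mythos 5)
Your proposal is correct and follows essentially the same route as the paper: plug the formula \eqref{eqn-11}--\eqref{eqn-12} into $D^{+}V_{\rzmk}$ to get $L_{f}V_{\rzmk}(\phi)+L_{g}V_{\rzmk}(\phi)u=-\gamma_{\rzmk}(V_{\rzmk}(x))+\eta_{\rzmk}(\|V_{\rzmk}(\phi)\|)-\sqrt{\mathfrak{a}_{\rzmk}^{2}+\lambda\|L_{g}V_{\rzmk}\|^{4}}$, conclude GAS from the resulting differential inequality, and prove continuity at the origin by combining the R-SCP bound $\mathfrak{a}_{\rzmk}(\phi)<\varepsilon\|L_{g}V_{\rzmk}(\phi)\|$ with the smallness of $\|L_{g}V_{\rzmk}(\phi)\|$ near $\phi=0$. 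The only (harmless) deviations are that you invoke Hale's Razumikhin theorem through the $\rho$ of Proposition \ref{thm-2} where the paper cites the nonlinear Halanay result of \citep{Pepe2021nonlinear}, and you are somewhat more explicit than the paper about continuity at nonzero $\phi$ with $L_{g}V_{\rzmk}(\phi)=0$ (using $\mathfrak{a}_{\rzmk}(\phi)<0$ there) and about why the gradient of $V_{\rzmk}$ vanishes at the origin.
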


\begin{proof}
We first prove the stabilization of the system \eqref{eqn-1} under the controller \eqref{eqn-11}. From \eqref{eqn-11}, if $L_{g}V_{\rzmk}(\phi)\equiv0$, then $u\equiv0$ and from \eqref{eqn-8}, we have
\begin{align*}
&L_{f}V_{\rzmk}(\phi)+L_{g}V_{\rzmk}(\phi)u+\gamma_{\rzmk}(V_{\rzmk}(x))-\eta_{\rzmk}(\|V_{\rzmk}(\phi)\|) \\
&=L_{f}V_{\rzmk}(\phi)+\gamma_{\rzmk}(V_{\rzmk}(x))-\eta_{\rzmk}(\|V_{\rzmk}(\phi)\|)<0.
\end{align*}
For the case $L_{g}V_{\rzmk}(\phi)\neq0$, we define $\mathfrak{a}_{\rzmk}(\phi):=L_{f}V_{\rzmk}(\phi)+\gamma_{\rzmk}(V_{\rzmk}(x))$ and $\mathfrak{b}_{\rzmk}(\phi):=L_{g}V_{\rzmk}(\phi)$. From \eqref{eqn-11}, we have
\begin{align*}
&L_{f}V_{\rzmk}(\phi)+L_{g}V_{\rzmk}(\phi)u+\gamma_{\rzmk}(V_{\rzmk}(x))-\eta_{\rzmk}(\|V_{\rzmk}(\phi)\|) \\
&=\mathfrak{a}_{\rzmk}(\phi)-\mathfrak{b}_{\rzmk}(\phi)\frac{\mathfrak{a}_{\rzmk}(\phi)\mathfrak{b}^{\top}_{\rzmk}(\phi)}{\|\mathfrak{b}_{\rzmk}(\phi)\|^{2}}
-\mathfrak{b}_{\rzmk}(\phi)\frac{\sqrt{\mathfrak{a}^{2}_{\rzmk}(\phi)+\lambda\|\mathfrak{b}_{\rzmk}(\phi)\|^{4}}}{\|\mathfrak{b}_{\rzmk}(\phi)\|^{2}}\mathfrak{b}^{\top}_{\rzmk}(\phi) \\
&=-\sqrt{\mathfrak{a}^{2}_{\rzmk}(\phi)+\lambda\|\mathfrak{b}_{\rzmk}(\phi)\|^{4}}<0,
\end{align*}
and thus $L_{f}V_{\rzmk}(\phi)+L_{g}V_{\rzmk}(\phi)u<-\gamma_{\rzmk}(V_{\rzmk}(x(t)))+\eta_{\rzmk}(\|V_{\rzmk}(\phi)\|)$.

Summarizing the above analysis, we have
\begin{align}
\label{eqn-13}
\dot{V}_{\rzmk}(x(t))<-\gamma_{\rzmk}(V_{\rzmk}(x(t)))+\eta_{\rzmk}(\|V_{\rzmk}(\phi)\|), \quad \forall t\in\mathbb{R}^{+}.
\end{align}
From Corollary 1 in \citep{Pepe2021nonlinear} and item (i) in Definition \ref{def-2}, there exists $\beta\in\mathcal{KL}$ such that $|x(t)|\leq\beta(\|\xi\|, t)$ for all $t\in\mathbb{R}^{+}$, which implies that the system \eqref{eqn-1} is GAS.

Next, we show the continuity of the controller. From $V_{\rzmk}\in\mathcal{C}(\mathbb{R}^{n}, \mathbb{R}^{+})$ and the continuity of $\mathfrak{a}_{\rzmk}(\phi)$ and $\mathfrak{b}_{\rzmk}(\phi)$, we can deduce the continuity of the controller in any region away from the origin. In the following, we only need to show the continuity of the controller at the origin. From the R-SCP, for arbitrary $\varepsilon>0$, there exists $\delta_{1}>0$ such that for nonzero $\phi\in\mathcal{C}([-\Delta, 0], \mathbf{B}(\delta_{1}))$ with $\phi(0)=x$, there exists $u\in\mathbf{B}(\varepsilon)$ such that
\begin{align}
\label{eqn-14}
L_{f}V_{\rzmk}(\phi)+L_{g}V_{\rzmk}(\phi)u&<-\gamma_{\rzmk}(V_{\rzmk}(x))+\eta_{\rzmk}(\|V_{\rzmk}(\phi)\|).
\end{align}
In addition, since $V_{\rzmk}$ is continuously differentiable and $g$ in \eqref{eqn-1} is locally Lipschitz, there exists $\delta_{2}\in\mathbb{R}^{+}$ with $\delta_{2}\neq\delta_{1}$ such that for nonzero $\phi\in\mathcal{C}([-\Delta, 0], \mathbf{B}(\delta_{2}))$,
\begin{align}
\label{eqn-15}
\|L_{g}V_{\rzmk}(\phi)\|&\leq\varepsilon.
\end{align}
Let $\delta:=\min\{\delta_{1}, \delta_{2}\}$, and then \eqref{eqn-14}-\eqref{eqn-15} hold for any nonzero $\phi\in\mathcal{C}([-\Delta, 0], \mathbf{B}(\delta))$.

From \eqref{eqn-11}, $u\equiv0$ if $\phi=0$. Hence, we only need to show that $\|u\|\leq\bar{\varepsilon}:=(2+\lambda)\varepsilon$ for all nonzero $\phi\in\mathcal{C}([-\Delta, 0], \mathbf{B}(\delta))$, where $\lambda>0$ is given in \eqref{eqn-11}. For this purpose, we consider the following two cases. If $L_{g}V_{\rzmk}(\phi)=0$, then $u=0$ from \eqref{eqn-12}, and thus $\|u\|\leq\bar{\varepsilon}$ for all nonzero $\phi\in\mathcal{C}([-\Delta, 0], \mathbf{B}(\delta))$. If $L_{g}V_{\rzmk}(\phi)\neq0$, then we have from \eqref{eqn-14} that for any nonzero $\phi\in\mathcal{C}([-\Delta, 0], \mathbf{B}(\delta))$,
\begin{align*}
L_{f}V_{\rzmk}(\phi)+\gamma_{\rzmk}(V_{\rzmk}(x))<\varepsilon\|L_{g}V_{\rzmk}(\phi)\|+\eta_{\rzmk}(\|V_{\rzmk}(\phi)\|).
\end{align*}
which implies $|\mathfrak{a}_{\rzmk}(\phi)|<\varepsilon\|L_{g}V_{\rzmk}(\phi)\|$. Furthermore, from \eqref{eqn-12} and \eqref{eqn-15}, we have that for any nonzero $\phi\in\mathcal{C}([-\Delta, 0], \mathbf{B}(\delta))$,
\begin{align*}
\|u(\phi)\|&=\left\|\frac{\mathfrak{a}_{\rzmk}(\phi)+\sqrt{\mathfrak{a}^{2}_{\rzmk}(\phi)+\lambda\|L_{g}V_{\rzmk}(\phi)\|^{4}}}{\|L_{g}V_{\rzmk}(\phi)\|^{2}}L_{g}V_{\rzmk}(\phi)\right\| \\
&\leq\frac{2|\mathfrak{a}_{\rzmk}(\phi)|+\sqrt{\lambda}\|L_{g}V_{\rzmk}(\phi)\|^{2}}{\|L_{g}V_{\rzmk}(\phi)\|}  \\
&\leq(2+\sqrt{\lambda})\varepsilon=:\bar{\varepsilon}.
\end{align*}
Therefore, if the R-SCP is satisfied, then the control input is bounded and the bound is a function of $\varepsilon\in\mathbb{R}^{+}$. Since $\varepsilon\in\mathbb{R}^{+}$ can be chosen to be arbitrarily small and $\bar{\varepsilon}\rightarrow0$ as $\varepsilon\rightarrow0$, we conclude that the controller is continuous at the origin. Therefore, the proof is completed.
\hfill$\blacksquare$
\end{proof}

Note that the controller \eqref{eqn-11} is continuous for both the case of nonzero $\phi$ and the case where $\phi(\theta)=0$ for all $\theta\in[-\Delta, 0]$. In addition, if $g(0)=0$, then $L_{g}V_{\rzmk}(0)=0$ and thus $u(0)=0$ from \eqref{eqn-12}, which implies that the case $\phi=0$ can be removed from \eqref{eqn-11}. Finally, the controller \eqref{eqn-11} follows Sontag's formula \citep{Sontag1989universal} and extends the results in \citep{Sontag1989universal, Romdlony2016stabilization, Wu2019control} from the delay-free case to the time-delay case.

\subsection{Razumikhin-type Control Barrier Functions}
\label{subsec-cbrf}

To investigate the safety of the system \eqref{eqn-1}, we propose control barrier-Razumikhin functions (CBRFs) in this subsection. To this end, we define a function $h\in\mathcal{C}(\mathbb{R}^{n}, \mathbb{R})$ to connect with the safe set $\mathbb{C}_{\rzmk}$, where the subscript `$\rzmk$' is to show that it is Razumikhin-type. That is, the set $\mathbb{C}_{\rzmk}$ is defined as
\begin{align}
\label{eqn-16}
\mathbb{C}_{\rzmk}&:=\{x\in\mathbb{R}^{n}: h_{\rzmk}(x)\geq0\}, \\
\label{eqn-17}
\partial\mathbb{C}_{\rzmk}&:=\{x\in\mathbb{R}^{n}: h_{\rzmk}(x)=0\}, \\
\label{eqn-18}
\inte(\mathbb{C}_{\rzmk})&:=\{x\in\mathbb{R}^{n}: h_{\rzmk}(x)>0\}.
\end{align}
Assume that $\inte(\mathbb{C}_{\rzmk})\neq\varnothing$ and $\overbar{\inte(\mathbb{C}_{\rzmk})}=\mathbb{C}_{\rzmk}$, which mean that the set $\mathbb{C}_{\rzmk}$ is nonempty and has no isolated points. In the following, two types of CBRFs are presented.

\begin{definition}
\label{def-5}
Consider the system \eqref{eqn-1} and the set $\mathbb{C}_{\rzmk}\subset\mathbb{R}^{n}$ in \eqref{eqn-16}, a function $B_{\rzmk}\in\mathcal{C}(\inte(\mathbb{C}_{\rzmk}), \mathbb{R})$ is called an \emph{R-type control barrier-Razumikhin function (R-CBRF)} for the set $\mathbb{C}_{\rzmk}$, if
\begin{enumerate}[(i)]
\item there exist $\alpha_{1}, \alpha_{2}\in\mathcal{K}$ such that for all $x\in\inte(\mathbb{C}_{\rzmk})$,
\begin{align}
\label{eqn-19}
\alpha_{1}(h_{\rzmk}(x))\leq\frac{1}{B_{\rzmk}(x)}\leq\alpha_{2}(h_{\rzmk}(x));
\end{align}
\item there exist $\gamma_{\rbrf}, \eta_{\rbrf}\in\overbar{\mathcal{K}}$ such that $\gamma_{\rbrf}-\eta_{\rbrf}\in\mathcal{K}$ and for any nonzero $\phi\in\mathcal{C}([-\Delta, 0], \inte(\mathbb{C}_{\rzmk}))$ with $\phi(0)=x$,
\begin{align}
\label{eqn-20}
&\inf_{u\in\mathbb{U}}\left\{L_{f}B_{\rzmk}(\phi)+L_{g}B_{\rzmk}(\phi)u\right\}<\gamma_{\rbrf}(h_{\rzmk}(x))-\eta_{\rbrf}(\|h_{\rzmk}(\phi)\|).
\end{align}
\end{enumerate}
\end{definition}

Definition \ref{def-5} is based on the function $h_{\rzmk}$ implicitly, and the relation between the functions $B_{\rzmk}$ and $h_{\rzmk}$ is given in \eqref{eqn-19}. The following CBRF is defined from the function $h_{\rzmk}$ explicitly.

\begin{definition}
\label{def-6}
Consider the system \eqref{eqn-1}, a function $h_{\rzmk}\in\mathcal{C}(\mathbb{R}^{n}, \mathbb{R})$ is called a \emph{Z-type control barrier-Razumikhin function (Z-CBRF)} for the set $\mathbb{C}_{\rzmk}\subset\mathbb{R}^{n}$ in \eqref{eqn-16}, if there exist a set $\mathbb{X}_{\rzmk}\subset\mathbb{R}^{n}$ with $\mathbb{C}_{\rzmk}\subseteq\mathbb{X}_{\rzmk}$ and $\gamma_{\zbrf}, \eta_{\zbrf}\in\overbar{\mathcal{K}}$ such that for any nonzero $\phi\in\mathcal{C}([-\Delta, 0], \mathbb{X}_{\rzmk})$ with $\phi(0)=x$,
\begin{align}
\label{eqn-21}
&\sup_{u\in\mathbb{U}}\left\{L_{f}h_{\rzmk}(\phi)+L_{g}h_{\rzmk}(\phi)u\right\}>-\gamma_{\zbrf}(h_{\rzmk}(x))+\eta_{\zbrf}(\|h_{\rzmk}(\phi)\|).
\end{align}
\end{definition}

\begin{remark}
\label{rmk-1}
The proposed CBRFs extend the CBF from the delay-free case \citep{Ames2016control} to the time-delay case. Similar to the CLRF-II, the effects of time delays are transformed into the disturbance-like items in \eqref{eqn-20}-\eqref{eqn-21}. Comparing with the Z-CBRF, the R-CBRF is not based on the function $h_{\rzmk}$ explicitly. Hence, the R-CBRF provides more potential flexibilities in its construction \citep{Lindemann2018control, Panagou2015distributed}. On the other hand, similar to the delay-free case \citep{Ames2016control}, the R-CBRF and Z-CBRF can be transformed to each other under some conditions. Specifically, if $\mathbb{C}_{\rzmk}\equiv\mathbb{X}_{\rzmk}$ and $\dot{h}_{\rzmk}(x)>0$ for all $x\in\partial\mathbb{C}_{\rzmk}$, then $h_{\rzmk}$ is a Z-CBRF if and only if $h^{-1}_{\rzmk}$ is an R-CBRF, which can be verified via the fact that $-\inf\{\mathcal{A}\}=\sup\{-\mathcal{A}\}$ for any compact set $\mathcal{A}\subseteq\mathbb{R}$.
\hfill $\square$
\end{remark}

To show the guarantee of the safety objective via the CBRFs in Definitions \ref{def-5} and \ref{def-6}, we define the following sets:
\begin{align}
\label{eqn-22}
\mathbb{K}_{\rbrf}&:=\left\{u\in\mathbb{U}: L_{f}B_{\rzmk}(\phi)+L_{g}B_{\rzmk}(\phi)u<\gamma_{\zbrf}(h_{\rzmk}(x))\right. \nonumber \\
&\quad \left. -\eta_{\zbrf}(\|V_{\rzmk}(\phi)\|) \text{ or } u(0)=0\right\}, \\
\label{eqn-23}
\mathbb{K}_{\zbrf}&:=\left\{u\in\mathbb{U}: L_{f}h_{\rzmk}(\phi)+L_{g}h_{\rzmk}(\phi)u>-\gamma_{\rbrf}(h_{\rzmk}(x))  \right. \nonumber \\
&\quad \left.+\eta_{\rbrf}(\|h_{\rzmk}(\phi)\|) \text{ or } u(0)=0 \right\}.
\end{align}
The non-emptiness of these two sets follows from similar arguments as those in Section \ref{subsec-clrf}, and thus is omitted here. With these two sets, the following theorem shows that the control inputs from the sets $\mathbb{K}_{\zbrf}$ and $\mathbb{K}_{\rbrf}$ result in the forward invariant property of the safe set $\mathbb{C}_{\rzmk}\subset\mathbb{R}^{n}$.

\begin{theorem}
\label{thm-4}
Consider the system \eqref{eqn-1} with the safe set $\mathbb{C}_{\rzmk}\subset\mathbb{R}^{n}$ in \eqref{eqn-16},
\begin{enumerate}[(i)]
  \item if the system \eqref{eqn-1} admits an R-CBRF $B_{\rzmk}\in\mathcal{C}(\inte(\mathbb{C}_{\rzmk}), \mathbb{R})$, and there exists a Lipschitz continuous functional $u: \mathcal{C}([-\Delta, 0], \inte(\mathbb{C}_{\rzmk}))\rightarrow\mathbb{U}$ such that $u(\phi)\in\mathbb{K}_{\rbrf}$, then the controller $u(\phi)$ ensures the set $\inte(\mathbb{C}_{\rzmk})$ to be forward invariant;

  \item  if the system \eqref{eqn-1} admits a Z-CBRF $h_{\rzmk}\in\mathcal{C}(\mathbb{X}_{\rzmk}, \mathbb{R})$, and there exists a Lipschitz continuous functional $u: \mathcal{C}([-\Delta, 0], \mathbb{X}_{\rzmk})$ $\rightarrow\mathbb{U}$ such that $u(\phi)\in\mathbb{K}_{\zbrf}$, then the controller $u(\phi)$ ensures the set $\mathbb{C}_{\rzmk}$ to be forward invariant.
\end{enumerate}
\end{theorem}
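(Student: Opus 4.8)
The plan is to prove both items by the same mechanism: show that the relevant barrier quantity cannot reach the boundary of its admissible range in finite time, using the ``disturbance-like'' Razumikhin terms in the same way that \eqref{eqn-13} was exploited in the proof of Theorem~\ref{thm-3}. For item (i), I would argue by contradiction on forward invariance of $\inte(\mathbb{C}_{\rzmk})$. Suppose a trajectory starting from $\xi\in\mathcal{C}([-\Delta,0],\mathbb{X}_{0})$ with $\mathbb{X}_{0}\subseteq\inte(\mathbb{C}_{\rzmk})$ leaves $\inte(\mathbb{C}_{\rzmk})$; by continuity of $h_{\rzmk}$ there is a first time $t^{\ast}>0$ with $h_{\rzmk}(x(t^{\ast}))=0$, equivalently $B_{\rzmk}(x(t))\to+\infty$ as $t\uparrow t^{\ast}$ by the squeeze \eqref{eqn-19} with $\alpha_{1},\alpha_{2}\in\mathcal{K}$. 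On $[0,t^{\ast})$ the trajectory stays in $\inte(\mathbb{C}_{\rzmk})$, so $u(\phi)\in\mathbb{K}_{\rbrf}$ applies and along the solution
\begin{align*}
\dot{B}_{\rzmk}(x(t))=L_{f}B_{\rzmk}(x_{t})+L_{g}B_{\rzmk}(x_{t})u(x_{t})<\gamma_{\rbrf}(h_{\rzmk}(x(t)))-\eta_{\rbrf}(\|h_{\rzmk}(x_{t})\|).
\end{align*}
Since $h_{\rzmk}(x(t))\geq 0$ throughout and $\gamma_{\rbrf},\eta_{\rbrf}\in\overbar{\mathcal{K}}$ with $\gamma_{\rbrf}-\eta_{\rbrf}\in\mathcal{K}$, the right-hand side is bounded above by $\gamma_{\rbrf}(h_{\rzmk}(x(t)))$, which in turn (via \eqref{eqn-19}) is bounded above by $\gamma_{\rbrf}(\alpha_{1}^{-1}(1/B_{\rzmk}(x(t))))$, a continuous function of $B_{\rzmk}$ that stays bounded as $B_{\rzmk}\to+\infty$ — indeed it tends to $\gamma_{\rbrf}(0^{+})$. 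Hence $\dot{B}_{\rzmk}$ is bounded on $[0,t^{\ast})$, so $B_{\rzmk}(x(t))$ cannot blow up on a finite interval, contradicting $B_{\rzmk}(x(t))\to+\infty$. Therefore $h_{\rzmk}(x(t))>0$ for all $t\geq 0$ and $\inte(\mathbb{C}_{\rzmk})$ is forward invariant; the initial segment on $[-\Delta,0]$ lies in $\inte(\mathbb{C}_{\rzmk})$ by hypothesis, completing item (i).

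For item (ii), I would work directly with $h_{\rzmk}$ and the Nagumo-type boundary argument adapted to the delay setting. Again argue by contradiction: if the trajectory leaves $\mathbb{C}_{\rzmk}$, let $t^{\ast}$ be the first time $h_{\rzmk}(x(t^{\ast}))=0$, so $h_{\rzmk}(x(t))>0$ on $[0,t^{\ast})$ and $\dot{h}_{\rzmk}(x(t^{\ast}))\leq 0$. Since $u(\phi)\in\mathbb{K}_{\zbrf}$, for all $t$ with $x_{t}$ taking values in $\mathbb{X}_{\rzmk}$ (which holds near $t^{\ast}$ because $\mathbb{C}_{\rzmk}\subseteq\mathbb{X}_{\rzmk}$ and $x(t)$ is close to $\partial\mathbb{C}_{\rzmk}$) we have
\begin{align*}
\dot{h}_{\rzmk}(x(t))>-\gamma_{\zbrf}(h_{\rzmk}(x(t)))+\eta_{\zbrf}(\|h_{\rzmk}(x_{t})\|).
\end{align*}
Evaluating at $t=t^{\ast}$: $h_{\rzmk}(x(t^{\ast}))=0$ gives $-\gamma_{\zbrf}(0)=0$, and $\eta_{\zbrf}(\|h_{\rzmk}(x_{t^{\ast}})\|)\geq 0$ since the delayed values on $[t^{\ast}-\Delta,t^{\ast}]$ satisfy $h_{\rzmk}\geq 0$ (they lie in the window where the trajectory has not yet left $\mathbb{C}_{\rzmk}$, after handling the initial segment which is in $\mathbb{C}_{\rzmk}$ by assumption). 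Hence $\dot{h}_{\rzmk}(x(t^{\ast}))>0$, contradicting $\dot{h}_{\rzmk}(x(t^{\ast}))\leq 0$. Thus $h_{\rzmk}(x(t))\geq 0$ for all $t\geq 0$ and $\mathbb{C}_{\rzmk}$ is forward invariant.

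The main obstacle I anticipate is the handling of the delayed norm terms $\|h_{\rzmk}(\phi)\|=\sup_{\theta\in[-\Delta,0]}|h_{\rzmk}(x(t+\theta))|$ at and just before the candidate exit time. In item (ii) one must be careful that on the window $[t^{\ast}-\Delta,t^{\ast}]$ the values $h_{\rzmk}(x(\cdot))$ are indeed nonnegative so that $\eta_{\zbrf}(\|h_{\rzmk}(x_{t^{\ast}})\|)\ge 0$ contributes in the favorable direction — this relies on $t^{\ast}$ being the \emph{first} exit time together with $\mathbb{X}_{0}\subseteq\mathbb{C}_{\rzmk}$ for the part of the window with negative time argument. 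A second subtlety is that $\dot{h}_{\rzmk}$ and $\dot{B}_{\rzmk}$ should be read as upper Dini derivatives $D^{+}$ (as set up in the Preliminaries), since $u$ is only Lipschitz; the contradiction arguments go through verbatim with $D^{+}$ in place of $\dot{}$, because a nonnegative lower bound on $D^{+}h_{\rzmk}$ at a first zero still forbids the crossing, and a finite upper bound on $D^{+}B_{\rzmk}$ still precludes finite-time blow-up by the comparison lemma for Dini derivatives. Finally, the continuity/existence of the trajectory on the maximal interval is guaranteed by the standing local-Lipschitz assumptions on $f,g$ together with Lipschitz $u$, so no separate existence argument is needed.
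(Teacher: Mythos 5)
Your proposal is correct in substance, but it takes a genuinely different route from the paper's. The paper works with the scalar quantities $\Theta_{\rzmk}=1/B_{\rzmk}$ (item (i)) and $h_{\rzmk}$ (item (ii)), derives delayed differential inequalities for them, and then invokes the appendix comparison result (Lemma \ref{lem-A1}) together with the cited comparison lemma of Khalil to compare against the delay-free equations \eqref{eqn-26} and \eqref{eqn-29}; this produces explicit class-$\mathcal{KL}$ lower bounds, e.g. $h_{\rzmk}(x(t))\geq\alpha^{-1}_{2}(\beta_{1}(\alpha_{1}(h_{\rzmk}(x(0))),t))$, whose positivity gives forward invariance. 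You instead argue by contradiction at a first exit time: in (i) the squeeze \eqref{eqn-19} forces $B_{\rzmk}$ to blow up at the boundary, while membership in $\mathbb{K}_{\rbrf}$ (after discarding the nonnegative term $\eta_{\rbrf}(\|h_{\rzmk}(\phi)\|)$) keeps $\dot{B}_{\rzmk}$ bounded on the finite interval $[0,t^{\ast})$; in (ii) a Nagumo-type sign comparison at the first crossing gives $\dot{h}_{\rzmk}(x(t^{\ast}))\leq 0$ against the strict bound $\dot{h}_{\rzmk}(x(t^{\ast}))>-\gamma_{\zbrf}(0)+\eta_{\zbrf}(\|h_{\rzmk}(x_{t^{\ast}})\|)\geq 0$. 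Your argument is more elementary: it needs no comparison lemma and no $\mathcal{KL}$ machinery, and in fact only uses that the Razumikhin terms $\eta(\|\cdot\|)$ are nonnegative (automatic, since their argument is a sup-norm, so your care about the sign of the delayed values of $h_{\rzmk}$ is not even needed). The price is that it yields only bare invariance, whereas the paper's route additionally establishes a quantitative $\mathcal{KL}$ floor on $h_{\rzmk}$ along solutions. Two small points to tighten: in (ii), since $\mathbb{C}_{\rzmk}$ is closed the initial data may lie on $\partial\mathbb{C}_{\rzmk}$, so take $t^{\ast}:=\inf\{t>0:\ h_{\rzmk}(x(t))<0\}$ rather than the first zero of $h_{\rzmk}$ (the sign argument is unchanged); and, exactly as in the paper's own proof, you tacitly assume the solution does not have finite escape time before the putative boundary crossing.
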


\begin{proof}
(\textbf{First Claim.}) Given the R-CBRF $B_{\rzmk}: \inte(\mathbb{C}_{\rzmk})\rightarrow\mathbb{R}$, we define the function $\Theta_{\rzmk}(x):=1/B_{\rzmk}(x)$, and then have
\begin{align}
\label{eqn-24}
\dot{\Theta}_{\rzmk}(x)&=-(B_{\rzmk}(x))^{-2}\dot{B}_{\rzmk}(x)=-\Theta^{2}_{\rzmk}(x)(L_{f}B_{\rzmk}(\phi)+L_{g}B_{\rzmk}(\phi)u) \nonumber\\
&>-\Theta^{2}_{\rzmk}(x)(\gamma_{\rbrf}(h_{\rzmk}(x))-\eta_{\rbrf}(\|h_{\rzmk}(\phi)\|) \nonumber\\
&>-\Theta^{2}_{\rzmk}(x)[\gamma_{\rbrf}(\alpha^{-1}_{1}(\Theta_{\rzmk}(x)))-\eta_{\rbrf}(\alpha^{-1}_{2}(\|\Theta_{\rzmk}(\phi)\|))].
\end{align}
To emphasize the time argument, let $\Theta_{\rzmk}(t):=\Theta_{\rzmk}(x(t))$. Define $\varphi(\Theta_{\rzmk}(t)):=\Theta^{2}_{\rzmk}(t)\gamma_{\rbrf}(\alpha^{-1}_{1}(\Theta_{\rzmk}(t)))$ and $\psi(\Theta_{\rzmk}(t+\theta)):=\Theta^{2}_{\rzmk}(t)\eta_{\rbrf}(\alpha^{-1}_{2}(\Theta_{\rzmk}(t+\theta)))$. From \eqref{eqn-24}, we consider the following inequality and equation:
\begin{align}
\label{eqn-25}
\dot{\Theta}_{\rzmk}(t)&>-\varphi(\Theta_{\rzmk}(t))+\sup_{\theta\in[-\Delta, 0]}\psi(\Theta_{\rzmk}(t+\theta)), \\
\label{eqn-26}
\dot{\Pi}_{\rzmk}(t)&=-\varphi(\Pi_{\rzmk}(t)),
\end{align}
where $t\geq0$ and $\Theta_{\rzmk}(\theta)=\Gamma_{\rzmk}(\theta)$ for all $\theta\in[-\Delta, 0]$. From Lemma 4.4 in \citep{Khalil2002nonlinear}, there exists $\beta_{1}\in\mathcal{KL}$ such that the solution of \eqref{eqn-26} satisfies $\Pi_{\rzmk}(t)=\beta_{1}(\Pi_{\rzmk}(0), t)$  for all $t\geq0$. Using Lemma \ref{lem-A1} in \ref{sec-appendix} and from \eqref{eqn-25}-\eqref{eqn-26}, we have
\begin{align*}
\Theta_{\rzmk}(t)&\geq\Pi_{\rzmk}(t)=\beta_{1}(\Pi_{\rzmk}(0), t), \quad \forall t\geq0,
\end{align*}
combining which with the definition of $\Theta_{\rzmk}(t)$ yields
\begin{align}
\label{eqn-27}
\frac{1}{B_{\rzmk}(x(t))}&\geq\beta_{1}\left(\frac{1}{B_{\rzmk}(x(0))}, t\right), \quad  \forall t\geq0.
\end{align}
From \eqref{eqn-19} and \eqref{eqn-27}, we obtain that for all $t\geq0$,
\begin{align*}
h_{\rzmk}(x(t))\geq\alpha^{-1}_{2}(\beta_{1}(\alpha_{1}(h_{\rzmk}(x(0))), t)).
\end{align*}
That is, for all $t>0$, $h_{\rzmk}(x(t))>0$, and thus $\inte(\mathbb{C}_{\rzmk})$ is forward invariant.

(\textbf{Second Claim.}) From the Z-CBRF and $u\in\mathbb{K}_{\zbrf}$, there exists a set $\mathbb{X}_{\rzmk}\subset\mathbb{R}^{n}$ such that $\mathbb{C}_{\rzmk}\subset\mathbb{X}_{\rzmk}$ and for all $\phi\in\mathcal{C}([-\Delta, 0], \mathbb{X}_{\rzmk})$ with $\phi(0)=x$,
\begin{align*}
&L_{f}h_{\rzmk}(\phi)+L_{g}h_{\rzmk}(\phi)u>-\gamma_{\zbrf}(h_{\rzmk}(x))+\eta_{\zbrf}(\|h_{\rzmk}(\phi)\|),
\end{align*}
which implies that $\dot{h}_{\rzmk}(x)\geq-\gamma_{\zbrf}(h_{\rzmk}(x))+\eta_{\zbrf}(\|h_{\rzmk}(\phi)\|)$. Let $h_{\rzmk}(t):=h_{\rzmk}(x(t))$ to emphasize the time argument. Consider the following inequality and equation:
\begin{align}
\label{eqn-28}
\dot{h}_{\rzmk}(t)&>-\gamma_{\zbrf}(h_{\rzmk}(t))+\sup_{\theta\in[-\Delta, 0]}\eta_{\zbrf}(h_{\rzmk}(t+\theta)), \\
\label{eqn-29}
\dot{\Phi}_{\rzmk}(t)&=-\gamma_{\zbrf}(\Phi_{\rzmk}(t)),
\end{align}
where $t>0$ and $h_{\rzmk}(\theta)=\Phi_{\rzmk}(\theta)$ for all $\theta\in[-\Delta, 0]$. From Lemma 4.4 of \citep{Khalil2002nonlinear}, there exists $\beta_{2}\in\mathcal{KL}$ such that the solution to \eqref{eqn-29} satisfies $\Phi_{\rzmk}(t)=\beta_{2}(\Phi_{\rzmk}(0), t)$ for all $t\geq0$. From \eqref{eqn-28}-\eqref{eqn-29} and using Lemma \ref{lem-A1} in \ref{sec-appendix}, we have
\begin{align*}
h_{\rzmk}(x)\geq\Phi_{\rzmk}(t)=\beta_{2}(h_{\rzmk}(x(0)), t), \quad \forall t\geq0, \forall x(0)\in\mathbb{C}_{\rzmk}.
\end{align*}
Hence, we conclude that $\mathbb{C}_{\rzmk}$ is forward invariant.
\hfill$\blacksquare$
\end{proof}

\section{Krasovskii-type Control Functionals}
\label{sec-Krasovskiitype}

Different from the Razumikhin approach where the construction of control functions does not depend on the time-delay trajectory, another type of control functions for time-delay systems is based on the Krasovskii approach, which is related to the time-delay trajectory \citep{Jankovic2001control, Pepe2014stabilization, Jankovic2018control}. Using the Krasovskii approach, we propose control Lyapunov-Krasovskii functionals (CLKFs) and control barrier-Krasovskii functionals (CBKFs) for time-delay systems in this section.

\subsection{Control Lyapunov-Krasovskii Functional}
\label{subsec-clkf}

Before defining CLKFs, we first recall the class of smoothly separable functionals from \citep{Pepe2014stabilization}.

\begin{definition}
\label{def-7}
A functional $V: \mathcal{C}([-\Delta, 0], \mathbb{R}^{n})\rightarrow\mathbb{R}$ is \emph{smoothly separable}, if there exist a function $V_{1}\in\mathcal{C}(\mathbb{R}^{n}, \mathbb{R}^{+})$, a locally Lipschitz functional $V_{2}: \mathcal{C}([-\Delta, 0], \mathbb{R}^{n})\rightarrow\mathbb{R}^{+}$ and $\alpha_{1}, \alpha_{2}\in\mathcal{K}_{\infty}$ such that, for all $\phi\in\mathcal{C}([-\Delta, 0], \mathbb{R}^{n})$,
\begin{align}
\label{eqn-30}
V(\phi)&:=V_{1}(\phi(0))+V_{2}(\phi), \\
\label{eqn-31}
\alpha_{1}(|\phi(0)|)&\leq V_{1}(\phi(0))\leq\alpha_{2}(|\phi(0)|).
\end{align}
\end{definition}

From Definition \ref{def-7}, any smoothly separable functional is locally Lipschitz. In the following, we recall the definition of invariant differentiability from \citep{Kim1999functional}.

\begin{definition}
\label{def-8}
A smoothly separable functional $V:\mathcal{C}([-\Delta,$ $0], \mathbb{R}^{n})\rightarrow\mathbb{R}^{+}$ is \emph{invariantly differentiable (i-differentiable)}, if $V(\phi)=V_{1}(\phi(0))+V_{2}(\phi)$ and for all $\phi\in\mathcal{C}([-\Delta, 0], \mathbb{R}^{n})$,
\begin{enumerate}[(i)]
  \item both $\dot{V}_{1}(\phi(0))$ and $D^{+}V_{2}(\phi)$ exist;
  \item $D^{+}V_{2}(\phi)$ is invariant with respect to $\phi\in\mathcal{C}([-\Delta, 0], \mathbb{R}^{n})$, that is, $D^{+}V_{2}(x_{0})$ is the same for all $x_{t}\in\mathcal{C}([-\Delta, 0], \mathbb{R}^{n})$;
  \item  for all $x, z\in\mathbb{R}^{n}, x_{t}\in\mathcal{C}([-\Delta, 0], \mathbb{R}^{n})$ and $l\geq0$, $V(x+z, x_{t+l})-V(x, x_{t}):=\dot{V}_{1}(x)z+D^{+}V_{2}(x_{t})l+o(\sqrt{|z|^{2}+l^{2}})$,
where $\lim_{s\rightarrow0^{+}}o(s)/s=0$.
\end{enumerate}
In addition, if $D^{+}V_{2}(\phi)$ is continuous, then $V$ is said to be \emph{continuously i-differentiable}.
\end{definition}

From Definition \ref{def-8}, we have that $D^{+}V(\phi):=L_{f}V_{1}(\phi)+L_{g}V_{1}(\phi)u+D^{+}V_{2}(\phi)$ for all $\phi\in\mathcal{C}([-\Delta, 0], \mathbb{R}^{n})$.
Different from \citep{Kim1999functional} where the invariant differentiability is for the functionals on $\mathbb{R}^{n}\times\mathcal{C}([-\Delta, 0), \mathbb{R}^{n})$, Definition \ref{def-8} is for smoothly separable functionals and $V_{2}$ is defined on $\mathcal{C}([-\Delta, 0], \mathbb{R}^{n})$ to ensure the well-posedness of the derivative; see, e.g., Example 2.2.1 in \citep{Kim1999functional}. Definitions \ref{def-7}-\ref{def-8} include many Lyapunov-Krasovskii functionals in existing works \citep{Pepe2016stabilization, Orosz2019safety, Zhou2016razumikhin, Karafyllis2016stabilization, Liu2011input, Prajna2005methods}. With smoothly separable and i-differentiable functionals, the CLKF is proposed as follows.

\begin{definition}
\label{def-9}
For the system \eqref{eqn-1}, a smoothly separable and continuously i-differentiable functional $V_{\ksvk}: \mathcal{C}([-\Delta, 0], \mathbb{R}^{n})\rightarrow\mathbb{R}^{+}$ is called a \emph{control Lyapunov-Krasovskii functional (CLKF)}, if for all $\phi\in\mathcal{C}([-\Delta, 0], \mathbb{R}^{n})$,
\begin{enumerate}[(i)]
  \item there exist $\alpha_{1}, \alpha_{2}\in\mathcal{K}_{\infty}$ such that $\alpha_{1\ksvk}(|\phi(0)|)\leq V_{\ksvk}(\phi)\leq\alpha_{2\ksvk}(\|\phi\|)$;
  \item there exists $\gamma_{\lyaK}\in\mathcal{K}$ such that for any nonzero $\phi\in\mathcal{C}([-\Delta, 0], \mathbb{R}^{n})$,
  \begin{align}
  \label{eqn-32}
  &\inf_{u\in\mathbb{U}}\left\{L_{f}V_{\ksvk1}(\phi)+D^{+}V_{\ksvk2}(\phi)+L_{g}V_{\ksvk1}(\phi)u\right\}<-\gamma_{\lyaK}(V_{\ksvk}(\phi)).
  \end{align}
\end{enumerate}
\end{definition}

Different from the CLRF in Section \ref{subsec-clrf}, the CLKF is defined on $\mathcal{C}([-\Delta, 0], \mathbb{R}^{n})$ such that the time-delay trajectory is involved directly. For instance, if $V_{\ksvk}(\phi)=|\phi(0)|^2+\int^{0}_{-\Delta}|\phi(s)|^2ds$, then $V_{\ksvk1}(\phi(0))=|\phi(0)|^2$ and  $V_{\ksvk2}(\phi)=\int^{0}_{-\Delta}|\phi(s)|^2ds$. In addition, $L_{f}V_{\ksvk1}(\phi)=2\phi^{\top}(0)f(\phi)$, $L_{g}V_{\ksvk1}(\phi)u=2\phi^{\top}(0)g(\phi)u$ and $D^{+}V_{\ksvk2}(\phi)=|\phi(0)|^2-|\phi(-\Delta)|^2$. From \eqref{eqn-31}, we define the set $\mathbb{K}_{\lyaK}:=\{u\in\mathbb{U}: L_{f}V_{\ksvk1}(\phi)+D^{+}V_{\ksvk2}(\phi)+L_{g}V_{\ksvk1}(\phi)u<-\gamma_{\lyaK}(V_{\ksvk}(\phi))\}$. Similar to the mechanism in \eqref{eqn-9}, the set $\mathbb{K}_{\lyaK}$ is non-empty. Therefore, any locally Lipschitz controller in the set $\mathbb{K}_{\lyaK}$ leads to the stabilization of the system \eqref{eqn-1}. Similar to the R-SCP of Definition \ref{def-4}, we propose the Krasovskii-type SCP as follows.

\begin{definition}
\label{def-10}
Consider the system \eqref{eqn-1} with a CLKF $V_{\ksvk}$, the system \eqref{eqn-1} is said to satisfy the \emph{Krasovskii-type small control property (K-SCP)}, if for arbitrary $\varepsilon>0$, there exists $\delta>0$ such that, for any nonzero $\phi\in\mathcal{C}([-\Delta, 0], \mathbf{B}(\delta))$, there exists $u\in\mathbf{B}(\varepsilon)$ such that $L_{f}V_{\ksvk1}(\phi)+D^{+}V_{\ksvk2}(\phi)+L_{g}V_{\ksvk1}(\phi)u<-\gamma_{\lyaK}(V_{\ksvk}(\phi))$.
\end{definition}

Definition \ref{def-10} extends the SCP to the time-delay case in terms of the Krasovskii approach. Definition \ref{def-10} is similar to item (iv) of Hypothesis 8 in \citep{Pepe2016stabilization} for neutral functional differential equations, while different from the one in \citep{Karafyllis2016stabilization}, where $\phi\in\mathcal{C}([-\Delta, 0], \mathbf{B}(\delta))$ with $\phi(0)\neq0$ is required. With the proposed CLKF and K-SCP, the controller is derived in the next theorem to ensure the stabilization of the system \eqref{eqn-1}.

\begin{theorem}
\label{thm-5}
If the system \eqref{eqn-1} admits a CLKF and satisfies the K-SCP, then the following controller
\begin{align}
\label{eqn-33}
u(\phi):=\left\{\begin{aligned}
&\kappa(\lambda, \mathfrak{a}_{\ksvk}(\phi), (L_{g}V_{\ksvk1}(\phi))^{\top}), &\text{ if }& \phi\neq 0, \\
&0, &\text{ if }&  \phi=0,
\end{aligned}\right.
\end{align}
where $\kappa$ is defined in \eqref{eqn-12}, $\lambda>0$ and $\mathfrak{a}_{\ksvk}(\phi):=L_{f}V_{\ksvk1}(\phi)+D^{+}V_{\ksvk2}(\phi)+\gamma_{\lyaK}(V_{\ksvk}(\phi))$, is continuous and ensures the system \eqref{eqn-1} to be GAS.
\end{theorem}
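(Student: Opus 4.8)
The plan is to reproduce, essentially verbatim, the structure of the proof of Theorem~\ref{thm-3}, with the function $V_{\rzmk}$ replaced by the functional $V_{\ksvk}$ and the scalar derivative $\dot V_{\rzmk}$ replaced by the legitimate quantity $D^{+}V_{\ksvk}(\phi)=L_{f}V_{\ksvk1}(\phi)+D^{+}V_{\ksvk2}(\phi)+L_{g}V_{\ksvk1}(\phi)u$, which exists and is continuous in $\phi$ because $V_{\ksvk}$ is smoothly separable and continuously i-differentiable (Definitions~\ref{def-7}--\ref{def-9}). First I would establish the infinitesimal decrease of $V_{\ksvk}$ along the closed loop. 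Fix $\phi\neq0$ with $\phi(0)=x$ and set $\mathfrak{b}_{\ksvk}(\phi):=L_{g}V_{\ksvk1}(\phi)$. If $\mathfrak{b}_{\ksvk}(\phi)=0$ then $\kappa(\lambda,\mathfrak{a}_{\ksvk}(\phi),0)=0$, so $u(\phi)=0$ and \eqref{eqn-32} gives $D^{+}V_{\ksvk}(\phi)=L_{f}V_{\ksvk1}(\phi)+D^{+}V_{\ksvk2}(\phi)<-\gamma_{\lyaK}(V_{\ksvk}(\phi))$. If $\mathfrak{b}_{\ksvk}(\phi)\neq0$, inserting \eqref{eqn-33} and \eqref{eqn-12} and collapsing the cross terms exactly as in the computation following \eqref{eqn-12} in the proof of Theorem~\ref{thm-3} gives
\begin{align*}
D^{+}V_{\ksvk}(\phi)+\gamma_{\lyaK}(V_{\ksvk}(\phi))=\mathfrak{a}_{\ksvk}(\phi)+\mathfrak{b}_{\ksvk}(\phi)u(\phi)=-\sqrt{\mathfrak{a}_{\ksvk}^{2}(\phi)+\lambda\|\mathfrak{b}_{\ksvk}(\phi)\|^{4}}<0 .
\end{align*}
Hence $D^{+}V_{\ksvk}(\phi)<-\gamma_{\lyaK}(V_{\ksvk}(\phi))$ for every nonzero $\phi$, while $u(0)=0$ keeps the origin an equilibrium.

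For GAS I would note that, along any closed-loop solution, $t\mapsto V_{\ksvk}(x_{t})$ is locally Lipschitz with $D^{+}V_{\ksvk}(x_{t})<-\gamma_{\lyaK}(V_{\ksvk}(x_{t}))$ for all $t\geq0$; in contrast with the Razumikhin and CBRF arguments there is no delayed term on the right-hand side, since the delay is already absorbed into the functional. A one-dimensional comparison with $\dot y=-\gamma_{\lyaK}(y)$ (Lemma~4.4 in \citep{Khalil2002nonlinear}) then yields $\beta_{0}\in\mathcal{KL}$ with $V_{\ksvk}(x_{t})\leq\beta_{0}(V_{\ksvk}(\xi),t)$, and combining this with item~(i) of Definition~\ref{def-9}, i.e. $\alpha_{1\ksvk}(|\phi(0)|)\leq V_{\ksvk}(\phi)\leq\alpha_{2\ksvk}(\|\phi\|)$, produces $|x(t)|\leq\alpha_{1\ksvk}^{-1}(\beta_{0}(\alpha_{2\ksvk}(\|\xi\|),t))=:\beta(\|\xi\|,t)$ with $\beta\in\mathcal{KL}$, so the closed loop is GAS.

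It remains to verify continuity of \eqref{eqn-33}. On $\mathcal{C}([-\Delta,0],\mathbb{R}^{n})\setminus\{0\}$ this follows from $V_{\ksvk1}$ being continuously differentiable, continuity of $f$, $g$ and of $D^{+}V_{\ksvk2}$, which make $\mathfrak{a}_{\ksvk}$ and $\mathfrak{b}_{\ksvk}$ continuous; $\kappa$ is continuous wherever $\mathfrak{b}_{\ksvk}\neq0$, and at a nonzero $\phi$ with $\mathfrak{b}_{\ksvk}(\phi)=0$ the strict inequality \eqref{eqn-32} forces $\mathfrak{a}_{\ksvk}(\phi)<0$, hence $\|\kappa\|\leq\sqrt{\lambda}\|\mathfrak{b}_{\ksvk}\|\to0$ on a neighbourhood. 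The main obstacle, exactly as in Theorem~\ref{thm-3}, is continuity at the origin, which I would handle via the K-SCP: given $\varepsilon>0$ it supplies $\delta_{1}>0$ such that for any nonzero $\phi\in\mathcal{C}([-\Delta,0],\mathbf{B}(\delta_{1}))$ some $u\in\mathbf{B}(\varepsilon)$ obeys $\mathfrak{a}_{\ksvk}(\phi)+\mathfrak{b}_{\ksvk}(\phi)u<0$, whence $\mathfrak{a}_{\ksvk}(\phi)<\varepsilon\|\mathfrak{b}_{\ksvk}(\phi)\|$; and since $V_{\ksvk1}$ is continuously differentiable with $\partial V_{\ksvk1}(0)/\partial x=0$ (as $V_{\ksvk1}(0)=0$ is a minimum) and $g$ is continuous, there is $\delta_{2}>0$ with $\|\mathfrak{b}_{\ksvk}(\phi)\|\leq\varepsilon$ whenever $\|\phi\|<\delta_{2}$. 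Taking $\delta:=\min\{\delta_{1},\delta_{2}\}$, for a nonzero $\phi\in\mathbf{B}(\delta)$ either $\mathfrak{b}_{\ksvk}(\phi)=0$ and $u(\phi)=0$, or, using $\sqrt{\mathfrak{a}_{\ksvk}^{2}+\lambda\|\mathfrak{b}_{\ksvk}\|^{4}}\leq|\mathfrak{a}_{\ksvk}|+\sqrt{\lambda}\|\mathfrak{b}_{\ksvk}\|^{2}$,
\begin{align*}
\|u(\phi)\|=\frac{\mathfrak{a}_{\ksvk}(\phi)+\sqrt{\mathfrak{a}_{\ksvk}^{2}(\phi)+\lambda\|\mathfrak{b}_{\ksvk}(\phi)\|^{4}}}{\|\mathfrak{b}_{\ksvk}(\phi)\|}\leq 2\varepsilon+\sqrt{\lambda}\|\mathfrak{b}_{\ksvk}(\phi)\|\leq(2+\sqrt{\lambda})\varepsilon=:\bar{\varepsilon},
\end{align*}
so $\|u(\phi)\|\leq\bar{\varepsilon}\to0$ as $\varepsilon\to0$; with $u(0)=0$ this gives continuity at the origin and completes the proof. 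A secondary point to check carefully is that the comparison step remains valid although $V_{\ksvk2}$ is merely locally Lipschitz — this is precisely what the i-differentiability of Definition~\ref{def-8} is designed to ensure.
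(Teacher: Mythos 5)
Your proposal is correct and follows essentially the same route as the paper's proof: the same case split on $L_{g}V_{\ksvk1}(\phi)=0$ versus $L_{g}V_{\ksvk1}(\phi)\neq0$ with the Sontag-formula computation yielding $D^{+}V_{\ksvk}(\phi)<-\gamma_{\lyaK}(V_{\ksvk}(\phi))$, and the same K-SCP argument with $\delta_{1},\delta_{2}$ leading to the $(2+\sqrt{\lambda})\varepsilon$ bound for continuity at the origin. The only deviations are cosmetic: you obtain the $\mathcal{KL}$ estimate by an explicit scalar comparison (Lemma 4.4 of Khalil) combined with item (i) of Definition \ref{def-9}, where the paper simply invokes Theorem 2.1 of Hale, and you spell out continuity at nonzero $\phi$ with $L_{g}V_{\ksvk1}(\phi)=0$ (using that \eqref{eqn-32} forces $\mathfrak{a}_{\ksvk}(\phi)<0$ there), a point the paper leaves implicit.
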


\begin{proof}
We first prove the stabilization of the system \eqref{eqn-1} under the controller \eqref{eqn-33}. From \eqref{eqn-33}, we have that if $L_{g}V_{\ksvk1}(\phi)=0$, then $u=0$, and from \eqref{eqn-32} in Definition \ref{def-9}, we have
\begin{align*}
&L_{f}V_{\ksvk1}(\phi)+D^{+}V_{\ksvk2}(\phi)+L_{g}V_{\ksvk1}(\phi)u+\gamma_{\lyaK}(V_{\ksvk}(\phi)) \\
&=L_{f}V_{\ksvk1}(\phi)+D^{+}V_{\ksvk2}(\phi)+\gamma_{\lyaK}(V_{\ksvk}(\phi))<0.
\end{align*}
If $L_{g}V_{\ksvk1}(\phi)\neq0$, we define $\mathfrak{a}_{\ksvk}(\phi):=L_{f}V_{\ksvk1}(\phi)+D^{+}V_{\ksvk2}(\phi)+\gamma_{\lyaK}(V_{\ksvk}(\phi))$ and $\mathfrak{b}_{\ksvk}(\phi):=L_{g}V_{\ksvk1}(\phi)$. From the control law \eqref{eqn-33},
\begin{align*}
&L_{f}V_{\ksvk1}(\phi)+D^{+}V_{\ksvk2}(\phi)+L_{g}V_{\ksvk1}(\phi)u+\gamma_{\lyaK}(V_{\ksvk}(\phi)) \\
&=\mathfrak{a}_{\ksvk}(\phi)-\mathfrak{b}_{\ksvk}(\phi)\frac{\mathfrak{a}_{\ksvk}(\phi)\mathfrak{b}^{\top}_{\ksvk}(\phi)}{\|\mathfrak{b}_{\ksvk}(\phi)\|^{2}} -\mathfrak{b}_{\ksvk}(\phi)\frac{\sqrt{\mathfrak{a}^{2}_{\ksvk}(\phi)+\lambda\|\mathfrak{b}_{\ksvk}(\phi)\|^{4}}}{\|\mathfrak{b}_{\ksvk}(\phi)\|^{2}}\mathfrak{b}^{\top}_{\ksvk}(\phi) \\
&=-\sqrt{\mathfrak{a}^{2}_{\ksvk}(\phi)+\lambda\|\mathfrak{b}_{\ksvk}(\phi)\|^{4}}\leq0.
\end{align*}
Hence, under the controller \eqref{eqn-33}, $L_{f}V_{\ksvk1}(\phi)+D^{+}V_{\ksvk2}(\phi)+L_{g}V_{\ksvk1}(\phi)u<-\gamma_{\lyaK}(V_{\ksvk}(\phi))$. From \eqref{eqn-30}-\eqref{eqn-31} and Theorem 2.1 in \citep{Hale1993introduction}, we conclude that the closed-loop system is GAS.

Next, we show the continuity of the controller. From the properties of the functional $V_{\ksvk}$ and the functions $f, g$, we deduce the continuity of the controller in any region away from the origin. Hence, we only need to show the continuity of the controller at the origin. From the K-SCP, for arbitrary $\varepsilon\in\mathbb{R}^{+}$, there exists $\delta_{1}\in\mathbb{R}^{+}$ such that, for any nonzero $\phi\in\mathcal{C}([-\Delta, 0], \mathbf{B}(\delta_{1}))$, there exists $u\in\mathbf{B}(\varepsilon)$ such that $\mathfrak{a}_{\ksvk}(\phi)+L_{g}V_{\ksvk1}(\phi)u<0$. Since $V_{1}\in\mathcal{C}(\mathbb{R}^{n}, \mathbb{R}^{+})$ and $g$ in \eqref{eqn-1} is locally Lipschitz, there exists $\delta_{2}\in\mathbb{R}^{+}$ with $\delta_{2}\neq\delta_{1}$ such that for all nonzero $\phi\in\mathcal{C}([-\Delta, 0], \mathbf{B}(\delta_{2}))$, $\|L_{g}V_{\ksvk1}(\phi)\|\leq\varepsilon$. Let $\delta:=\min\{\delta_{1}, \delta_{2}\}$. For any nonzero $\phi\in\mathcal{C}([-\Delta, 0], \mathbf{B}(\delta))$, we consider the following two cases. The first case is $L_{g}V_{\ksvk1}(\phi)=0$. In this case, $u(\phi)=0$ from \eqref{eqn-33}. In addition, $u(0)=0$ from \eqref{eqn-33}, and thus $\|u(\phi)-u(0)\|=0<\varepsilon$. Hence, if the K-SCP is satisfied, then the control input is bounded by $\varepsilon\in\mathbb{R}^{+}$. Since $\varepsilon\in\mathbb{R}^{+}$ can be chosen arbitrarily small, we conclude that the controller is continuous in this case.
The second case is $L_{g}V_{\ksvk1}(\phi)\neq0$. In this case, $|\mathfrak{a}_{\ksvk}(\phi)|\leq\varepsilon\|L_{g}V_{\ksvk1}(\phi)\|$ for any nonzero $\phi\in\mathcal{C}([-\Delta, 0], \mathbf{B}(\delta))$. From \eqref{eqn-33}, we have that for any nonzero $\phi\in\mathcal{C}([-\Delta, 0], \mathbf{B}(\delta))$,
\begin{align*}
\|u(\phi)\|&\leq\left\|\frac{\mathfrak{a}_{\ksvk}(\phi)+\sqrt{\mathfrak{a}^{2}_{\ksvk}(\phi)+\lambda\|L_{g}V_{\ksvk1}(\phi)\|^{4}}}{\|L_{g}V_{\ksvk1}(\phi)\|^{2}}L_{g}V_{\ksvk1}(\phi)\right\| \\
&\leq\left|\frac{\mathfrak{a}_{\ksvk}(\phi)+\sqrt{\mathfrak{a}^{2}_{\ksvk}(\phi)+\lambda\|L_{g}V_{\ksvk1}(\phi)\|^{4}}}{\|L_{g}V_{\ksvk1}(\phi)\|}\right| \\
&\leq\left|\frac{\mathfrak{a}_{\ksvk}(\phi)+|\mathfrak{a}_{\ksvk}(\phi)|+\sqrt{\lambda}\|L_{g}V_{\ksvk1}(\phi)\|^{2}}{\|L_{g}V_{\ksvk1}(\phi)\|}\right| \\
&\leq\sqrt{\lambda}\varepsilon+\left|\frac{\mathfrak{a}_{\ksvk}(\phi)+|\mathfrak{a}_{\ksvk}(\phi)|}{\|L_{g}V_{\ksvk1}(\phi)\|}\right|,
\end{align*}
which implies that $\|u(\phi)\|\leq\sqrt{\lambda}\varepsilon$ if $\mathfrak{a}_{\ksvk}(\phi)\leq0$ and $\|u(\phi)\|\leq(2+\sqrt{\lambda})\varepsilon$ if $\mathfrak{a}_{\ksvk}(\phi)>0$.
Since $\lim_{\varepsilon\rightarrow0}\sqrt{\lambda}\varepsilon=\lim_{\varepsilon\rightarrow0}(2+\sqrt{\lambda})\varepsilon=0$ and $\varepsilon\in\mathbb{R}^{+}$ can be arbitrarily small, the controller is continuous at the origin in the second case. Therefore, from these two cases, we conclude that the controller is continuous at the origin, which completes the proof.
\hfill$\blacksquare$
\end{proof}

\subsection{Krasovskii-type Control Barrier Functionals}
\label{subsec-cbkf}

To address the safety of the system \eqref{eqn-1} in terms of the Krasovskii approach, we first redefine the safe set, and then propose control barrier-Krasovskii functionals. For the safe set $\mathbb{C}_{\ksvk}$, a continuously differentiable functional $h_{\ksvk}: \mathcal{C}([-\Delta, 0], \mathbb{R}^{n})\rightarrow\mathbb{R}$ is used here to describe $\mathbb{C}_{\ksvk}$, and further to determine the forward invariant property. Assume that $\mathbb{C}_{\ksvk}\subset\mathbb{R}^{n}$ is defined as
\begin{align}
\label{eqn-34}
\mathbb{C}_{\ksvk}&:=\{\phi\in\mathcal{C}([-\Delta, 0], \mathbb{R}^{n}): h_{\ksvk}(\phi)\geq0\}, \\
\label{eqn-35}
\partial\mathbb{C}_{\ksvk}&:=\{\phi\in\mathcal{C}([-\Delta, 0], \mathbb{R}^{n}): h_{\ksvk}(\phi)=0\}, \\
\label{eqn-36}
\inte(\mathbb{C}_{\ksvk})&:=\{\phi\in\mathcal{C}([-\Delta, 0], \mathbb{R}^{n}): h_{\ksvk}(\phi)>0\}.
\end{align}
Let $\inte(\mathbb{C}_{\ksvk})\neq\varnothing$ and $\overbar{\inte(\mathbb{C}_{\ksvk})}=\mathbb{C}_{\ksvk}$. Comparing with \eqref{eqn-16}-\eqref{eqn-18} based on the current state, the safe set in \eqref{eqn-34}-\eqref{eqn-36} is defined via the time-delay trajectory and similar to the one in \citep{Orosz2019safety}.

\begin{definition}
\label{def-11}
Consider the system \eqref{eqn-1} and the set $\mathbb{C}_{\ksvk}$ in \eqref{eqn-34}, a smoothly separable and continuously i-differentiable functional $B_{\ksvk}: \mathcal{C}([-\Delta, 0],$ $\mathbb{R}^{n})\rightarrow\mathbb{R}$ is called an \emph{R-type control barrier-Krasovskii functional (R-CBKF)} for $\mathbb{C}_{\ksvk}$, if
\begin{enumerate}[(i)]
\item there exist $\alpha_{1}, \alpha_{2}\in\mathcal{K}$ such that, for all $\phi\in\inte(\mathbb{C}_{\ksvk})$,
\begin{align}
\label{eqn-37}
\alpha_{1}(h_{\ksvk}(\phi))\leq\frac{1}{B_{\ksvk}(\phi)}\leq\alpha_{2}(h_{\ksvk}(\phi));
\end{align}

\item there exists $\gamma_{\rbkf}\in\mathcal{K}$ such that $\inf_{u\in\mathbb{U}}\{L_{f}B_{\ksvk1}(\phi)+D^{+}B_{\ksvk2}(\phi)+L_{g}B_{\ksvk1}(\phi)u\}<\gamma_{\rbkf}(h_{\ksvk}(\phi))$ for any nonzero $\phi\in\mathcal{C}([-\Delta, 0], \mathbb{R}^{n})$.
\end{enumerate}
\end{definition}

As a counterpart of Definition \ref{def-8}, the following definition is presented based on the functional $h_{\ksvk}$ in \eqref{eqn-34}-\eqref{eqn-36} directly.

\begin{definition}
\label{def-12}
Consider the system \eqref{eqn-1}, a smoothly separable and continuously i-differentiable functional $h_{\ksvk}: \mathcal{C}([-\Delta, 0], \mathbb{R}^{n})$ $\rightarrow\mathbb{R}$ is called a \emph{Z-type control barrier-Krasovskii functional (Z-CBKF)} for the set $\mathbb{C}_{\ksvk}$ in \eqref{eqn-34}, if there exists $\mathbb{X}_{\ksvk}\in\mathcal{C}([-\Delta, 0], \mathbb{R}^{n})$ such that $\mathbb{C}_{\ksvk}\subseteq\mathbb{X}_{\ksvk}$, and for any nonzero $\phi\in\mathbb{X}_{\ksvk}$,
\begin{align}
\label{eqn-38}
&\sup_{u\in\mathbb{U}}\left\{L_{f}h_{\ksvk1}(\phi)+D^{+}h_{\ksvk2}(\phi)+L_{g}h_{\ksvk1}(\phi)u\right\}>-\gamma_{\zbkf}(h_{\ksvk}(\phi)).
\end{align}
\end{definition}

\begin{remark}
\label{rmk-2}
Following the Krasovskii approach, both the R-CBKF and Z-CBKF extend the CBF to the time-delay case. Different from the CBRFs in Section \ref{subsec-cbrf} where the effects of time delays are transformed into the disturbance-like items, the CBKFs are defined to involve the time-delay trajectory directly. In addition, both the R-CBKF and Z-CBKF are required to be smoothly separable, and this requirement results in differences in defining the safe set since $h_{\ksvk1}(\phi(0))+h_{\ksvk2}(\phi)\geq0$ in \eqref{eqn-34} whereas $h_{\rzmk}(\phi(0))\geq0$ in \eqref{eqn-16}. Furthermore, similar to the R-CBRF, the R-CBKF offers more potential flexibilities in its construction comparing with the Z-RBKF. The mutual transformation between the R-CBKF and Z-CBKF is available under some conditions, which are omitted here.
\hfill $\square$
\end{remark}

From Definitions \ref{def-11}-\ref{def-12}, we define the following sets
\begin{align}
\label{eqn-39}
\mathbb{K}_{\rbkf}&:=\left\{u\in\mathbb{U}: L_{f}B_{\ksvk1}(\phi)+D^{+}B_{\ksvk2}(\phi)\right. \nonumber \\
& \left.\qquad +L_{g}B_{\ksvk1}(\phi)u<\gamma_{\rbkf}(h_{\ksvk}(\phi))\right\}, \\
\label{eqn-40}
\mathbb{K}_{\zbkf}&:=\left\{u\in\mathbb{U}: L_{f}h_{\ksvk1}(\phi)+D^{+}h_{\ksvk2}(\phi) \right. \nonumber  \\
& \left.\qquad +L_{g}h_{\ksvk1}(\phi)u>-\gamma_{\zbkf}(h_{\ksvk}(\phi))\right\}.
\end{align}
In the next theorem, we show that the control inputs from the sets $\mathbb{K}_{\rbkf}$ and $\mathbb{K}_{\zbkf}$ lead to the forward invariant property of the set $\mathbb{C}_{\ksvk}$, which thus implies the satisfaction of the safety objective via the CBKFs.

\begin{theorem}
\label{thm-6}
Consider the system \eqref{eqn-1} and the set $\mathbb{C}_{\ksvk}$ in \eqref{eqn-34},
\begin{enumerate}[(i)]
  \item if the system \eqref{eqn-1} admits an R-CBKF $B_{\ksvk}: \inte(\mathbb{C}_{\ksvk})\rightarrow\mathbb{R}$, and there exists a Lipschitz continuous functional $u: \inte(\mathbb{C}_{\ksvk})\rightarrow\mathbb{U}$ such that $u(\phi)\in\mathbb{K}_{\rbkf}$, then the controller $u(\phi)$ ensures the set $\inte(\mathbb{C}_{\ksvk})$ to be forward invariant;

  \item  if the system \eqref{eqn-1} admits a Z-CBKF $h_{\ksvk}: \mathbb{X}_{\ksvk}\rightarrow\mathbb{R}$, and there exists a Lipschitz continuous functional $u: \mathbb{X}_{\ksvk}\rightarrow\mathbb{U}$ such that $u(\phi)\in\mathbb{K}_{\zbkf}$, then the controller $u(\phi)$ ensures the set $\mathbb{C}_{\ksvk}$ to be forward invariant.
\end{enumerate}
\end{theorem}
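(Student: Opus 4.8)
The plan is to mirror the structure of the proof of Theorem~\ref{thm-4}, since the Krasovskii setting here is analogous to the Razumikhin one but with the comparison inequalities now posed directly along the scalar trajectories $h_{\ksvk}(x_t)$ and $1/B_{\ksvk}(x_t)$ rather than through a Razumikhin-type supremum over a window. The main conceptual difference is that $B_{\ksvk}$ and $h_{\ksvk}$ are i-differentiable functionals, so their time derivatives along solutions are $\dot B_{\ksvk}(x_t)=L_fB_{\ksvk1}(x_t)+D^+B_{\ksvk2}(x_t)+L_gB_{\ksvk1}(x_t)u$ and similarly for $h_{\ksvk}$; this is exactly the combination appearing in the definitions of $\mathbb{K}_{\rbkf}$ and $\mathbb{K}_{\zbkf}$.

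For the \textbf{first claim}, I would set $\Theta_{\ksvk}(x_t):=1/B_{\ksvk}(x_t)$, which is well defined and positive on $\inte(\mathbb{C}_{\ksvk})$. Using i-differentiability and the chain rule as in \eqref{eqn-24}, compute
\begin{align*}
\dot\Theta_{\ksvk}(x_t)=-\Theta_{\ksvk}^2(x_t)\,\dot B_{\ksvk}(x_t)>-\Theta_{\ksvk}^2(x_t)\,\gamma_{\rbkf}(h_{\ksvk}(x_t)),
\end{align*}
where the strict inequality uses $u(x_t)\in\mathbb{K}_{\rbkf}$. Then invoke \eqref{eqn-37} to bound $h_{\ksvk}(x_t)\le\alpha_1^{-1}(\Theta_{\ksvk}(x_t))$, so that $\dot\Theta_{\ksvk}(x_t)>-\varphi(\Theta_{\ksvk}(x_t))$ with $\varphi(s):=s^2\gamma_{\rbkf}(\alpha_1^{-1}(s))$, a continuous function vanishing at $0$. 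Compare with $\dot\Pi_{\ksvk}(t)=-\varphi(\Pi_{\ksvk}(t))$, $\Pi_{\ksvk}(0)=\Theta_{\ksvk}(x_0)$; by Lemma~4.4 of \citep{Khalil2002nonlinear} its solution is $\Pi_{\ksvk}(t)=\beta(\Pi_{\ksvk}(0),t)$ for some $\beta\in\mathcal{KL}$, and a comparison argument gives $\Theta_{\ksvk}(x_t)\ge\Pi_{\ksvk}(t)>0$ for all $t\ge0$. Translating back through \eqref{eqn-37} yields $h_{\ksvk}(x_t)\ge\alpha_2^{-1}(\beta(\alpha_1(h_{\ksvk}(x_0)),t))>0$, hence $\inte(\mathbb{C}_{\ksvk})$ is forward invariant.

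For the \textbf{second claim}, set $h_{\ksvk}(t):=h_{\ksvk}(x_t)$; from $u(x_t)\in\mathbb{K}_{\zbkf}$ and i-differentiability, $\dot h_{\ksvk}(t)>-\gamma_{\zbkf}(h_{\ksvk}(t))$ whenever $x_t\in\mathbb{X}_{\ksvk}$. Comparing with $\dot\Phi_{\ksvk}(t)=-\gamma_{\zbkf}(\Phi_{\ksvk}(t))$, $\Phi_{\ksvk}(0)=h_{\ksvk}(x_0)\ge0$, again via Lemma~4.4 of \citep{Khalil2002nonlinear} and a comparison lemma, gives $h_{\ksvk}(x_t)\ge\beta(h_{\ksvk}(x_0),t)\ge0$ for all $t\ge0$, so $\mathbb{C}_{\ksvk}$ is forward invariant, provided the trajectory remains in $\mathbb{X}_{\ksvk}$ so that \eqref{eqn-38} applies along the way.

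The main obstacle is the last proviso: the differential inequality for $h_{\ksvk}$ (and likewise for $\Theta_{\ksvk}$) is only guaranteed where $x_t$ lies in the relevant set ($\mathbb{X}_{\ksvk}$, resp.\ $\inte(\mathbb{C}_{\ksvk})$), so one must argue that the trajectory cannot escape that set before the comparison bound can be used — a standard but necessary bootstrap/maximal-interval argument, exactly as in the delay-free CBF literature \citep{Ames2016control}. A secondary technical point is justifying the comparison step itself for i-differentiable functionals: one needs that $t\mapsto h_{\ksvk}(x_t)$ (resp.\ $\Theta_{\ksvk}(x_t)$) is locally Lipschitz in $t$ with the stated Dini derivative, which follows from smooth separability and local Lipschitzness of $V_2$-type terms together with continuity of the right-hand side of \eqref{eqn-1}; with that in hand the comparison lemma applies verbatim. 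Apart from these points the argument is a direct transcription of the proof of Theorem~\ref{thm-4}.
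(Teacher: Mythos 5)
Your proposal is correct and follows essentially the same route as the paper's proof: define $\Theta_{\ksvk}=1/B_{\ksvk}$, derive the strict differential inequality from membership in $\mathbb{K}_{\rbkf}$ (resp.\ $\mathbb{K}_{\zbkf}$), use \eqref{eqn-37} to close the inequality in $\Theta_{\ksvk}$, and apply Lemma 4.4 of \citep{Khalil2002nonlinear} with a comparison argument to obtain the $\mathcal{KL}$ lower bound on $h_{\ksvk}(x_t)$. The regularity point you flag (that $t\mapsto\Theta_{\ksvk}(x_t)$ and $t\mapsto h_{\ksvk}(x_t)$ have the stated Dini derivatives) is exactly what the paper settles by invoking Theorem 6.2 of \citep{Kim2015smooth} and Lemma 5 of \citep{Pepe2016stabilization}.
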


\begin{proof}
(\textbf{First Claim.}) For the R-CBKF $B_{\ksvk}: \inte(\mathbb{C}_{\ksvk})\rightarrow\mathbb{R}$, we define $\Theta_{\ksvk}(\phi):=1/B_{\ksvk}(\phi)$, and have
\begin{align}
\label{eqn-41}
D^{+}\Theta_{\ksvk}(\phi)&=-\frac{D^{+}B_{\ksvk}(\phi)}{B^{2}_{\ksvk}(\phi)} \nonumber \\
&=-\Theta^{2}_{\ksvk}(\phi)(L_{f}B_{\ksvk1}(\phi)+D^{+}B_{\ksvk2}(\phi)+L_{g}B_{\ksvk1}(\phi)u) \nonumber\\
&>-\Theta^{2}_{\ksvk}(\phi)\gamma_{\rbkf}(h_{\ksvk}(\phi)) \nonumber \\
&\geq-\Theta^{2}_{\ksvk}(\phi)\gamma_{\rbkf}(\alpha^{-1}_{1}(\Theta_{\ksvk}(\phi))),
\end{align}
where the first ``$\geq$'' holds from Definition \ref{def-11}; and the second ``$\geq$'' holds from \eqref{eqn-37}. Let $\beta(\Theta_{\ksvk}(\phi)):=\Theta^{2}_{\ksvk}(\phi)\gamma_{\rbkf}(\alpha^{-1}_{1}(\Theta_{\ksvk}(\phi)))$, which is of class $\mathcal{K}$. Since $B_{\ksvk}$ is continuously i-differentiable, $\Theta_{\ksvk}$ is continuously i-differentiable from Theorem 6.2 in \cite{Kim2015smooth}. Define $\mathbf{X}(t)=\Theta_{\ksvk}(x_{t})$ for all $t\in\mathbb{R}^{+}$. From Lemma 5 in \citep{Pepe2016stabilization}, $D^{+}\mathbf{X}(t)=D^{+}\Theta_{\ksvk}(x_{t})$, and thus $D^{+}\mathbf{X}(t)>-\beta(\mathbf{X}(t))$ for all $t\in\mathbb{R}^{+}$. From Lemma 4.4 in \citep{Khalil2002nonlinear}, there exists $\sigma_{1}\in\mathcal{KL}$ such that
\begin{align}
\label{eqn-42}
\Theta_{\ksvk}(x_{t})\geq\sigma_{1}(\Theta_{\ksvk}(x_{0}), t), \quad  \forall t\geq0.
\end{align}
Combining \eqref{eqn-42} with the definition of $\Theta_{\ksvk}(\phi)$ implies that
\begin{align*}
\frac{1}{B_{\ksvk}(x_{t})}&\geq\sigma_{1}\left(\frac{1}{B_{\ksvk}(x_{0})}, t\right), \quad  \forall t\geq0,
\end{align*}
which further implies from item (i) in Definition \ref{def-11} that
\begin{equation*}
h_{\ksvk}(x_{t})\geq\alpha^{-1}_{2}(\sigma_{1}(\alpha_{1}(h_{\ksvk}(x_{0})), t)), \quad  \forall t\geq0.
\end{equation*}
Since $h_{\ksvk}(x_{0})>0$, $h_{\ksvk}(x_{t})>0$ for all $t>0$, and thus the set $\inte(\mathbb{C}_{\ksvk})$ is forward invariant.

(\textbf{Second Claim.}) From the Z-CBKF and the controller $u\in\mathbb{K}_{\zbrf}$, there exists a set $\mathbb{X}_{\ksvk}\in\mathcal{C}([-\Delta, 0], \mathbb{R}^{n})$ with $\mathbb{C}_{\ksvk}\subset\mathbb{X}_{\ksvk}$ such that for all $\phi\in\mathbb{X}_{\ksvk}$,
\begin{equation*}
L_{f}h_{\ksvk1}(\phi)+D^{+}h_{\ksvk2}(\phi)+L_{g}h_{\ksvk1}(\phi)u>-\gamma_{\zbkf}(h_{\ksvk}(\phi)),
\end{equation*}
which implies
\begin{align}
\label{eqn-43}
D^{+}h_{\ksvk}(\phi)>-\gamma_{\zbkf}(h_{\ksvk}(\phi)), \quad \forall \phi\in\mathbb{X}_{\ksvk}.
\end{align}
Let $\mathbf{Y}(t):=h_{\ksvk}(x_{t})$ for all $t\in\mathbb{R}^{+}$. Similarly, we have from Lemma 5 in \citep{Pepe2016stabilization} that $D^{+}\mathbf{Y}(t)=D^{+}h_{\ksvk}(x_{t})$ and further from \eqref{eqn-43} that $D^{+}\mathbf{Y}(t)\geq-\gamma_{\zbkf}(\mathbf{Y}(t))$ for all $t\in\mathbb{R}^{+}$. From Lemma 4.4 in \citep{Khalil2002nonlinear}, there exists $\sigma_{2}\in\mathcal{KL}$ such that
\begin{align*}
\mathbf{Y}(t)=h_{\ksvk}(x_{t})\geq\sigma_{2}(h_{\ksvk}(x_{0}), t), \quad \forall t\geq0,
\end{align*}
which indicates that $\mathbb{C}_{\ksvk}$ is forward invariant.
\hfill$\blacksquare$
\end{proof}

In Sections \ref{sec-Razumikhintype}-\ref{sec-Krasovskiitype}, the stabilization and safety control problems are studied individually via the proposed control functions/functionals. In particular, the closed-form stabilizing controller is established for time-delay systems, whereas the safety objective is verified via the existence of the CLRFs and CLKFs. To study both stabilization and safety objectives, how to combine the proposed control functions/functionals and how to design the feedback controller are investigated in the next section.

\section{Sliding Mode Control based Combination}
\label{sec-combinedfunction}

To guarantee the safety and stabilization of time-delay systems simultaneously, the proposed control functions/functionals can be combined via quadratic programming \citep{Jankovic2018robust, Ames2016control} for the delay-free case. However, it is not easy to obtain analytical solutions for time-delay optimal control problems \citep{Wu2019new}. In this section, we combine the proposed control functions/functionals via sliding surface functionals, which allow for the transformation of the proposed control functions/functionals via different combination techniques.

\subsection{Properties of Sliding Surface Functional}

In the applied combination approach, the essence is how to develop the sliding surface functional. To this end, we assume that $\mathcal{V}$ and $\mathcal{B}$ are respectively the CLF and CBF for the system \eqref{eqn-1}. That is, $\mathcal{V}$ and $\mathcal{B}$ can be either Razumikhin-type or Krasovskii-type. Based on the CLF $\mathcal{V}$ and CBF $\mathcal{B}$, the sliding surface functional is defined as
\begin{align}
\label{eqn-44}
\mathbf{U}(\phi)&:=\Upsilon(\mathcal{V}(\phi), \mathcal{B}(\phi)),
\end{align}
where $\mathbf{U}: \phi\mapsto\mathbb{R}$ and $\Upsilon: \mathbb{R}\times\mathbb{R}\rightarrow\mathbb{R}$ are continuously differentiable. Define
\begin{equation*}
D^{+}\mathbf{U}(\phi):=\mathbf{F}(\phi)+\mathbf{G}(\phi)u+\mathbf{L}(\phi)
\end{equation*}
with $\mathbf{F}(\phi)=\mathbf{H}(\phi)f(\phi)$ and $\mathbf{G}(\phi)=\mathbf{H}(\phi)g(\phi)$. To be specific, if $\mathbf{U}(x):=\Upsilon(\mathcal{V}(x), \mathcal{B}(x))$, then
\begin{equation*}
\mathbf{H}(x)=\frac{\partial\Upsilon}{\partial\mathcal{V}}\frac{\partial\mathcal{V}}{\partial x}+\frac{\partial\Upsilon}{\partial\mathcal{B}}\frac{\partial\mathcal{B}}{\partial x}, \quad \mathbf{L}(x)\equiv0;
\end{equation*}
if $\mathbf{U}(x_{t}):=\Upsilon(\mathcal{V}(x_{t}), \mathcal{B}(x_{t}))$, then $\mathcal{V}(\phi)=\mathcal{V}_{1}(\phi(0))+\mathcal{V}_{2}(\phi)$, $\mathcal{B}(\phi)=\mathcal{B}_{1}(\phi(0))+\mathcal{B}_{2}(\phi)$ and
\begin{align*}
\mathbf{H}(\phi)&=\frac{\partial\Upsilon}{\partial\mathcal{V}}\frac{\partial\mathcal{V}_{1}}{\partial\phi(0)}+\frac{\partial\Upsilon}{\partial\mathcal{B}}\frac{\partial\mathcal{B}_{1}}{\partial\phi(0)}, \\ \mathbf{L}(\phi)&=\frac{\partial\Upsilon}{\partial\mathcal{V}}D^{+}\mathcal{V}_{2}(\phi)+\frac{\partial\Upsilon}{\partial\mathcal{B}}D^{+}\mathcal{B}_{2}(\phi).
\end{align*}
In this section, the following assumption is made, which is called the transversality condition \citep{Sira1999general} and used to avoid $g(\phi)$ to be orthogonal to $\mathbf{H}(\phi)$.

\begin{assumption}
\label{asp-1}
For all $\phi\in\mathcal{C}([-\Delta, 0], \mathbb{R}^{n})$, $\mathbf{G}(\phi)\neq0$.
\end{assumption}

Assumption \ref{asp-1} ensures that $g(\phi)$ is not tangential to the level set of the sliding surface functional $\mathbf{U}(\phi)$. If Assumption \ref{asp-1} does not hold, then higher-order sliding surface functionals can be introduced \citep{Shtessel2014sliding, Oguchi2006sliding}, and then the following analysis can be proceeded similarly. This will be further discussed in Section \ref{subsec-discussion}.

With Assumption \ref{asp-1}, we define the matrix functional:
\begin{align}
\label{eqn-45}
\mathbf{M}(\phi):=I-\frac{g(\phi)\mathbf{G}^{\top}(\phi)\mathbf{H}(\phi)}{\|\mathbf{G}(\phi)\|^{2}}.
\end{align}
The next proposition shows the properties of $\mathbf{M}(\phi)$, which extends the results in \citep{Sira1999general} into the time-delay case and lays a foundation for the control design.

\begin{proposition}
\label{prop-2}
The following statements are valid:
\begin{align}
\label{eqn-46}
\mathbf{M}(\phi)f(\phi)&=\mathbf{J}_{1}(\phi)\mathbf{H}^{\top}(\phi), \\
\label{eqn-47}
(I-\mathbf{M}(\phi))f(\phi)&=\mathbf{J}_{2}(\phi)\mathbf{H}^{\top}(\phi)+\mathbf{J}_{3}(\phi)\mathbf{H}^{\top}(\phi),
\end{align}
where $\mathbf{J}_{1}(\phi)=-2\mathbf{J}_{2}(\phi)$, and
\begin{align*}
\mathbf{J}_{2}(\phi)&=\frac{g(\phi)\mathbf{G}^{\top}(\phi)f^{\top}(\phi)-f(\phi)\mathbf{G}(\phi)g^{\top}(\phi)}{2\|\mathbf{G}(\phi)\|^{2}}, \\
\mathbf{J}_{3}(\phi)&=\frac{f(\phi)\mathbf{G}(\phi)g^{\top}(\phi)+g(\phi)\mathbf{G}^{\top}(\phi)f^{\top}(\phi)}{2\|\mathbf{G}(\phi)\|^{2}}.
\end{align*}
\end{proposition}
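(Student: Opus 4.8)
The plan is to verify the two identities \eqref{eqn-46}--\eqref{eqn-47} by a direct algebraic computation; since only the pointwise algebraic structure of $\mathbf{H}$, $\mathbf{G}$, $f$, $g$ enters, the argument runs exactly as in the delay-free setting of \citep{Sira1999general}, with every object evaluated at $\phi$. Throughout I would write $H:=\mathbf{H}(\phi)$, $G:=\mathbf{G}(\phi)=Hg(\phi)$, $f:=f(\phi)$, $g:=g(\phi)$, and record three elementary facts: $G^{\top}=g^{\top}H^{\top}$; the quantities $Hf$ and $GG^{\top}=\|G\|^{2}$ are scalars; and Assumption \ref{asp-1} forces $\|G\|^{2}\neq0$, so every division below is legitimate.

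First I would expand $\mathbf{M}(\phi)f(\phi)$ from \eqref{eqn-45}. Because $Hf$ is a scalar, $gG^{\top}Hf=(Hf)\,gG^{\top}$, whence
\begin{equation*}
\mathbf{M}(\phi)f(\phi)=f-\frac{(Hf)\,gG^{\top}}{\|G\|^{2}}.
\end{equation*}
Then I would compute $\mathbf{J}_{1}(\phi)\mathbf{H}^{\top}(\phi)$ using $\mathbf{J}_{1}=-2\mathbf{J}_{2}=\|G\|^{-2}\bigl(fGg^{\top}-gG^{\top}f^{\top}\bigr)$, the relation $\mathbf{J}_{1}=-2\mathbf{J}_{2}$ being read off directly from the stated formulas. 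The two simplifications that close the computation are $Gg^{\top}H^{\top}=GG^{\top}=\|G\|^{2}$ and $f^{\top}H^{\top}=(Hf)^{\top}=Hf$, giving
\begin{equation*}
\mathbf{J}_{1}(\phi)\mathbf{H}^{\top}(\phi)=\frac{fGg^{\top}H^{\top}-gG^{\top}f^{\top}H^{\top}}{\|G\|^{2}}=f-\frac{(Hf)\,gG^{\top}}{\|G\|^{2}},
\end{equation*}
which matches the previous display and proves \eqref{eqn-46}.

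For \eqref{eqn-47} I would use $(I-\mathbf{M}(\phi))f(\phi)=f-\mathbf{M}(\phi)f(\phi)=\|G\|^{-2}(Hf)\,gG^{\top}$, and observe that the antisymmetric part cancels in the sum $\mathbf{J}_{2}(\phi)+\mathbf{J}_{3}(\phi)=\|G\|^{-2}gG^{\top}f^{\top}$; multiplying by $\mathbf{H}^{\top}(\phi)$ and again invoking $f^{\top}H^{\top}=Hf$ yields $\bigl(\mathbf{J}_{2}(\phi)+\mathbf{J}_{3}(\phi)\bigr)\mathbf{H}^{\top}(\phi)=\|G\|^{-2}(Hf)\,gG^{\top}$, as required. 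I do not anticipate any genuine obstacle here: the only point requiring care is the bookkeeping of the row/column conventions — correctly recognizing which of the products $Hf$, $GG^{\top}$, $f^{\top}H^{\top}$ collapse to scalars — and it is precisely these collapses that make the identities hold.
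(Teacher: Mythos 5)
Your computation is correct and is essentially the paper's own proof: the paper likewise verifies \eqref{eqn-46}--\eqref{eqn-47} by direct pointwise algebra, writing $\mathbf{J}(\phi):=f(\phi)\mathbf{G}(\phi)g^{\top}(\phi)$ and using exactly the scalar collapses $\mathbf{H}f=f^{\top}\mathbf{H}^{\top}$, $\|\mathbf{G}\|^{2}=\mathbf{G}\mathbf{G}^{\top}$, $\mathbf{G}^{\top}=g^{\top}\mathbf{H}^{\top}$ that you invoke, splitting $\mathbf{J}^{\top}\mathbf{H}^{\top}$ into skew-symmetric and symmetric parts. The only cosmetic difference is direction: the paper manipulates $\mathbf{M}(\phi)f(\phi)$ into the form $\mathbf{J}_{1}(\phi)\mathbf{H}^{\top}(\phi)$, while you expand $\mathbf{J}_{1}(\phi)\mathbf{H}^{\top}(\phi)$ and $(\mathbf{J}_{2}(\phi)+\mathbf{J}_{3}(\phi))\mathbf{H}^{\top}(\phi)$ and match them to $\mathbf{M}(\phi)f(\phi)$ and $(I-\mathbf{M}(\phi))f(\phi)$, which is the same argument.
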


\begin{proof}
Since $\mathbf{H}(\phi)f(\phi)\in\mathbb{R}$ and $\mathbf{G}(\phi)=\mathbf{H}(\phi)g(\phi)\in\mathbb{R}^{1\times m}$, we have that $\mathbf{H}(\phi)f(\phi)=f^{\top}(\phi)\mathbf{H}^{\top}(\phi)$ and $\|\mathbf{G}(\phi)\|^{2}=\mathbf{G}(\phi)\mathbf{G}^{\top}(\phi)$. Define $\mathbf{J}(\phi):=f(\phi)\mathbf{G}(\phi)g^{\top}(\phi)$, and we have from \eqref{eqn-45} that
\begin{align*}
\mathbf{M}(\phi)f(\phi)&=f(\phi)-\frac{g(\phi)\mathbf{G}^{\top}(\phi)\mathbf{H}(\phi)f(\phi)}{\|\mathbf{G}(\phi)\|^{2}} \\
&=\frac{f(\phi)\|\mathbf{G}(\phi)\|^{2}-g(\phi)\mathbf{G}^{\top}(\phi)\mathbf{H}(\phi)f(\phi)}{\|\mathbf{G}(\phi)\|^{2}} \\
&=\frac{f(\phi)\mathbf{G}(\phi)\mathbf{G}^{\top}(\phi)-{\mathbf{J}}^{\top}(\phi)\mathbf{H}^{\top}(\phi)}{\|\mathbf{G}(\phi)\|^{2}} \\
&=\frac{({\mathbf{J}}(\phi)-{\mathbf{J}}^{\top}(\phi))\mathbf{H}^{\top}(\phi)}{\|\mathbf{G}(\phi)\|^{2}}.
\end{align*}
Let $\mathbf{J}_{1}(\phi):=({\mathbf{J}}(\phi)-{\mathbf{J}}^{\top}(\phi))/\|\mathbf{G}(\phi)\|^{2}$, and thus \eqref{eqn-46} holds.
\begin{align*}
(I-\mathbf{M}(\phi))f(\phi)&=\frac{g(\phi)\mathbf{G}^{\top}(\phi)\mathbf{H}(\phi)f(\phi)}{\|\mathbf{G}(\phi)\|^{2}}
=\frac{{\mathbf{J}}^{\top}(\phi)\mathbf{H}^{\top}(\phi)}{\|\mathbf{G}(\phi)\|^{2}} \\
&=\frac{{\mathbf{J}}^{\top}(\phi)-{\mathbf{J}}(\phi)}{2\|\mathbf{G}(\phi)\|^{2}}\mathbf{H}^{\top}(\phi)
+\frac{{\mathbf{J}}^{\top}(\phi)+{\mathbf{J}}(\phi)}{2\|\mathbf{G}(\phi)\|^{2}}\mathbf{H}^{\top}(\phi) \\
&=\mathbf{J}_{2}(\phi)\mathbf{H}^{\top}(\phi)+\mathbf{J}_{3}(\phi)\mathbf{H}^{\top}(\phi).
\end{align*}
Hence, we conclude that \eqref{eqn-47} holds, and $\mathbf{J}_{1}(\phi)=-2\mathbf{J}_{2}(\phi)$.
\hfill$\blacksquare$
\end{proof}

From Proposition \ref{prop-2}, the functionals $\mathbf{J}_{1}(\phi), \mathbf{J}_{2}(\phi)$ are skew-symmetric and the functional $\mathbf{J}_{3}(\phi)$ is symmetric. In addition,
\begin{align*}
\mathbf{H}(\phi)\mathbf{J}_{1}(\phi)\mathbf{H}^{\top}(\phi)&=-2\mathbf{H}(\phi)\mathbf{J}_{2}(\phi)\mathbf{H}^{\top}(\phi)\equiv0, \\
f(\phi)&=(\mathbf{J}_{3}(\phi)-\mathbf{J}_{2}(\phi))\mathbf{H}^{\top}(\phi).
\end{align*}
Therefore, the system \eqref{eqn-1} can be rewritten to be related to the sliding surface functional \eqref{eqn-44}. That is, the dynamics in \eqref{eqn-1} can written as $\dot{x}(t)=-\mathbf{J}_{2}(x_{t})\mathbf{H}^{\top}(x_{t})+\mathbf{J}_{3}(x_{t})\mathbf{H}^{\top}(x_{t})+g(x_{t})u$.

\subsection{Feedback Control Design}

With the sliding surface functional \eqref{eqn-44}, we next address the controller design. In the ideal sliding motion case, the system state is to satisfy the manifold invariant condition:
\begin{align}
\label{eqn-48}
\mathbf{U}(\phi)=0,
\end{align}
which can be verified via the functional $\mathbf{W}(\phi):=0.5\mathbf{U}^{2}(\phi)$. Specifically, the time derivative of $\mathbf{W}(\phi)$ is given by
\begin{align}
\label{eqn-49}
D^{+}\mathbf{W}(\phi)&=\mathbf{U}(\phi)D^{+}\mathbf{U}(\phi)\nonumber\\
&=\mathbf{U}(\phi)(-\mathbf{H}(\phi)\mathbf{J}_{2}(\phi)\mathbf{H}^{\top}(\phi)+\mathbf{H}(\phi)\mathbf{J}_{3}(\phi)\mathbf{H}^{\top}(\phi) \nonumber\\
&\quad+\mathbf{L}(\phi)+\mathbf{G}(\phi)u) \nonumber\\
&=\mathbf{U}(\phi)(\mathbf{H}(\phi)\mathbf{J}_{3}(\phi)\mathbf{H}^{\top}(\phi)+\mathbf{L}(\phi)+\mathbf{G}(\phi)u).
\end{align}
By the invariance condition $D^{+}\mathbf{W}(\phi)=0$, the equivalent control law is derived as
\begin{align*}
u_{\textsf{e}}(\phi)=\frac{\mathbf{G}^{\top}(\phi)(\mathbf{H}(\phi)\mathbf{J}_{3}(\phi)\mathbf{H}^{\top}(\phi)+\mathbf{L}(\phi))}{-\|\mathbf{G}(\phi)\|^{2}}.
\end{align*}
The equivalent control law $u_{\textsf{e}}$ neutralizes all working forces which locally affect the magnitude of the sliding surface coordinate. Since the state trajectory of the system \eqref{eqn-1} may move into the sublevel and superlevel sets of the sliding surface functional, the applied controller is of the following form:
\begin{align}
\label{eqn-50}
u(\phi)=\frac{\mathbf{G}^{\top}(\phi)(\mathbf{H}(\phi)\mathbf{J}_{3}(\phi)\mathbf{H}^{\top}(\phi)+\mathbf{L}(\phi)+\mathbf{K}(\phi))}{-\|\mathbf{G}(\phi)\|^{2}},
\end{align}
where $\mathbf{K}(\phi)>0$ is the additional item to be designed based on the applied sliding surface functional $\mathbf{U}(\phi)$. Obviously, $u(\phi)=u_{\textsf{e}}(\phi)-\|\mathbf{G}(\phi)\|^{-2}\mathbf{G}^{\top}(\phi)\mathbf{K}(\phi)$.

From all above discussion, we have the following theorem, which shows the satisfaction of both stabilization and safety objectives via the sliding surface functional $\mathbf{U}(\phi)$ in \eqref{eqn-44}.

\begin{theorem}
\label{thm-8}
Consider the system \eqref{eqn-1} with the safe set $\mathbb{C}\subset\mathbb{R}^{n}$ and the initial state $\xi\in\mathcal{C}([-\Delta, 0], \inte(\mathbb{C}))$. If Assumption \ref{asp-1} holds, and the sliding surface functional $\mathbf{U}$ in \eqref{eqn-44} is such that
\begin{align}
\label{eqn-51}
&|\mathbf{U}(\phi(\theta))|\geq|\mathbf{U}(\xi(\theta))|, \quad \forall \phi\in\mathcal{C}([-\Delta, 0], \partial\mathbb{C}), \\
\label{eqn-52}
&\mathbb{S}:=\{\phi\in\mathcal{C}([-\Delta, 0], \mathbb{C}): \mathbf{U}(\phi)=0\}\subset\mathcal{C}([-\Delta, 0], \inte(\mathbb{C})),
\end{align}
then the stabilization and safety objectives can be achieved simultaneously via the controller \eqref{eqn-50} with
\begin{equation}
\label{eqn-53}
\mathbf{K}(\phi):=\mathrm{K}\sign(\mathbf{U}(\phi)),
\end{equation}
where $\mathrm{K}>0$ is constant.
\end{theorem}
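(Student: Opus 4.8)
The plan is to run the standard two-phase sliding-mode argument — a finite-time \emph{reaching phase} onto $\mathbb{S}$ followed by the constrained \emph{sliding phase} — and to superimpose on it the two geometric conditions \eqref{eqn-51}--\eqref{eqn-52}, which carry the safety part. Throughout, Assumption~\ref{asp-1} guarantees $\|\mathbf{G}(\phi)\|^{2}>0$, so the controller \eqref{eqn-50} is well defined; since $\sign(\mathbf{U})$ makes \eqref{eqn-50} discontinuous, the whole analysis is understood in the generalized (Filippov) solution sense.

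First I would compute the evolution of the candidate $\mathbf{W}(\phi)=\tfrac12\mathbf{U}^{2}(\phi)$ along the closed loop. Substituting \eqref{eqn-50} with $\mathbf{K}(\phi)=\mathrm{K}\,\sign(\mathbf{U}(\phi))$ one gets $\mathbf{G}(\phi)u(\phi)=-(\mathbf{H}\mathbf{J}_{3}\mathbf{H}^{\top}+\mathbf{L}+\mathbf{K})(\phi)$ because $\mathbf{G}(\phi)\mathbf{G}^{\top}(\phi)=\|\mathbf{G}(\phi)\|^{2}$, so the bracket in \eqref{eqn-49} collapses to $-\mathbf{K}(\phi)$ and
\begin{equation*}
D^{+}\mathbf{W}(\phi)=-\mathbf{U}(\phi)\,\mathrm{K}\,\sign(\mathbf{U}(\phi))=-\mathrm{K}\,|\mathbf{U}(\phi)|=-\mathrm{K}\sqrt{2\mathbf{W}(\phi)}.
\end{equation*}
Setting $\mathbf{W}(t):=\mathbf{W}(x_{t})$, Lemma~5 in \citep{Pepe2016stabilization} gives $D^{+}\mathbf{W}(t)=D^{+}\mathbf{W}(x_{t})$, and the comparison lemma (Lemma~4.4 in \citep{Khalil2002nonlinear}) then yields $\sqrt{\mathbf{W}(t)}\le\max\{0,\sqrt{\mathbf{W}(0)}-\tfrac{\mathrm{K}}{\sqrt2}\,t\}$; hence $\mathbf{U}(x_{t})\equiv0$ for all $t\ge t_{r}:=|\mathbf{U}(\xi)|/\mathrm{K}$, i.e. the trajectory reaches $\mathbb{S}$ in finite time and stays there.

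Next I would prove forward invariance of $\mathbb{C}$. From the identity above, $|\mathbf{U}(x_{t})|$ is non-increasing, equals $|\mathbf{U}(\xi)|$ at $t=0$, and is strictly decreasing on $(0,t_{r})$ whenever $\mathbf{U}(\xi)\neq0$. If the state history ever left $\mathcal{C}([-\Delta,0],\inte(\mathbb{C}))$, continuity of $x(\cdot)$ would produce a first instant at which it meets $\mathcal{C}([-\Delta,0],\partial\mathbb{C})$, and condition \eqref{eqn-51} would force $|\mathbf{U}(x_{t})|\ge|\mathbf{U}(\xi)|$ there, contradicting the strict decrease; the degenerate case $\mathbf{U}(\xi)=0$ is covered by \eqref{eqn-52}, which already places the history in $\inte(\mathbb{C})$. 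Thus $x_{t}\in\mathcal{C}([-\Delta,0],\mathbb{C})$ for all $t\ge0$, and for $t\ge t_{r}$ the history lies on $\mathbb{S}\subset\mathcal{C}([-\Delta,0],\inte(\mathbb{C}))$ by \eqref{eqn-52}; safety follows.

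Finally, for stabilization I would analyze the sliding dynamics. On $\mathbb{S}$ the invariance conditions $\mathbf{U}(x_{t})=\Upsilon(\mathcal{V}(x_{t}),\mathcal{B}(x_{t}))=0$ and $D^{+}\mathbf{U}(x_{t})=0$ (enforced by the equivalent control $u_{\textsf{e}}$) hold; invoking the way $\mathbf{U}$ is constructed from the CLF $\mathcal{V}$ and the CBF $\mathcal{B}$ (as made precise in Section~\ref{subsec-discussion}), the reduced motion drives $\mathcal{V}(x_{t})\to0$, and then item~(i) of the pertinent CLF/CLKF definition (Definition~\ref{def-3} or \ref{def-9}) supplies a $\mathcal{KL}$ bound $|x(t)|\le\beta(\|\xi\|,t)$, i.e. the closed loop is semi-GAS on $\inte(\mathbb{C})$. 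I expect this last step to be the main obstacle: extracting asymptotic stability from the constrained reduced-order dynamics on $\mathbb{S}$ hinges on the specific form of $\Upsilon$ rather than on mere differentiability, and the discontinuity of \eqref{eqn-50} means the reaching/sliding argument must be justified via Filippov solutions, with chattering handled by the smoothing devices discussed in Section~\ref{subsec-discussion}.
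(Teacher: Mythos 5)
Your proposal follows essentially the same route as the paper: the same algebra collapses $D^{+}\mathbf{W}(\phi)$ to $-\mathrm{K}\sqrt{2\mathbf{W}(\phi)}$, convergence of the sliding variable follows, and safety is obtained exactly as intended from \eqref{eqn-51}--\eqref{eqn-52} (your first-crossing contradiction, with the degenerate case $\mathbf{U}(\xi)=0$ handled by \eqref{eqn-52}, is simply the explicit version of the paper's one-line remark that \eqref{eqn-52} places the manifold in the safe set and \eqref{eqn-51} keeps the evolution there). Two secondary differences are worth noting. For the reaching phase the paper invokes Theorem 2.2 of \citep{Pepe2014stabilization} to conclude that $\mathbf{W}$ converges to the origin, whereas you derive finite-time reaching by a comparison argument; both are adequate, although the result you want is the comparison principle rather than Lemma 4.4 of \citep{Khalil2002nonlinear}, whose hypotheses are tailored to asymptotic $\mathcal{KL}$ estimates --- a citation quibble only. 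For stabilization, the paper stops at ``$\mathbf{W}(\phi)$ converges to the origin, which implies the stabilization objective'': at the level of generality of Theorem \ref{thm-8} (arbitrary continuously differentiable $\Upsilon$) this identification of sliding-surface convergence with state stabilization is precisely the step you flag as the main obstacle; the paper does not carry out the sliding-phase analysis you sketch, and the structure on $\Upsilon$ needed to pass from $\mathbf{U}\to0$ to convergence of $\mathcal{V}$ is supplied only downstream, in Propositions \ref{prop-3} and \ref{prop-4}. So your hedge is not a gap relative to the paper --- if anything you are more explicit than the paper about what the reaching argument alone delivers, and about the Filippov-solution caveat caused by the sign function, which the paper addresses only informally in Section \ref{subsec-discussion}.
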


\begin{proof}
From \eqref{eqn-49}, \eqref{eqn-50} and \eqref{eqn-53}, we have
\begin{align*}
D^{+}\mathbf{W}(\phi)&=\mathbf{U}(\phi)(\mathbf{H}(\phi)\mathbf{J}_{3}(\phi)\mathbf{H}^{\top}(\phi)+\mathbf{L}(\phi)+\mathbf{G}(\phi)u)  \nonumber  \\
&=-\mathbf{U}(\phi)\mathrm{K}\sign(\mathbf{U}(\phi))  \nonumber  \\
&=-\mathrm{K}\sqrt{2\mathbf{W}(\phi)}.
\end{align*}
From \citep[Theorem 2.2]{Pepe2014stabilization}, the functional $\mathbf{W}(\phi)$ converges to the origin with the increase of time, which implies the satisfaction of the stabilization objective. In addition, we can see from \eqref{eqn-52} that the manifold invariant condition is included in the safe set, and from \eqref{eqn-51} that the evolution of $\mathbf{W}(\phi)$ is in the safe set. Hence, the safety objective is guaranteed.
\hfill $\blacksquare$
\end{proof}

Theorem \ref{thm-8} involves a standard method to design $\mathbf{K}(\phi)$ in \eqref{eqn-50}. Here, we emphasize that the choice of $\mathbf{K}(\phi)$ is not unique. In particular, due to the function $\sign$ in \eqref{eqn-53}, the controller \eqref{eqn-50} is not continuous. To deal with this issue, we can replace the function $\sign$ by a sigmoid function such that $\mathbf{K}(\phi):=\mathrm{K}\mathbf{U}(\phi)/(\|\mathbf{U}(\phi)\|+\varepsilon)$ with arbitrarily small $\varepsilon>0$ is continuous; see also Section 1.2.1 in \citep{Shtessel2014sliding}. The condition \eqref{eqn-51} implies that $\mathbf{W}(\phi)$ will not converge to the boundary of the safe set, and the condition \eqref{eqn-52} ensures the sliding surface to be included in the safe set. Hence, the safety objective is guaranteed via \eqref{eqn-51}-\eqref{eqn-52}, which can be strengthened to
\begin{align*}
&|\mathbf{U}(\phi(\theta))|\geq|\mathbf{U}(\xi(\theta))|, \quad \forall \phi\in\mathcal{C}([-\Delta, 0], \partial\mathbb{C}_{1}), \\
&\mathbb{S}=\{\phi\in\mathcal{C}([-\Delta, 0], \mathbb{C}): \mathbf{U}(\phi)=0\}\subset\mathcal{C}([-\Delta, 0], \mathbb{C}_{1}),
\end{align*}
where $\mathbb{C}_{1}=\mathbb{C}-\mathbf{B}(\varepsilon)$ and $\varepsilon>0$ is arbitrarily small. In this case, the safety objective can be guaranteed under some possible chattering phenomena. Next, we show how to construct this sliding surface functional \eqref{eqn-44} such that \eqref{eqn-51}-\eqref{eqn-52} are satisfied under the controller \eqref{eqn-50}. For this purpose, two types of sliding surface functionals are constructed, and sufficient conditions are established in the following propositions.

\begin{proposition}
\label{prop-3}
Consider the system \eqref{eqn-1} with the safe set $\mathbb{C}\subset\mathbb{R}^{n}$ and the initial state $\xi\in\mathcal{C}([-\Delta, 0], \inte(\mathbb{C}))$. If
\begin{enumerate}[(i)]
  \item the control Lyapunov functional is $\mathcal{V}(\phi)$,
  \item the Z-type control barrier functional is $\mathcal{B}(\phi)$,
  \item the sliding surface functional is
  \begin{align}
  \label{eqn-54}
  \mathbf{U}(\phi)&:=\mathbf{a}\mathcal{V}(\phi)-\mathbf{b}\mathcal{B}(\phi)+\mathbf{c},
  \end{align}
  with $\mathbf{a}, \mathbf{b}, \mathbf{c}>0$ satisfying
  \begin{equation}
  \label{eqn-55}
  0\leq\|\mathbf{U}(\xi)\|<\mathbf{c},
  \end{equation}
\end{enumerate}
then the controller \eqref{eqn-50} guarantees simultaneously the stabilization and safety objectives of the system \eqref{eqn-1}.
\end{proposition}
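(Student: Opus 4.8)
The plan is to deduce the proposition from Theorem~\ref{thm-8}: I would verify that the particular sliding surface functional \eqref{eqn-54} meets the two structural conditions \eqref{eqn-51} and \eqref{eqn-52} of that theorem, after which the controller \eqref{eqn-50} with the gain \eqref{eqn-53} (or its continuous sigmoid surrogate) immediately yields the joint stabilization/safety conclusion. Note first that \eqref{eqn-54} is of the required form \eqref{eqn-44} with $\Upsilon(v,b)=\mathbf{a}v-\mathbf{b}b+\mathbf{c}$, which is continuously differentiable, so $\mathbf{H},\mathbf{G},\mathbf{L}$ are well defined and Assumption~\ref{asp-1} (in force throughout Section~\ref{sec-combinedfunction}) applies. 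Two sign facts will be used repeatedly: as a control Lyapunov functional, $\mathcal{V}$ is nonnegative, i.e.\ $\mathcal{V}(\phi)\ge 0$; and since $\mathcal{B}$ is a Z-type control barrier functional for $\mathbb{C}$, its zero level set is $\partial\mathbb{C}$ and its positivity set is $\inte(\mathbb{C})$.

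To check \eqref{eqn-52}, take $\phi$ on the sliding manifold $\mathbb{S}$: then $\mathbf{U}(\phi)=0$, i.e.\ $\mathbf{b}\,\mathcal{B}(\phi)=\mathbf{a}\,\mathcal{V}(\phi)+\mathbf{c}$, and since $\mathbf{a},\mathbf{c}>0$ and $\mathcal{V}(\phi)\ge 0$ the right-hand side is bounded below by $\mathbf{c}>0$; hence $\mathcal{B}(\phi)\ge\mathbf{c}/\mathbf{b}>0$, so $\phi$ lies strictly inside $\mathbb{C}$, which gives $\mathbb{S}\subset\mathcal{C}([-\Delta,0],\inte(\mathbb{C}))$. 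To check \eqref{eqn-51}, evaluate $\mathbf{U}$ on $\partial\mathbb{C}$, where $\mathcal{B}=0$: there $\mathbf{U}=\mathbf{a}\,\mathcal{V}+\mathbf{c}\ge\mathbf{c}>0$, so $|\mathbf{U}|\ge\mathbf{c}$ on $\partial\mathbb{C}$; on the other hand hypothesis \eqref{eqn-55} gives $|\mathbf{U}(\xi(\theta))|\le\|\mathbf{U}(\xi)\|<\mathbf{c}$ for all $\theta\in[-\Delta,0]$. Chaining the two estimates yields $|\mathbf{U}(\phi(\theta))|\ge\mathbf{c}>\|\mathbf{U}(\xi)\|\ge|\mathbf{U}(\xi(\theta))|$ for every $\phi$ valued in $\partial\mathbb{C}$, i.e.\ \eqref{eqn-51}. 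With \eqref{eqn-51}--\eqref{eqn-52} in hand, Theorem~\ref{thm-8} applies to \eqref{eqn-1}, \eqref{eqn-50}, \eqref{eqn-53} and delivers exactly the simultaneous stabilization and forward invariance of $\mathbb{C}$ asserted by the proposition.

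The one genuinely delicate point is the verification of \eqref{eqn-51}: it works only because the strictly positive offset $\mathbf{c}$ in \eqref{eqn-54}, the one-sided signs of $\mathcal{V}$ ($\ge 0$) and of $\mathcal{B}$ (vanishing exactly on $\partial\mathbb{C}$), and the smallness condition \eqref{eqn-55} together separate the range of $|\mathbf{U}|$ on $\partial\mathbb{C}$ (at least $\mathbf{c}$) from its range on the initial datum $\xi$ (strictly below $\mathbf{c}$). Dropping any one of hypotheses (i)--(iii) would let the sliding motion push the state onto $\partial\mathbb{C}$; the rest of the argument is routine and is already packaged inside Theorem~\ref{thm-8}.
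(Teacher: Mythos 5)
Your proposal is correct, but it is organized differently from the paper's own proof. You verify the two structural hypotheses \eqref{eqn-51}--\eqref{eqn-52} of Theorem \ref{thm-8} for the specific functional \eqref{eqn-54}: on $\partial\mathbb{C}$ the Z-type barrier satisfies $\mathcal{B}=0$, so $|\mathbf{U}|\geq\mathbf{c}$ there, while \eqref{eqn-55} keeps $|\mathbf{U}(\xi(\theta))|\leq\|\mathbf{U}(\xi)\|<\mathbf{c}$; and on $\mathbb{S}$ one has $\mathbf{b}\mathcal{B}(\phi)=\mathbf{a}\mathcal{V}(\phi)+\mathbf{c}\geq\mathbf{c}>0$. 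You then invoke Theorem \ref{thm-8} wholesale for both objectives. The paper instead uses Theorem \ref{thm-8} only for the stabilization part (convergence of $\mathbf{W}$) and proves safety from scratch: from $D^{+}\mathbf{W}(\phi)<0$ it obtains $|\mathbf{U}(x_{t}(\theta))|\leq|\mathbf{U}(x_{0}(\theta))|$ and then performs a sign-based case analysis on $\mathbf{U}(x_{t})$ and $\mathbf{U}(x_{0})$, using \eqref{eqn-55} and $\mathcal{V}\geq0$ to derive an explicit positive lower bound on $\mathbf{b}\mathcal{B}(x_{t}(\theta))$, hence $\mathcal{B}(x_{t}(\theta))>0$ for all $t$ and $\theta$. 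Your reduction is shorter and isolates exactly what \eqref{eqn-55} buys, namely the separation of the range of $|\mathbf{U}|$ on $\partial\mathbb{C}$ (at least $\mathbf{c}$) from its range along the trajectory (strictly below $\mathbf{c}$); its price is that it inherits the terseness of the safety step in Theorem \ref{thm-8}, whereas the paper's inlined argument produces the invariance conclusion (positivity of $\mathcal{B}$ along the closed loop) explicitly and does not need \eqref{eqn-51}--\eqref{eqn-52} as an intermediary. One caveat, shared with the paper rather than specific to you: when $\mathcal{V}$ and $\mathcal{B}$ are Razumikhin-type, $\mathcal{B}(\phi)>0$ only places $\phi(0)$ in $\inte(\mathbb{C})$, so the set-membership claims in \eqref{eqn-52} must be read pointwise in $\theta$; the paper is equally informal on this point, so it does not constitute a gap in your argument.
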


\begin{proof}
From Theorem \ref{thm-8}, the sliding surface converges to the origin under the controller \eqref{eqn-50}, and thus the stabilization objective is guaranteed. In the following, we just show the satisfaction of the safety objective.

Since $D^{+}\mathbf{W}(\phi)<0$, we have $|\mathbf{U}(x_{t}(\theta))|\leq|\mathbf{U}(x_{0}(\theta))|$ for all $t>0$ and $\theta\in[-\Delta, 0]$. In the following, we consider two cases. For the case $\mathbf{U}(x_{t}(\theta))>0$,
\begin{equation*}
0<\mathbf{a}\mathcal{V}(x_{t}(\theta))-\mathbf{b}\mathcal{B}(x_{t}(\theta))+\mathbf{c}\leq|\mathbf{U}(x_{0}(\theta))|.
\end{equation*}
From \eqref{eqn-55}, if $\mathbf{U}(x_{0}(\theta))\geq0$ for all $\theta\in[-\Delta, 0]$, then $\mathbf{a}\mathcal{V}(x_{0}(\theta))-\mathbf{b}\mathcal{B}(x_{0}(\theta))<0$, and
\begin{align}
\label{eqn-56}
\mathbf{b}\mathcal{B}(x_{t}(\theta))&\geq\mathbf{a}\mathcal{V}(x_{t}(\theta))-(\mathbf{a}\mathcal{V}(x_{0}(\theta))-\mathbf{b}\mathcal{B}(x_{0}(\theta))).
\end{align}
Since $\mathbf{a}\mathcal{V}(x_{t})\geq0$ for all $t\in\mathbb{R}^{+}$, we have from \eqref{eqn-56} that $\mathcal{B}(x_{t})>0$ for all $t\in\mathbb{R}^{+}$. If $\mathbf{U}(x_{0}(\theta))<0$ for all $\theta\in[-\Delta, 0]$, then $|\mathbf{U}(x_{0}(\theta))|=-\mathbf{a}\mathcal{V}(x_{0}(\theta))+\mathbf{b}\mathcal{B}(x_{0}(\theta))-\mathbf{c}$, and
\begin{align*}
\mathbf{b}\mathcal{B}(x_{t}(\theta))&\geq\mathbf{a}\mathcal{V}(x_{t}(\theta))+2\mathbf{c}-(\mathbf{a}\mathcal{V}(x_{0}(\theta))-\mathbf{b}\mathcal{B}(x_{0}(\theta))),
\end{align*}
where $2\mathbf{c}-(\mathbf{a}\mathcal{V}(x_{0}(\theta))-\mathbf{b}\mathcal{B}(x_{0}(\theta)))\geq0$ holds from \eqref{eqn-55}. Hence, $\mathcal{B}(x_{t}(\theta))>0$ for all $t\in\mathbb{R}^{+}$ and $\theta\in[-\Delta, 0]$.

For the case $\mathbf{U}(x_{t})<0$, if $\mathbf{U}(x_{0}(\theta))\geq0$ for all $\theta\in[-\Delta, 0]$, then $-\mathbf{a}\mathcal{V}(x_{0}(\theta))+\mathbf{b}\mathcal{B}(x_{0}(\theta))-\mathbf{c}
<\mathbf{a}\mathcal{V}(x_{t}(\theta))-\mathbf{b}\mathcal{B}(x_{t}(\theta))+\mathbf{c}<0$, which implies from \eqref{eqn-55} that
\begin{align*}
\mathbf{b}\mathcal{B}(x_{t}(\theta))&>\mathbf{a}\mathcal{V}(x_{t}(\theta))+\mathbf{c}, \\
\mathbf{b}\mathcal{B}(x_{t}(\theta))&<\mathbf{a}\mathcal{V}(x_{t}(\theta))+2\mathbf{c}-(\mathbf{a}\mathcal{V}(x_{0}(\theta))-\mathbf{b}\mathcal{B}(x_{0}(\theta))).
\end{align*}
Hence, $\mathcal{B}(x_{t}(\theta))>0$ for all $t\in\mathbb{R}^{+}$ and $\theta\in[-\Delta, 0]$. If $\mathbf{U}(x_{0}(\theta))<0$ for all $\theta\in[-\Delta, 0]$, then we follow the similar mechanism to derive $\mathcal{B}(x_{t}(\theta))>0$ for all $t\in\mathbb{R}^{+}$ and $\theta\in[-\Delta, 0]$. That is, the safety objective is guaranteed.
\hfill$\blacksquare$
\end{proof}

In Proposition \ref{prop-3}, the functional \eqref{eqn-54} is a simple and linear combination of the proposed CLRF (or CLKF) and Z-CBRF (or Z-CBKF). In this case, the condition \eqref{eqn-55} is required to be satisfied, which depends only on the initial state. Hence, given the initial state, the choice of the variables $\mathbf{a}, \mathbf{b}, \mathbf{c}>0$ is constrained. Once $\mathbf{a}, \mathbf{b}, \mathbf{c}>0$ are fixed, the condition \eqref{eqn-55} in turn constrains the initial state starting from which the stabilization and safety objectives can be achieved simultaneously.

\begin{proposition}
\label{prop-4}
Consider the system \eqref{eqn-1} with the safe set $\mathbb{C}\subset\mathbb{R}^{n}$ and the initial state $\xi\in\mathcal{C}([-\Delta, 0], \inte(\mathbb{C}))$. If
\begin{enumerate}[(i)]
  \item the control Lyapunov functional is $\mathcal{V}(\phi)$,
  \item the control barrier functional is $\mathcal{B}(\phi)$,
  \item the sliding surface functional is
  \begin{align}
  \label{eqn-57}
  \mathbf{U}(\phi)&:=\alpha(\mathcal{V}(\phi))+\beta(\mathcal{B}(\phi)),
  \end{align}
  with the functions $\alpha, \beta: \mathbb{R}\rightarrow\mathbb{R}^{+}$ satisfying
  \begin{equation}
  \label{eqn-58}
  \mathbf{U}(\phi)\geq\mathbf{U}(\xi), \quad \forall \phi\in\mathcal{C}([-\Delta, 0], \partial\mathbb{C}),
  \end{equation}
\end{enumerate}
then the controller \eqref{eqn-50} guarantees simultaneously the stabilization and safety objectives of the system \eqref{eqn-1}.
\end{proposition}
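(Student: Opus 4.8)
The plan is to deduce Proposition~\ref{prop-4} from Theorem~\ref{thm-8}: under the standing Assumption~\ref{asp-1}, it suffices to verify that the particular sliding surface functional \eqref{eqn-57} satisfies the two structural requirements \eqref{eqn-51}--\eqref{eqn-52}, after which the controller \eqref{eqn-50} with $\mathbf{K}(\phi)=\mathrm{K}\sign(\mathbf{U}(\phi))$ delivers both objectives exactly as in that theorem. The stabilization part then needs no extra work: $D^{+}\mathbf{W}(\phi)=-\mathrm{K}\sqrt{2\mathbf{W}(\phi)}$ drives $\mathbf{W}(x_{t})=\frac12\mathbf{U}^{2}(x_{t})$, hence $\mathbf{U}(x_{t})=\alpha(\mathcal{V}(x_{t}))+\beta(\mathcal{B}(x_{t}))$, to zero; since $\alpha,\beta$ are $\mathbb{R}^{+}$-valued, each summand vanishes in the limit, and the sandwich bound on the CLF $\mathcal{V}$ (item (i) of Definition~\ref{def-2} or \ref{def-9}) then yields $x(t)\to0$ with a $\mathcal{KL}$ estimate, as in the proof of Proposition~\ref{prop-3}.

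It remains to check the structural requirements. Condition \eqref{eqn-51} is immediate: because $\alpha,\beta$ take values in $\mathbb{R}^{+}$, the functional $\mathbf{U}$ in \eqref{eqn-57} is nonnegative, so $|\mathbf{U}(\cdot)|=\mathbf{U}(\cdot)$ and \eqref{eqn-58} is nothing but \eqref{eqn-51}. For \eqref{eqn-52}, take any $\phi$ with $\mathbf{U}(\phi)=0$; nonnegativity of the two summands forces $\alpha(\mathcal{V}(\phi))=0$ and $\beta(\mathcal{B}(\phi))=0$, and provided $\beta$ is chosen so that $\beta\circ\mathcal{B}$ does not vanish on $\partial\mathbb{C}$ (which is consistent with the barrier bound \eqref{eqn-37}, resp.\ \eqref{eqn-38}), the second equality forces $\phi\in\mathcal{C}([-\Delta,0],\inte(\mathbb{C}))$; hence $\mathbb{S}\subset\mathcal{C}([-\Delta,0],\inte(\mathbb{C}))$. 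With \eqref{eqn-51}--\eqref{eqn-52} in hand, Theorem~\ref{thm-8} applies and finishes the proof.

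The route I would actually write out is the self-contained one, mirroring the proof of Proposition~\ref{prop-3}: along the closed loop $D^{+}\mathbf{W}(\phi)\le0$, so $\mathbf{U}(x_{t})\le\mathbf{U}(x_{0})=\mathbf{U}(\xi)$ for all $t\ge0$, with strict inequality for $t>0$ when $\mathbf{U}(\xi)>0$; since \eqref{eqn-58} makes $\mathbf{U}\ge\mathbf{U}(\xi)$ on $\partial\mathbb{C}$, the trajectory, whose $\mathbf{U}$-value never exceeds $\mathbf{U}(\xi)$, can never place its history on $\partial\mathbb{C}$, while if $\mathbf{U}(\xi)=0$ one uses $\xi\in\mathbb{S}$ together with \eqref{eqn-52} and $\mathbf{U}(x_{t})\equiv0$ to keep the whole trajectory in $\inte(\mathbb{C})$; a case split on $\alpha(\mathcal{V})$ versus $\beta(\mathcal{B})$, as for \eqref{eqn-56}, then pins down $\mathcal{B}(x_{t})>0$ for all $t$. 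The main obstacle is the mismatch between the functional-valued sliding surface $\mathbf{U}$ and the pointwise forward-invariance claim: at the first instant the current state $x(t^{\ast})$ reaches $\partial\mathbb{C}$, the delayed portion of the history $x_{t^{\ast}}$ need not lie entirely on $\partial\mathbb{C}$, so \eqref{eqn-58} cannot be invoked verbatim. I would close this gap either by exploiting that $\mathbf{U}$ is built from the smoothly separable pieces $\mathcal{V}(\phi)=\mathcal{V}_{1}(\phi(0))+\mathcal{V}_{2}(\phi)$ and $\mathcal{B}(\phi)=\mathcal{B}_{1}(\phi(0))+\mathcal{B}_{2}(\phi)$, with the barrier growth carried by the pointwise term, or---following the discussion after Theorem~\ref{thm-8}---by strengthening \eqref{eqn-58} to the shrunken safe set $\mathbb{C}_{1}=\mathbb{C}-\mathbf{B}(\varepsilon)$, which simultaneously absorbs the chattering induced by $\sign(\mathbf{U}(\phi))$ in \eqref{eqn-53}.
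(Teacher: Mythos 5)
Your ``self-contained'' route is essentially the paper's own proof: the paper likewise invokes Theorem~\ref{thm-8} only for the stabilization part and then argues safety directly from $D^{+}\mathbf{W}(\phi)<0$, the nonnegativity of $\mathbf{U}$ in \eqref{eqn-57} (so $\mathbf{U}(x_{t})<\mathbf{U}(\xi)$ for $t>0$), and \eqref{eqn-58} barring the trajectory from $\partial\mathbb{C}$, concluding from $\xi\in\mathcal{C}([-\Delta,0],\inte(\mathbb{C}))$ that the state stays in the safe set. The obstacle you flag---that \eqref{eqn-58} only constrains histories lying entirely in $\partial\mathbb{C}$, whereas at a first exit time only the current state need touch the boundary---is glossed over in the paper's proof (its only remedy being the strengthening to $\mathbb{C}_{1}=\mathbb{C}-\mathbf{B}(\varepsilon)$ discussed in Remark~\ref{rmk-3}), so your proposed fixes are a more careful elaboration of the same argument rather than a different approach.
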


\begin{proof}
From Theorem \ref{thm-8}, the stabilization objective is guaranteed via the controller \eqref{eqn-50}, and we next show the satisfaction of the safety objective. From the construction of $\mathbf{U}(\phi)$, $\mathbf{U}(x_{t})\geq0$ for all $t>0$. Since $D^{+}\mathbf{W}(\phi)<0$, we have $\mathbf{U}(x_{t})<\mathbf{U}(x_{0})$ for all $t>0$. From \eqref{eqn-58}, $\mathbf{U}(\phi)\geq\mathbf{U}(\xi)$ for all $\phi\in\mathcal{C}([-\Delta, 0],\partial\mathbb{C})$, the state trajectory cannot reach the boundary of the safe set. Note that $\xi\in\mathcal{C}([-\Delta, 0], \inte(\mathbb{C}))$, and thus the state trajectory will stay in the safe set, which in turn ensures the safety objective.
\hfill $\blacksquare$
\end{proof}

\begin{remark}
\label{rmk-3}
The condition \eqref{eqn-58} can be strengthened to be satisfied for $\phi\in\mathcal{C}([-\Delta, 0], \partial\mathbb{C}_{1})$ with $\mathbb{C}_{1}=\mathbb{C}-\mathbf{B}(\varepsilon)$ and arbitrarily small $\varepsilon>0$. That is, the condition \eqref{eqn-58} is satisfied when the state is closed to $\partial\mathbb{C}$ instead of reaches $\partial\mathbb{C}$. In such case, the reach of the boundary of the safe set is avoided to attenuate the possibility of unsafety due to chattering phenomena.
\hfill $\square$
\end{remark}

Different from Proposition \ref{prop-3}, no constraints are imposed on the type of the CBF in Proposition \ref{prop-4}. The sliding surface functional \eqref{eqn-57} is extensively applied in many existing works \citep{Ames2016control, Panagou2015distributed}, and \eqref{eqn-58} is not strict and can be achieved by the construction. To show this, we illustrate the choices of $\alpha$ and $\beta$ in \eqref{eqn-57}. First, $\alpha(\mathcal{V})$ can be set as $\mathcal{V}$ directly. Another common construction of $\alpha(\mathcal{V})$ is based on the hyperbolic function \citep{Badawy2009small} and given as $\alpha(\mathcal{V}(\phi)):=a\sqrt{\mathcal{V}(\phi)+b}-b$, where $a, b>0$. Second, if $\mathcal{B}$ is an R-CBKF satisfying $\mathcal{B}(\phi)\geq\mathcal{B}(\xi)$ for all $\phi\in\mathcal{C}([-\Delta, 0], \partial\mathbb{C})$, then $\beta(\mathcal{B})$ can be set as $\mathcal{B}$ directly. For instance, $\mathcal{B}(\phi)=\ln(1+c/h(\phi))$ with $c>0$ ensures that $\mathcal{B}(\phi)\rightarrow\infty$ as $\phi\rightarrow\mathcal{C}([-\Delta, 0], \partial\mathbb{C})$. We can limit the effects of the CBF by the following construction: $\beta(\mathcal{B}(\phi))=c\cdot\sgn(\beta_{1}(h(\phi)))\beta^{2}_{1}(h(\phi))h^{-2}(\phi)$, where $c>0$ and $\beta_{1}: \mathbb{R}^{+}\rightarrow\mathbb{R}$ is such that $\mathbb{S}_{1}:=\{\phi\in\mathcal{C}([-\Delta, 0], \mathbb{R}^{n}): \beta_{1}(h(\phi))=0\}\subset\mathcal{C}([-\Delta, 0], \inte(\mathbb{C}))$. That is, the set $\mathbb{S}_{1}$ is a contraction of the safe set. Only when the state trajectory moves into the set $\mathcal{C}([-\Delta, 0], \inte(\mathbb{C}))\setminus\mathbb{S}_{1}$ does the CBF affects the controller design to guarantee the safety objective. In particular, $\beta_{1}(h(\phi))$ can be set as $h(\phi)-\varepsilon$ simply with $\varepsilon>0$ determining the distance between the sets $\mathbb{S}$ and $\mathbb{S}_{1}$; see Section \ref{subsec-MSyn}.

\begin{remark}
\label{rmk-4}
Since whether $\mathcal{V}$ and $\mathcal{B}$ are the Razumikhin-type or Krasovskii-type is not specified in \eqref{eqn-44}, the results derived in this section are applied to these two types. An advantage of these results lies in the potential combination between different types of CLFs and CBFs. That is, CLRF (or CLKF) and CBKF (or CBRF) can be merged to study the stabilization and safety objectives simultaneously; see Section \ref{subsec-MSyn}.
\hfill $\square$
\end{remark}

\subsection{Further Discussion}
\label{subsec-discussion}

Since SMC techniques are applied here, chattering phenomena may exist, which mean the oscillations with finite frequency and amplitude caused by the switching around the sliding surface \citep{Sira1999general}. In addition, Assumption \ref{asp-1} results in some conservatism since this assumption is not necessarily always valid.

The chattering phenomena occur due to the sign function in \eqref{eqn-53}. To attenuate the chattering phenomena, we can redesign the controller \eqref{eqn-50} by adjusting the item $\mathbf{K}(\phi)$. In particular, for Razumikhin-type control functions,
\begin{equation}
\label{eqn-59}
\mathbf{K}(\phi)=\mathrm{K}_{1}\mathbf{U}(\phi(0))-\frac{\alpha(\|\mathbf{U}(\phi)\|^{2})}{\mathbf{U}(\phi(0))},
\end{equation}
where $\mathrm{K}_{1}>0$ and $\alpha\in\mathcal{K}$ satisfying $0<\alpha(v)<\mathrm{K}_{1}v$ for all $v>0$. For Krasovskii-type control functionals,
\begin{equation}
\label{eqn-60}
\mathbf{K}(\phi)=\mathrm{K}_{1}\mathbf{U}(\phi), \quad \mathrm{K}_{1}>0.
\end{equation}
With \eqref{eqn-59}-\eqref{eqn-60}, the convergence of the sliding surface functional can be achieved via Corollary 1 in \citep{Pepe2021nonlinear} and Theorem 2.2 in \citep{Pepe2014stabilization}, respectively. Since the sign function is avoided in these settings, the chattering phenomena are attenuated or avoided; see Section \ref{subsec-CCC}.

\begin{figure*}
\centering
\begin{tabular}{cccc}
\includegraphics[width=0.47\columnwidth]{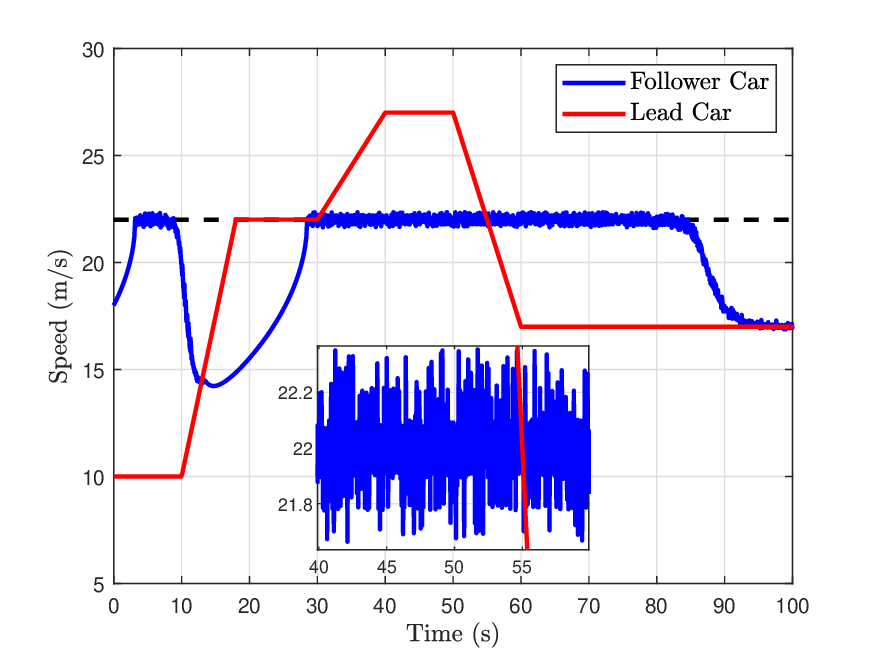} \vspace{-1pt} &
\includegraphics[width=0.47\columnwidth]{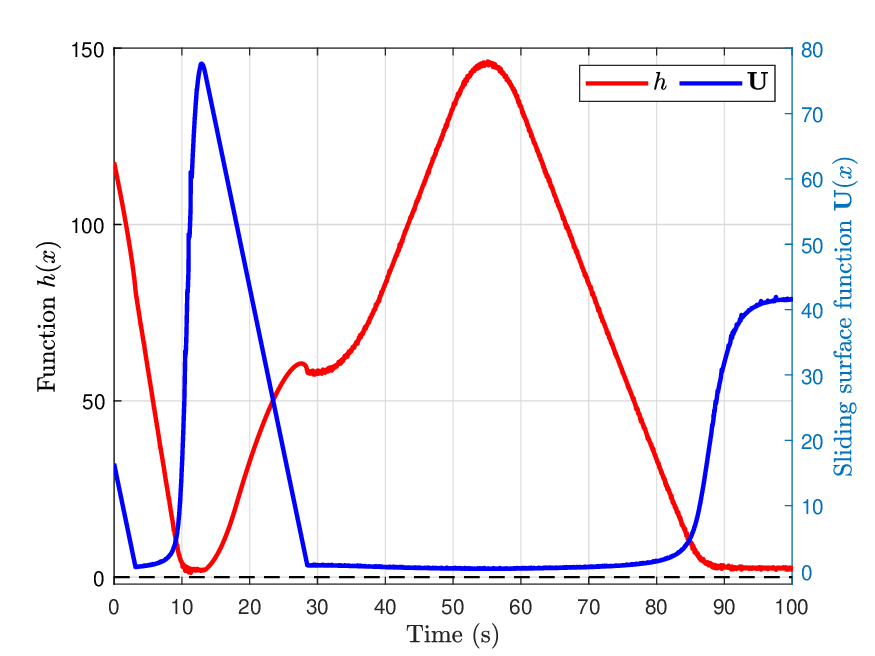} \vspace{-1pt} &
\includegraphics[width=0.47\columnwidth]{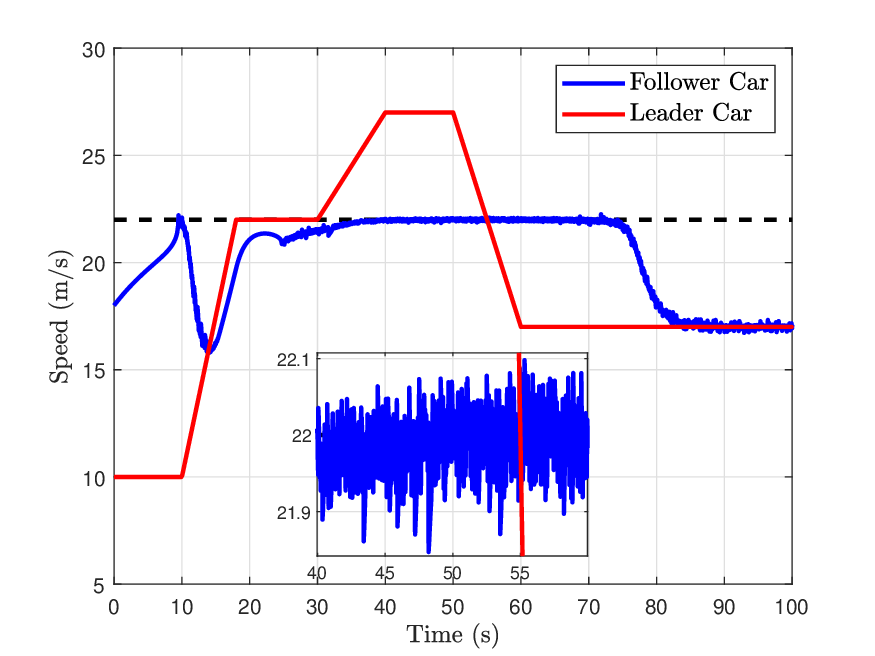} \vspace{-1pt} &
\includegraphics[width=0.47\columnwidth]{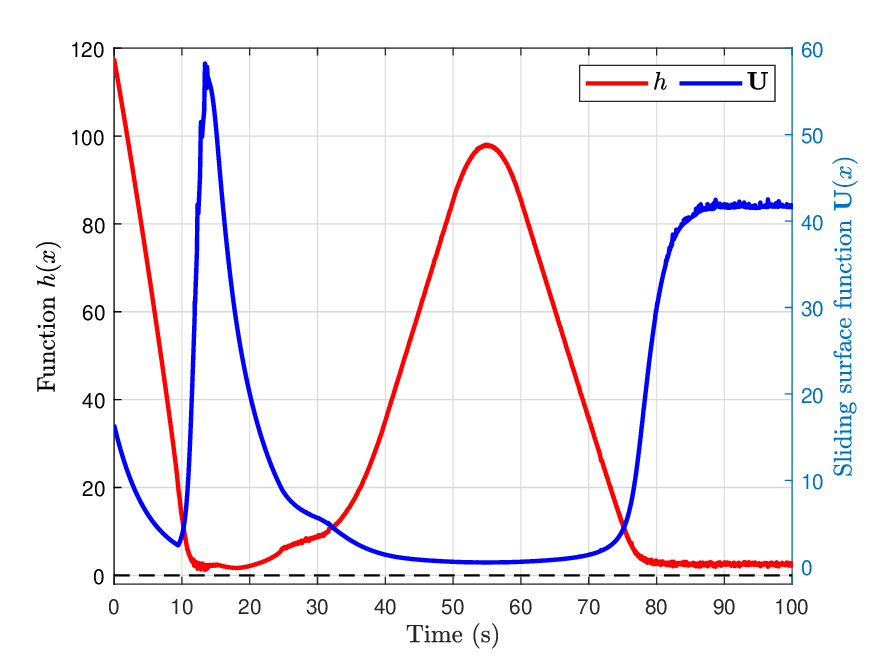} \vspace{-1pt} \\
\tiny{(a)} & \tiny{(b)} & \tiny{(c)} & \tiny{(d)}
\end{tabular}
\caption{\footnotesize Simulation of the CCC problem under two different controllers and $\Delta=0.2$. (a)-(b): the controller \eqref{eqn-50} with $\mathbf{K}(\phi)=5\sign(\mathbf{U}(\phi(0)))$. (c)-(d): the controller \eqref{eqn-50} with $\mathbf{K}(\phi)=2.2\mathbf{U}(\phi(0))-2\|\mathbf{U}(\phi)\|^{2}/\mathbf{U}(\phi(0))$.}
\label{fig-1}
\end{figure*}

\begin{figure*}
\centering
\begin{tabular}{cccc}
\includegraphics[width=0.47\columnwidth]{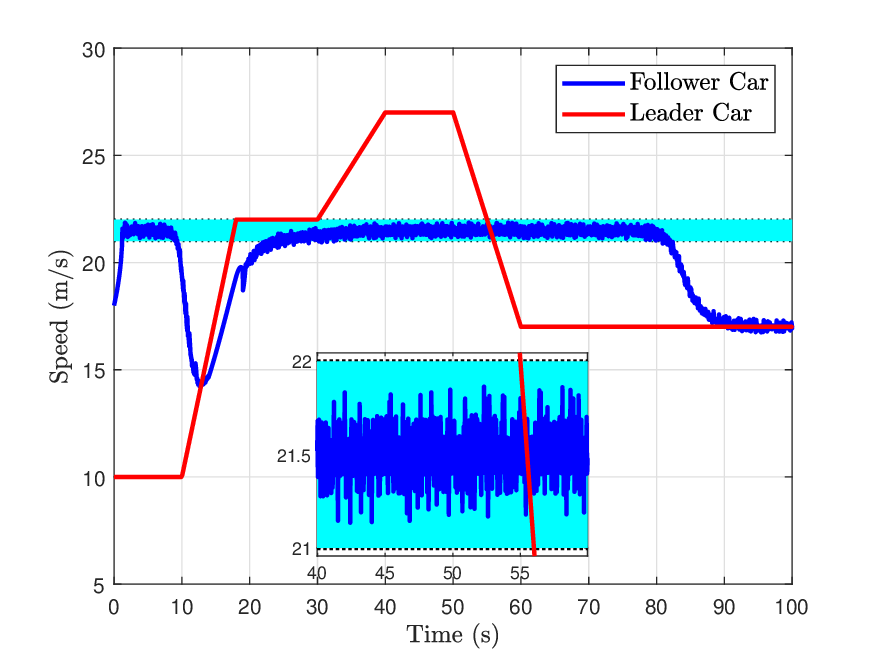}\vspace{-1pt} &
\includegraphics[width=0.47\columnwidth]{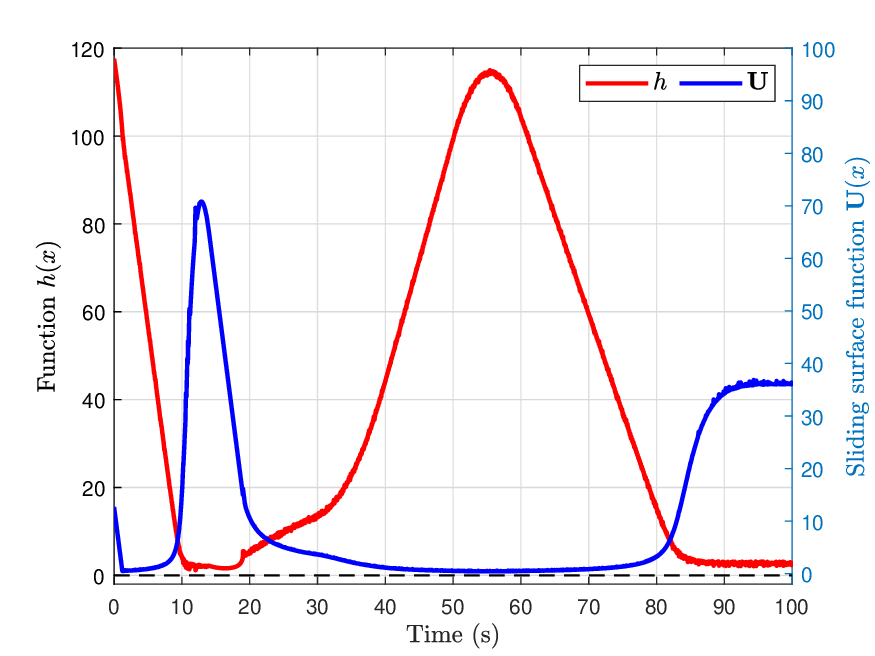}\vspace{-1pt} &
\includegraphics[width=0.47\columnwidth]{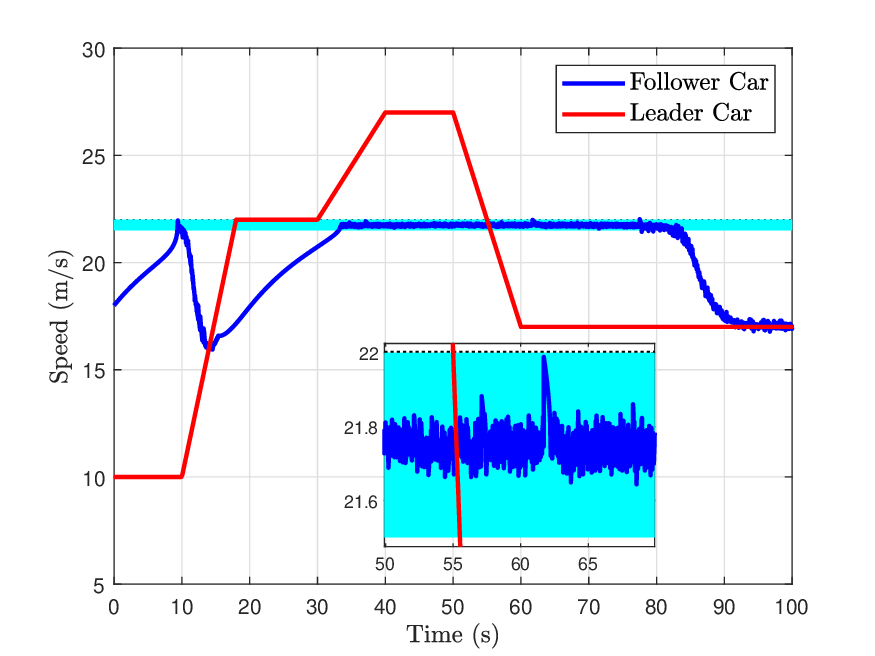}\vspace{-1pt} &
\includegraphics[width=0.47\columnwidth]{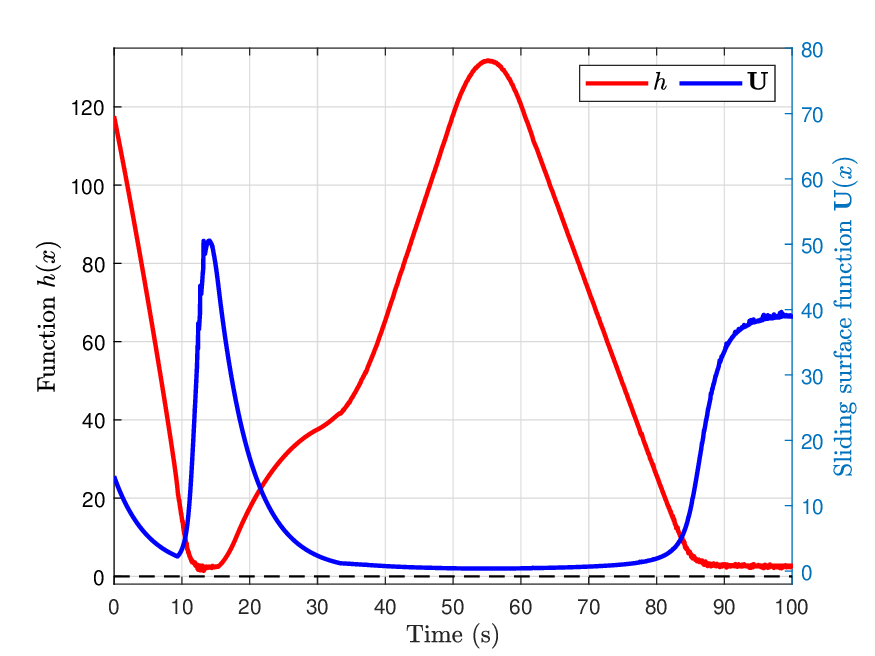}\vspace{-1pt} \\
\tiny{(a)} & \tiny{(b)} & \tiny{(c)} & \tiny{(d)}
\end{tabular}
\caption{\footnotesize  Simulation of the CCC problem under two different controllers and $\Delta=0.5$. The cyan region is the desired velocity band for the follower car. (a)-(b): the controller \eqref{eqn-50} with $\mathbf{K}(\phi)=10\sign(\mathbf{U}(\phi(0)))$ and the desired band $[21, 22]$. (c)-(d): the controller \eqref{eqn-50} with $\mathbf{K}(\phi)=2.2\mathbf{U}(\phi(0))-2\|\mathbf{U}(\phi)\|^{2}/\mathbf{U}(\phi(0))$ and the desired band $[21.6, 22]$.}
\label{fig-2}
\end{figure*}

If the transversality condition, that is, Assumption \ref{asp-1}, does not hold, then higher-order sliding surface functionals \citep{Shtessel2014sliding, Oguchi2006sliding} can be applied to guarantee the properties of the above controller on the one hand and to attenuate the chattering phenomena on the other hand. To show this, we take the relative degree 2 case as an example, because the second-order SMC is extensively applied to attenuate the chattering phenomena \citep{Shtessel2014sliding}. In this case, the sliding surface functional \eqref{eqn-44} satisfies
\begin{align}
\label{eqn-61}
L_{g}\mathbf{U}(\phi)=0, \quad L_{g}L_{f}\mathbf{U}(\phi)\neq0.
\end{align}
Hence, the manifold invariant condition \eqref{eqn-48} is changed to
\begin{align*}
\mathbf{U}(\phi)=0, \quad  D^{+}\mathbf{U}(\phi)=0.
\end{align*}
Furthermore, the sliding surface functional can be modified slightly as
\begin{equation*}
\overbar{\mathbf{U}}(\phi):=\mathbf{a}\mathbf{U}(\phi)+\mathbf{b}L_{f}\mathbf{U}(\phi),
\end{equation*}
where $\mathbf{a}, \mathbf{b}$ are nonzero constants. Hence, we have from \eqref{eqn-61} that $L_{g}\overbar{\mathbf{U}}(\phi)=\mathbf{b}L_{g}L_{f}\mathbf{U}(\phi)\neq0$, and the transversality condition is avoided. With the sliding surface functional $\overbar{\mathbf{U}}(\phi)$, the analysis in this section can be proceeded similarly to derive the feedback controller. Following a similar mechanism, the sliding surface functional can be defined for the higher-order case, and similar results can be obtained; see, e.g., \cite{Oguchi2006sliding}.

\section{Application to Multi-Agent Systems}
\label{sec-examples}

In this section, two numerical examples are presented to illustrate the results derived in previous sections. The computation is executed using MATLAB R2020a on a Dell laptop with a 16GB RAM and an Intel i7-10610U processor (1.80GHz).

\subsection{Connected Cruise Control}
\label{subsec-CCC}

Based on \citep{Hsia1990robot} using time-delay control techniques and \citep{Jin2017delay} involving driver reaction and communication delays, we consider the following continuous-time car-following model, where both leader and follower cars are modeled as a point-masses moving in the straight line:
\begin{align}
\label{eqn-62}
\dot{x}(t)=\begin{bmatrix}F(x(t-\tau))-F(x(t)) \\ a \\ x_{2}(t)-x_{1}(t)\end{bmatrix}+\begin{bmatrix} 1 \\ 0 \\ 0\end{bmatrix}u,
\end{align}
where $x=(x_{1}, x_{2}, x_{3})\in\mathbb{R}^{3}$, $x_{1}\in\mathbb{R}$ and $x_{2}\in\mathbb{R}$ are respectively the velocities of the follower and leader cars, $x_{3}\in\mathbb{R}$ is the distance between these two cars, and $u\in\mathbb{R}$ is the wheel force to be designed as the control input of the follower car. In \eqref{eqn-62}, $F\in\mathbb{R}$ is the total sum of the nonlinear dynamics of car, drag, frictions and disturbances; $\tau\in\mathbb{R}$ is the time delay (in s) due to delayed driver reactions or time-delay control techniques; and $a\in\mathbb{R}$ is the acceleration of the leader car (in m/s$^{2}$). Both $\tau$ and $a$ are bounded. Following the delay-free case \citep{Ames2016control}, $F(x)$ is set as $(a_{0}+a_{1}x_{1}+a_{2}x^{2}_{1})/M$ with $M>0$ being the mass of the follower car (in kg) and constants $a_{0}, a_{1}, a_{2}\in\mathbb{R}$ determined empirically.

In terms of connected cruise control, the follower car is expected to follow the leader car with a desired speed safely, which can be divided into two goals. The first goal is to achieve the desired velocity $v_{\textsf{d}}\in\mathbb{R}$. For this purpose, we introduce the candidate control Lyapunov function:
\begin{align}
\label{eqn-63}
V_{\rzmk}(x):=(x_{1}-v_{\textsf{d}})^{2}.
\end{align}
Hence, the first goal is transformed into the control design to guarantee the convergence of the Lyapunov function $V_{\rzmk}(x)$. From Definition \ref{def-3}, $V_{\rzmk}$ is a CLRF if the condition \eqref{eqn-8} holds. The second goal is to ensure the follower car moves safely, that is, the distance between the follower and leader cars are nonnegative. We emphasize that the second goal has priority over the first goal. To transform the second goal into a mathematical expression, we introduce the function $h_{\rzmk}(x)=x_{3}-x_{1}\mathbf{t}$ with the desired time headway $\mathbf{t}=1.8$s. That is, the safety objective is satisfied only when $h_{\rzmk}(x)\geq0$. With the function $h_{\rzmk}(x)$, the candidate control barrier function is given by
\begin{align}
\label{eqn-64}
B_{\rzmk}(x)=\ln(1+1/h_{\rzmk}),
\end{align}
which is an R-CBRF if the condition \eqref{eqn-18} holds. To achieve these two goals, we define the sliding surface function as
\begin{align*}
\mathbf{U}(x)&:=V_{\rzmk}(x)+\varrho B_{\rzmk}(x),
\end{align*}
where $\varrho>0$ is to ensure the condition \eqref{eqn-58}. If $x\rightarrow\partial\mathbb{C}$, then $h_{\rzmk}\rightarrow0$ and $B_{\rzmk}\rightarrow+\infty$, which thus shows the existence of $\varrho>0$. From Remark \ref{rmk-3}, the larger the value of $\varrho>0$ is, the lower the possibility of the reach of the state to $\partial\mathbb{C}$ is.

Let $M=1650, a_{0}=0.1, a_{1}=5, a_{2}=0.25, v_{\textsf{d}}=22, \varrho=50, \tau=0.2$ and $a\in[-2.5, 2.5]$. From the controller \eqref{eqn-50}, we consider the following two cases. The first case is $\mathbf{K}(x_{t})=5\sign(\mathbf{U}(x(t)))$, whereas the second case is $\mathbf{K}(x_{t})=2.2\mathbf{U}(x(t))-2\|\mathbf{U}(x_{t})\|^{2}/\mathbf{U}(x(t))$. For these two cases, the simulation results are shown in Figure \ref{fig-1}. From Figures (1b) and (1d), $h_{\rzmk}(x)>0$ is valid such that the safety objective is achieved. From Figures (1a) and (1c), the desired velocity of the follower car is achieved under the premise of the safety guarantee. Comparing with the delay-free case \citep{Ames2016control}, both stabilization and safety goals are achieved for the time-delay case, whereas the chattering phenomena exist in Figure \ref{fig-1} due to the dependence of the controller \eqref{eqn-50} on the sliding surface function. On the other hand, comparing Figures (1a) and (1c), the chattering phenomena are attenuated greatly in the second case (i.e., Figure (1c)) due to the controller \eqref{eqn-50} with \eqref{eqn-59}, where the time-delay trajectory is involved.

Since the goals are affected by the chattering phenomena, a practical way is to transform the first goal from a specified value to a bounded band, which is similar to the quasi-sliding mode band to limit the ultimate bound of the sliding surface function and to reduce the chattering phenomena \citep{Bartoszewicz2019discrete}. By taking this strategy, the desired velocity is given as the bands in Figures (2a) and (2c). In this case, Figure \ref{fig-2} shows the simulation results under different controllers. From Figures (2b) and (2d), we can see that the controller \eqref{eqn-50} with \eqref{eqn-59} leads the velocity of the follower car to reach a narrower band. In addition, for the controller \eqref{eqn-50} with \eqref{eqn-53}, the parameter $\mathrm{K}$ needs to be changed to 10 such that the velocity of the follower car to reach the band, which also shows the advantage of \eqref{eqn-59} over \eqref{eqn-53}.

\subsection{Master-Slave Synchronization}
\label{subsec-MSyn}

In this subsection, the master-slave synchronization problem is investigated, aiming to keep the master and slave robots synchronous while driving both to avoid the obstacle. This problem exists extensively in many industrial fields \citep{Rodriguez2004mutual}. Here, we consider the two-dimensional robot model, which is a simplification and transformation of the kinematic dynamics for robots \citep{Yalccin2001master}. The dynamics of the master robot is given as
\begin{align}
\label{eqn-65}
\dot{x}_{\master}(t)=A_{\master}x(t)+F_{\master}(x^{\dly}_{\master}, x^{\dly}_{\slave})+B_{\master}u_{\master},
\end{align}
where $x_{\master}\in\mathbb{R}^{2}$ is the position of the master robot, and $u_{\master}\in\mathbb{R}^{2}$ is the velocity of the master robot. $x^{\dly}_{\master}$ and $x^{\dly}_{\slave}$ are respectively time-delay states of the master and slave robots, and time delays come from the shared communication channel. Note that the time delay from the master robot to the slave robot is allowed to be different from the time delay from the slave robot to the master robot, and both time delays are bounded via the constants $\Delta_{\master}, \Delta_{\slave}\geq0$. $A_{\master}, B_{\master}\in\mathbb{R}^{2\times2}$ are constant matrices, and $F_{\master}: \mathcal{C}([-\Delta_{\master}, 0], \mathbb{R}^{2})\times\mathcal{C}([-\Delta_{\slave}, 0], \mathbb{R}^{2})\rightarrow\mathbb{R}^{2}$ is a continuously differentiable function showing the coupling between the master and slave robots due to the information communication \citep{Lee2006passive}. Similarly, the dynamics of the slave robot is
\begin{align}
\label{eqn-66}
\dot{x}_{\slave}(t)=A_{\slave}x_{\slave}(t)+F_{\slave}(x^{\dly}_{\master}, x^{\dly}_{\slave})+B_{\slave}u_{\slave},
\end{align}
where $x_{\slave}\in\mathbb{R}^{2}$ is the position of the slave robot, and $u_{\slave}\in\mathbb{R}^{2}$ is the velocity of the slave robot. $A_{\slave}, B_{\slave}\in\mathbb{R}^{2\times2}$ are constant matrices and $F_{\slave}: \mathcal{C}([-\Delta_{\master}, 0], \mathbb{R}^{2})\times\mathcal{C}([-\Delta_{\slave}, 0], \mathbb{R}^{2})\rightarrow\mathbb{R}^{2}$ is continuously differentiable.

\begin{figure*}
\centering
\begin{tabular}{cccc}
\includegraphics[width=0.48\columnwidth]{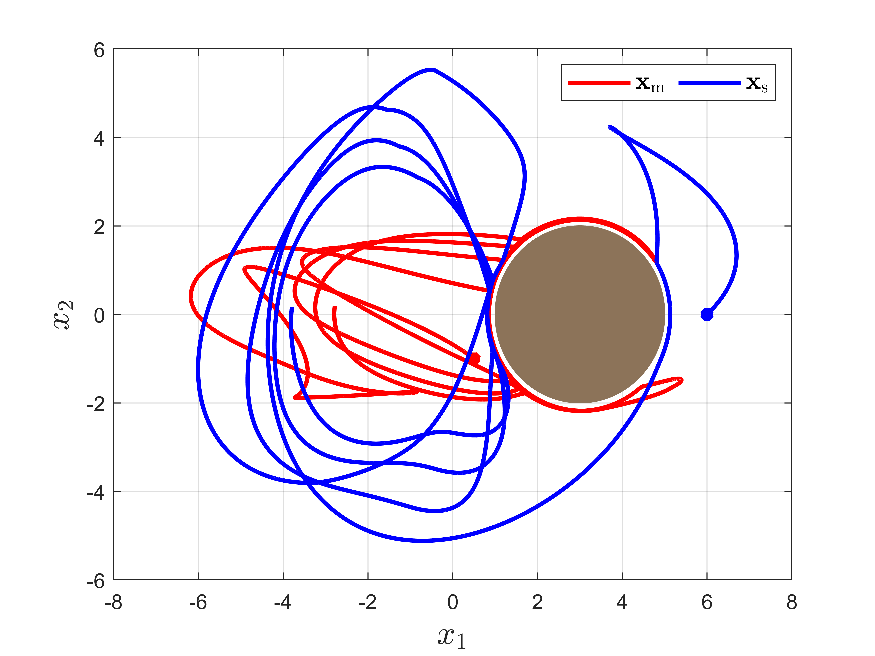}\vspace{-1pt} &
\includegraphics[width=0.48\columnwidth]{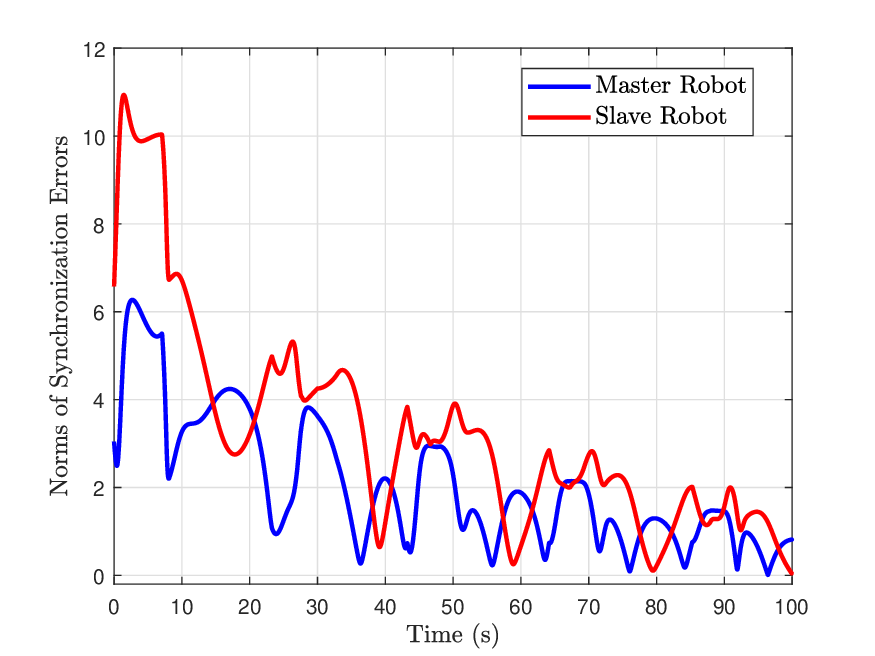}\vspace{-1pt} &
\includegraphics[width=0.48\columnwidth]{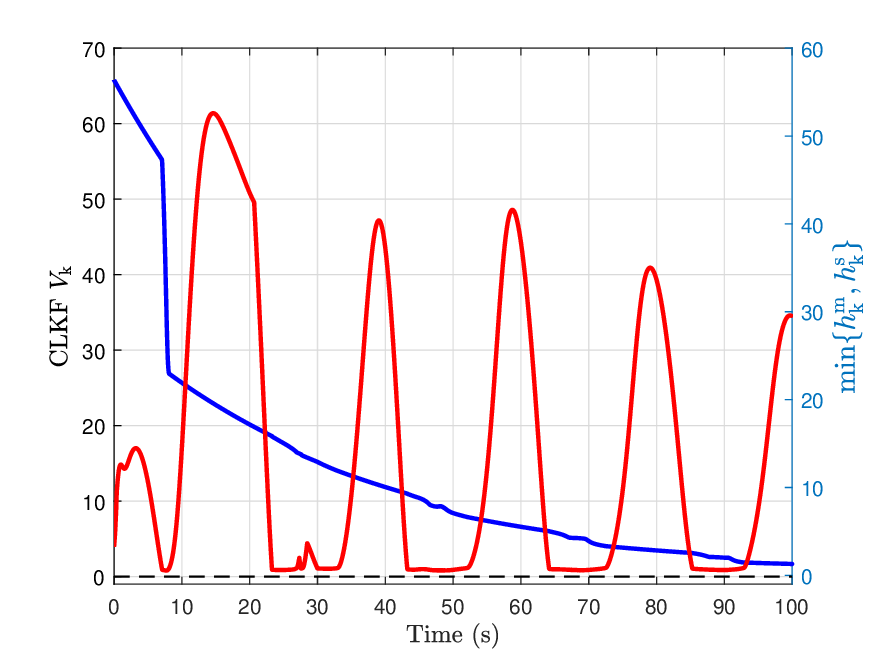}\vspace{-1pt}  &
\includegraphics[width=0.48\columnwidth]{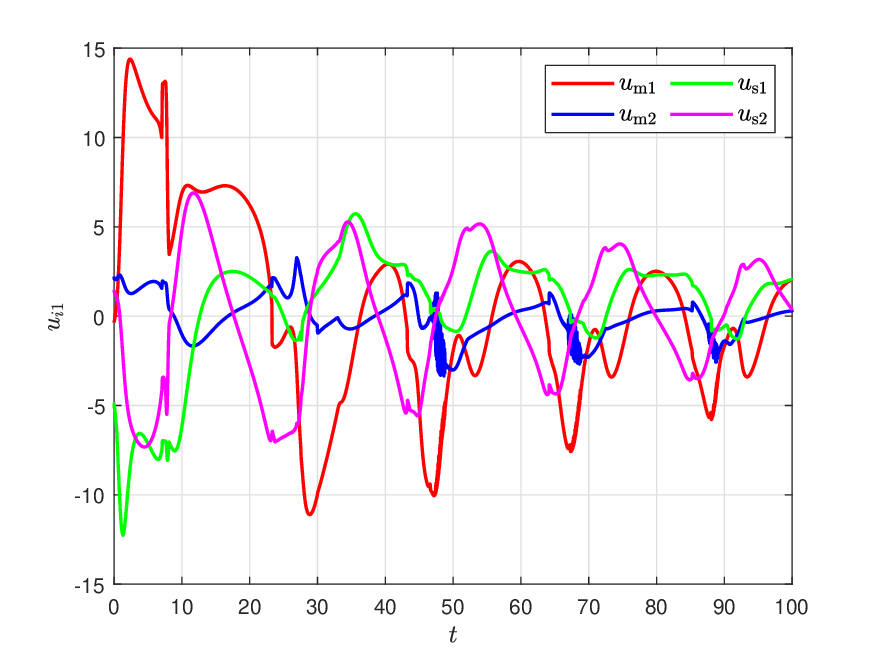} \vspace{-1pt} \\
\tiny{(a)} & \tiny{(b)} & \tiny{(c)} & \tiny{(d)}
\end{tabular}
\caption{\footnotesize Simulation of the master-slave synchronization problem under time-varying delays. (a) the state trajectories of the master and slave robots; (b) the norm evolution of the synchronization errors; (c) the evolution of the CLKF and CBKF; and (d) the control inputs for the master and slave robots.}
\label{fig-3}
\end{figure*}

To guarantee the synchronization objective, the desired reference trajectory for the master robot is given by $\mathbf{x}_{\master}\in\mathbb{R}^{2}$, and accordingly the reference trajectory for the slave robot is given by $\mathbf{x}_{\slave}\in\mathbb{R}^{2}$. To this end, the following Lyapunov-Krasovskii functional is introduced:
\begin{align}
\label{eqn-67}
V_{\ksvk}(x_{t})&:=e^{T}(t)P_{1}e(t)+\int^{t}_{t-\Delta_{\master}}e^{T}_{\master}(s)P_{2}e_{\master}(s)ds \nonumber \\
&\quad +\int^{t}_{t-\Delta_{\slave}}e^{T}_{\slave}(s)P_{3}e_{\slave}(s)ds,
\end{align}
where $P_{1}\in\mathbb{R}^{4\times4}, P_{2}, P_{3}\in\mathbb{R}^{2\times2}$ are positive-definite matrices, and $e(t):=(e_{\master}(t), e_{\slave}(t))$ is the synchronization error with $e_{\master}(t):=x_{\master}(t)-\mathbf{x}_{\master}(t)\in\mathbb{R}^{2}$ and $e_{\slave}(t):=x_{\slave}(t)-\mathbf{x}_{\slave}(t)\in\mathbb{R}^{2}$. From Definition \ref{def-8}, $V_{\ksvk}$ is a CLKF if the condition \eqref{eqn-31} holds.

To avoid the collision between the robots and the obstacle $\mathcal{O}$ (i.e., the brown region in Figure (3a)), we introduce the functional $h_{\ksvk}(\phi)=(\phi^2_{1}-3)^2+\phi^2_{2}-4$ with $\phi=(\phi_{1}, \phi_{2})$, and the safety objective is satisfied only when $h_{\ksvk}(\phi)\geq0$. With the function $h_{\ksvk}(\phi)$, the barrier-Krasovskii functional is given by
\begin{align}
\label{eqn-68}
\mathcal{B}_{\ksvk}(\phi)&:=-\ln(\exp(-\sgn(h^{\slave}_{\ksvk}(\phi_{\slave})-\varepsilon_{\slave})B^{\slave}_{\ksvk}(\phi_{\slave})) \nonumber \\
& \quad +\exp(-\sgn(h^{\master}_{\ksvk}(\phi_{\master})-\varepsilon_{\master})B^{\master}_{\ksvk}(\phi_{\master}))),
\end{align}
where $B^{\master}_{\ksvk}(\phi_{\master})=(1-\varepsilon_{\master}(h^{\master}_{\ksvk}(\phi_{\master}))^{-1})^{2}$, $B^{\slave}_{\ksvk}(\phi_{\slave})=(1-\varepsilon_{\slave}(h^{\slave}_{\ksvk}(\phi_{\slave}))^{-1})^{2}$, and $\varepsilon_{\master}, \varepsilon_{\slave}>0$ are constant to determine the warning regions for the master and slave robots. From \eqref{eqn-68}, the R-CBKF $\mathcal{B}_{\ksvk}$ is applied and has effects on the control design only when the robots move into the warning regions. $\mathcal{B}_{\ksvk}$ is smoothly separable since we can rewrite it as $\mathcal{B}_{\ksvk}(\phi(0))+(\mathcal{B}_{\ksvk}-\mathcal{B}_{\ksvk}(\phi(0)))$, and is an R-CBKF if item (ii) in Definition \ref{def-10} holds. To achieve the synchronization and obstacle avoidance simultaneously, we define the sliding surface functional as
\begin{align*}
\mathbf{U}(\phi)&:=V_{\ksvk}(\phi)+\mathcal{B}_{\ksvk}(\phi).
\end{align*}

Let $A_{\master}=\diag\{1, 0.3\}, A_{\slave}=\diag\{0.7, 0.6\}, B_{\master}=B_{\slave}=I, F_{\master}=\diag\{1, 0.3\}x^{\dly}_{\master}-0.5x^{\dly}_{\slave}$ and $F_{\slave}=\diag\{0.5, 0.8\}x^{\dly}_{\slave}-0.3x^{\dly}_{\master}$. The delays are bounded by $\Delta_{\master}=0.5$ and $\Delta_{\slave}=0.2$. The reference trajectories are $\mathbf{x}_{\master}:=(2\sin(0.3t),$ $2\cos(0.3t))$ and $\mathbf{x}_{\slave}:=\mathbf{x}_{\master}-(1, 0)$. That is, the slave robot is expected to keep a constant distance with the master robot. Let $P_{1}=I, P_{2}=P_{3}=I$ in \eqref{eqn-67}, and $\varepsilon_{\master}=\varepsilon_{\slave}=0.84$ in \eqref{eqn-68}, which implies that the warning region is $\{\phi\in\mathcal{C}([-\Delta_{\master}, 0], \mathbb{R})\times\mathcal{C}([-\Delta_{\slave}, 0], \mathbb{R}): (\phi^2_{1}-3)^2+\phi^2_{2}\leq2.2^2\}$. Using the controller \eqref{eqn-50} with $\mathbf{K}(\phi)=0.025\mathbf{U}(\phi)$, the simulation results are shown in Figure \ref{fig-3}. From Figures (3a) and (3c), $\min\{h^{\master}_{\ksvk}(\phi_{\master}), h^{\slave}_{\ksvk}(\phi_{\slave})\}>0$ and the obstacle avoidance is achieved. From Figures (3b) and (3c), the synchronization errors and the CLKF $V_{\ksvk}$ decrease with the increase of time, which implies the satisfaction of the synchronization objective. The control inputs for the master and slave robots are presented in Figure (3d). Finally, if we reduce the CBKF \eqref{eqn-68} to the CBRF $\mathcal{B}_{\ksvk}(\phi(0))$, similar simulation results can be obtained such that the synchronization and obstacle avoidance are achieved, which shows the potential combination between the CBRF and CLKF for time-delay systems.

\section{Conclusion}
\label{sec-conclusion}

This paper provided a novel framework for the control design of safety-critical systems with state delays. Following both Razumikhin and Krasovskii approaches, we proposed Razumikhin-type control functions and Krasovskii-type control functionals to investigate the stabilization and safety objectives. In particular, we established the small control property for the time-delay case, and further extended the classic Sontag's formula into the time-delay case. To achieve the stabilization and safety objectives simultaneously, the proposed control functions/functionals were combined via the sliding surface function, which ensures the existence and continuity of the derived controller. The derived results were demonstrated via the application to multi-agent systems. Future work will be devoted to distributed safety-critical control of cyber-physical systems with time delays. Based on control Lyapunov and barrier functions/functionals, the optimal control for time-delay systems is an interesting future topic.

%

\appendix
\section{Technical Lemma}
\label{sec-appendix}

\setcounter{equation}{0}
\renewcommand{\theequation}{A.\arabic{equation}}

\begin{lemma}
\label{lem-A1}
Given two continuously differentiable functions $\mathbf{X}, \mathbf{Y}: [-\Delta, \infty)\rightarrow\mathbb{R}$, and for all $t>0$,
\begin{align}
\label{eqn-A1}
D^{+}\mathbf{X}(t)&\leq-\alpha(\mathbf{X}(t))+\varepsilon_{1}\beta\left(\sup_{\theta\in[-\Delta, 0]}\mathbf{X}(t+\theta)\right), \\
\label{eqn-A2}
D^{+}\mathbf{Y}(t)&\geq-\alpha(\mathbf{Y}(t))+\varepsilon_{2}\beta\left(\sup_{\theta\in[-\Delta, 0]}\mathbf{Y}(t+\theta)\right),
\end{align}
where $\varepsilon_{1}, \varepsilon_{2}\in[0, 1]$ with $\varepsilon_{2}>\varepsilon_{1}$, and $\alpha, \beta: \mathbb{R}\rightarrow\mathbb{R}$ are continuously nondecreasing functions. If $\mathbf{X}(\theta)=\mathbf{Y}(\theta)$ for $\theta\in[-\Delta, 0]$, then $\mathbf{X}(t)\leq\mathbf{Y}(t)$ for all $t\in\mathbb{R}^{+}$.
\end{lemma}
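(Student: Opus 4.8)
The plan is a ``first crossing time'' argument. Assume, toward a contradiction, that $\mathbf{X}(t)>\mathbf{Y}(t)$ for some $t>0$, and set $t^{*}:=\inf\{t>0:\mathbf{X}(t)>\mathbf{Y}(t)\}$. Using that $\mathbf{X},\mathbf{Y}$ are continuous and coincide on $[-\Delta,0]$, I would first record that $t^{*}\in[0,\infty)$, that $\mathbf{X}(t)\le\mathbf{Y}(t)$ for every $t\in[-\Delta,t^{*}]$, that $\mathbf{X}(t^{*})=\mathbf{Y}(t^{*})$, and that there is a sequence $t_{n}\downarrow t^{*}$ with $t_{n}>t^{*}$ and $\mathbf{X}(t_{n})>\mathbf{Y}(t_{n})$. (When $t^{*}=0$, inequalities \eqref{eqn-A1}--\eqref{eqn-A2}, stated for $t>0$, are used at $t=0$ by passing to the limit, which is legitimate since $D^{+}\mathbf{X}=\dot{\mathbf{X}}$ and $D^{+}\mathbf{Y}=\dot{\mathbf{Y}}$ are continuous on $[-\Delta,\infty)$.)

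The core of the proof is a pointwise comparison of \eqref{eqn-A1} and \eqref{eqn-A2} at $t=t^{*}$. Since $\mathbf{X}\le\mathbf{Y}$ on $[t^{*}-\Delta,t^{*}]$ and $\beta$ is nondecreasing,
\[
\beta\Big(\sup_{\theta\in[-\Delta,0]}\mathbf{X}(t^{*}+\theta)\Big)\le\beta\Big(\sup_{\theta\in[-\Delta,0]}\mathbf{Y}(t^{*}+\theta)\Big)=:\beta(s^{*}).
\]
Substituting this bound into \eqref{eqn-A1}, cancelling the $\alpha$-terms because $\mathbf{X}(t^{*})=\mathbf{Y}(t^{*})$, and subtracting the result from \eqref{eqn-A2} gives
\[
\dot{\mathbf{Y}}(t^{*})-\dot{\mathbf{X}}(t^{*})\ \ge\ (\varepsilon_{2}-\varepsilon_{1})\,\beta(s^{*})\ \ge\ 0,
\]
the last inequality by $\varepsilon_{2}>\varepsilon_{1}$ and $\beta(s^{*})\ge 0$. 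On the other hand, $(\mathbf{Y}-\mathbf{X})(t^{*})=0$ together with $(\mathbf{Y}-\mathbf{X})(t_{n})<0$ along $t_{n}\downarrow t^{*}$ forces $\dot{\mathbf{Y}}(t^{*})-\dot{\mathbf{X}}(t^{*})\le 0$. Hence $(\varepsilon_{2}-\varepsilon_{1})\beta(s^{*})=0$, so that $\beta(s^{*})=0$.

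The contradiction is immediate as soon as $\beta(s^{*})>0$. This is the situation in every application of the lemma in the paper --- see \eqref{eqn-25}--\eqref{eqn-26} and \eqref{eqn-28}--\eqref{eqn-29}, where $\varepsilon_{1}=0$, $\varepsilon_{2}=1$, $\beta$ is of class $\mathcal{K}$, and the comparison trajectory stays strictly positive, so $s^{*}\ge\mathbf{Y}(t^{*})>0$ and $\beta(s^{*})>0$ --- as well as whenever \eqref{eqn-A2} holds with strict inequality. The one remaining case, which I regard as the main obstacle, is the degenerate one $\beta(s^{*})=0$ with $\dot{\mathbf{Y}}(t^{*})=\dot{\mathbf{X}}(t^{*})$: here no contradiction is visible at $t^{*}$ alone, and to settle it in full generality I would use a standard $\varepsilon$-perturbation, comparing $\mathbf{X}$ on a finite horizon $[0,T]$ against a nearby strict super-solution of \eqref{eqn-A2} obtained by adding a small positive forcing to the corresponding delay equation (with data $\mathbf{Y}(\cdot)$), applying the strict-inequality case just proved, and then letting the forcing tend to zero. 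Justifying this limit brings in existence and continuous-dependence results for retarded functional differential equations \citep{Hale1993introduction}, and is the only delicate technical point, since the right-hand side $\phi\mapsto-\alpha(\phi(0))+\varepsilon_{2}\beta(\sup_{\theta}\phi(\theta))$ is continuous but not locally Lipschitz, so one works with continuous dependence rather than uniqueness. For the purposes of the paper, however, the lemma is always invoked in the non-degenerate regime, and the plain first-crossing-time argument of the first two paragraphs already suffices there.
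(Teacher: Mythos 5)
Your argument is essentially the paper's own proof: the same first-crossing time, the same comparison of $D^{+}\mathbf{X}$ and $D^{+}\mathbf{Y}$ at the crossing instant, and the same resulting contradiction. The only place you go beyond it is in isolating the degenerate case where $\beta$ vanishes at the crossing --- the paper simply asserts the strict inequality $D^{+}\mathbf{X}(t^{\prime})<D^{+}\mathbf{Y}(t^{\prime})$, which tacitly presumes $(\varepsilon_{2}-\varepsilon_{1})\,\beta\bigl(\sup_{\theta\in[-\Delta,0]}\mathbf{Y}(t^{\prime}+\theta)\bigr)>0$ --- and your observation that every invocation of the lemma in the paper (via \eqref{eqn-25}--\eqref{eqn-26} and \eqref{eqn-28}--\eqref{eqn-29}) sits in this non-degenerate regime is correct, so your sketched perturbation for the remaining case is an addition to, not a gap relative to, the published proof.
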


\begin{proof}
We prove that $\mathbf{X}(t)\leq\mathbf{Y}(t)$ via \emph{reductio ad absurdum}. Obviously, $\mathbf{X}(t)\leq\mathbf{Y}(t)$ for $t=0$.

Suppose that $\mathbf{X}(t)\leq\mathbf{Y}(t)$ for all $t>0$. If not, then there exists at least one $t^{\ast}\in\mathbb{R}^{+}$ such that $\mathbf{X}(t^{\ast})>\mathbf{Y}(t^{\ast})$. Define $t^{\prime}:=\inf\{t\in\mathbb{R}^{+}: \mathbf{X}(t)>\mathbf{Y}(t)\}$. Therefore, we have
\begin{align}
\label{eqn-A3}
\mathbf{X}(t)&\leq\mathbf{Y}(t), \quad  t\in(0, t^{\prime});\\
\label{eqn-A4}
\mathbf{X}(t^{\prime})&=\mathbf{Y}(t^{\prime}); \\
\label{eqn-A5}
\mathbf{X}(t)&>\mathbf{Y}(t), \quad t\in(t^{\prime}, t^{\prime}+\Delta t),
\end{align}
where $\Delta t>0$ is arbitrarily small. From \eqref{eqn-A4}-\eqref{eqn-A5}, we have that for any $\tau\in(0, \Delta t)$,
\begin{align*}
\frac{\mathbf{X}(t^{\prime}+\tau)-\mathbf{X}(t^{\prime})}{\tau}&>\frac{\mathbf{Y}(t^{\prime}+\tau)-\mathbf{Y}(t^{\prime})}{\tau},
\end{align*}
which implies from the definition of the Dini derivative that $D^{+}\mathbf{X}(t^{\prime})\geq D^{+}\mathbf{Y}(t^{\prime})$. On the other hand, from \eqref{eqn-A3}-\eqref{eqn-A4}, $\sup_{\theta\in[-\Delta, 0]}\mathbf{X}(t^{\prime}+\theta)\leq\sup_{\theta\in[-\Delta, 0]}\mathbf{Y}(t^{\prime}+\theta)$ for all $\theta\in[-\Delta, 0]$, and further from \eqref{eqn-A1}-\eqref{eqn-A2}, $D^{+}\mathbf{X}(t^{\prime})<D^{+}\mathbf{Y}(t^{\prime})$, which results in a contradiction. Therefore, we conclude that $\mathbf{X}(t)\leq\mathbf{Y}(t)$ for all $t\in(0, \infty)$.
\hfill $\blacksquare$
\end{proof}


\bibliographystyle{elsarticle-num-names}

\end{document}